%File: anonymous-submission-latex-2025.tex
\documentclass[letterpaper]{article} % DO NOT CHANGE THIS
\usepackage[submission]{aaai25}  % DO NOT CHANGE THIS
\usepackage{times}  % DO NOT CHANGE THIS
\usepackage{helvet}  % DO NOT CHANGE THIS
\usepackage{courier}  % DO NOT CHANGE THIS
\usepackage[hyphens]{url}  % DO NOT CHANGE THIS
\usepackage{graphicx} % DO NOT CHANGE THIS
\urlstyle{rm} % DO NOT CHANGE THIS
  % DO NOT CHANGE THIS
\usepackage{natbib}  % DO NOT CHANGE THIS AND DO NOT ADD ANY OPTIONS TO IT
\usepackage{caption} % DO NOT CHANGE THIS AND DO NOT ADD ANY OPTIONS TO IT
\frenchspacing  % DO NOT CHANGE THIS
\setlength{\pdfpagewidth}{8.5in} % DO NOT CHANGE THIS
\setlength{\pdfpageheight}{11in} % DO NOT CHANGE THIS
%
% These are recommended to typeset algorithms but not required. See the subsubsection on algorithms. Remove them if you don't have algorithms in your paper.
\usepackage{algorithm}
\usepackage{algorithmic}

%
% These are are recommended to typeset listings but not required. See the subsubsection on listing. Remove this block if you don't have listings in your paper.
\usepackage{newfloat}
\usepackage{listings}
\DeclareCaptionStyle{ruled}{labelfont=normalfont,labelsep=colon,strut=off} % DO NOT CHANGE THIS
\lstset{%
	basicstyle={\footnotesize\ttfamily},% footnotesize acceptable for monospace
	numbers=left,numberstyle=\footnotesize,xleftmargin=2em,% show line numbers, remove this entire line if you don't want the numbers.
	aboveskip=0pt,belowskip=0pt,%
	showstringspaces=false,tabsize=2,breaklines=true}
\floatstyle{ruled}
\newfloat{listing}{tb}{lst}{}
\floatname{listing}{Listing}
%
% Keep the \pdfinfo as shown here. There's no need
% for you to add the /Title and /Author tags.
\pdfinfo{
/TemplateVersion (2025.1)
}

%================================================Packages=====================================
\usepackage[dvipsnames]{xcolor}
\usepackage{xspace}
\usepackage{fancyhdr}
\usepackage{tcolorbox}
\usepackage{graphicx}

\usepackage{amssymb}
\usepackage{amsthm}
\usepackage{amsmath,mathtools}
\DeclarePairedDelimiter{\ceil}{\lceil}{\rceil}
\usepackage{stmaryrd} %double square brackets
\usepackage{stackengine}
\usepackage[english,ruled,vlined,algo2e,linesnumbered]{algorithm2e}

\usepackage{multirow}
\usepackage{tabularray}
\usepackage{colortbl}

\usepackage{anyfontsize}

%================================================Theorems=====================================
\usepackage{thm-restate}
\usepackage{cleveref}

\newtheorem{theorem}{Theorem}[section]
\newtheorem{definition}[theorem]{Definition}

\newtheorem{lemma}[theorem]{Lemma}
\newtheorem{corollary}[theorem]{Corollary}
\newtheorem{observation}[theorem]{Observation}
\newtheorem*{observation*}{Observation}

\newtheorem{proposition}[theorem]{Proposition}

\newcommand{\bigparagraph}[1]{\vspace{0.4em}\noindent\textbf{#1}}

\newcommand{\ie}{i.\,e.,\xspace}

\newcommand{\jcal}{\ensuremath{\mathcal{J}}\xspace}
\newcommand{\scal}{\ensuremath{\mathcal{S}}\xspace}

\newcommand{\pcal}{\ensuremath{\mathcal{P}}\xspace}
\newcommand{\gcal}{\ensuremath{\mathcal{G}}\xspace}
\newcommand{\ecal}{\ensuremath{\mathcal{E}}\xspace}

\newcommand{\wcal}{\ensuremath{\mathcal{W}}\xspace}

\newcommand{\tuple}[1]{\ensuremath{\langle {#1} \rangle}\xspace}

%========================================== Temporal Graphs ========================
% \newcommand{\toutdegree}{\ensuremath{{\scriptsize \text{temp}}\delta^{out}}}
% \newcommand{\tindegree}{\ensuremath{{\scriptsize \text{temp}}\delta^{in}}}
% \newcommand{\tdegree}{\ensuremath{{\scriptsize \text{temp}}\delta}}
\newcommand{\toutdegree}{\ensuremath{\delta^{out}}}
\newcommand{\tindegree}{\ensuremath{\delta^{in}}}
\newcommand{\tdegree}{\ensuremath{\delta}}
\newcommand{\edgestream}{\ensuremath{E_S}\xspace}
\newcommand{\tdirect}[3]{\ensuremath{((#1,#2),#3)}}
\newcommand{\tundirect}[3]{\ensuremath{(\{#1,#2\},#3)}}

%========================================== Paper Specific ========================
\newcommand{\tmax}{\ensuremath{t_{\max}}\xspace}

\newcommand{\eecShort}{eec\xspace}

\newcommand{\SWEC}{\textsc{SwalkEEC}\xspace} % static walk decomposition

\newcommand{\TJEECwithK}[2]{$#1$-\textsc{#2T$\Phi$EEC}}
\newcommand{\TPEECwithK}[2]{$#1$-\textsc{#2TpathEEC}}
\newcommand{\TTEECwithK}[2]{$#1$-\textsc{#2TtrailEEC}}
\newcommand{\TWEECwithK}[2]{$#1$-\textsc{#2TwalkEEC}}
\newcommand{\TJEECnoK}[1]{\textsc{#1T$\Phi$EEC}}
\newcommand{\TPEECnoK}[1]{\textsc{#1TpathEEC}}
\newcommand{\TTEECnoK}[1]{\textsc{#1TtrailEEC}}
\newcommand{\TWEECnoK}[1]{\textsc{#1TwalkEEC}}

\newcommand{\setEndTs}[1]{\ensuremath{V_{\text{end}}(#1)}\xspace}

\newcommand{\TerminalStartSet}{\ensuremath{V_S}\xspace}
\newcommand{\TerminalEndSet}{\ensuremath{V_E}\xspace}
%-------------------- k=2 Paths

%\newcommand{\visited}{\ensuremath{visited}\xspace}

%-------------------- Walks

\DeclareMathOperator{\walks}{walks}

%-------------------- Approximation

%    \newcommand{\gcalvertex}[2]{\ensuremath{\stackMath\stackon[0.5ex]{#2}{^{#1}}}}

%============================================= Complexity ==========================

\newcommand{\NP}{\ensuremath{\mathtt{NP}}\xspace}

\newcommand{\Wtwo}{\ensuremath{\mathtt{W}[2]}\xspace}
\newcommand{\PP}{\ensuremath{\mathtt{P}}\xspace}

\newcommand{\FPT}{\ensuremath{\mathtt{FPT}}\xspace}
\newcommand{\XP}{\ensuremath{\mathtt{XP}}\xspace}

\newcommand{\bigoh}{\mathcal{O}}

\newcommand{\poly}{\ensuremath{\mathtt{poly}}\xspace}

\newcommand{\hittingset}{\textsc{HittingSet}\xspace}

\newcommand{\sat}[1]{$#1$-\textsc{SAT}}
\newcommand{\naesat}[1]{\textsc{NAE}$(#1)$\textsc{SAT}\xspace}
\newcommand{\xorsat}[1]{\textsc{XOR$(#1)$SAT}\xspace}

%================================================================
\usepackage{tabularx}
\newcommand{\problemtitle}[1]{\gdef\@problemtitle{#1}}% Store problem title
\newcommand{\probleminput}[1]{\gdef\@probleminput{#1}}% Store problem input
\newcommand{\problemquestion}[1]{\gdef\@problemquestion{#1}}% Store problem question
% \NewEnviron{problem}{
%   \problemtitle{}\probleminput{}\problemquestion{}% Default input is empty
%   \BODY% Parse input
%   \par\addvspace{.5\baselineskip}
%   \noindent
%   \begin{tabularx}{\textwidth}{@{\hspace{\parindent}} l X c}
%     \multicolumn{2}{@{\hspace{\parindent}}l}{\@problemtitle} \\% Title
%     \textbf{Input:} & \@probleminput \\% Input
%     \textbf{Question:} & \@problemquestion% Question
%   \end{tabularx}
%   \par\addvspace{.5\baselineskip}
% }

\usepackage[textsize=footnotesize]{todonotes}

%short-long-version switch
\newif\iflong
\newif\ifshort

% comment the below line out for short version
\longtrue

\iflong
\else
\shorttrue
\fi

\setcounter{secnumdepth}{2} %May be changed to 1 or 2 if section numbers are desired.

\title{How Many Lines to Paint the City: Exact Edge-Cover in Temporal Graphs}
\author{
    % Authors
    Argyrios Deligkas\textsuperscript{\rm 1},
    Michelle D\"oring\textsuperscript{\rm 2},
    Eduard Eiben\textsuperscript{\rm 1},
    Tiger-Lily Goldsmith\textsuperscript{\rm 1},
    George Skretas\textsuperscript{\rm 2},
    Georg Tennigkeit\textsuperscript{\rm 2}
}
\affiliations {
    % Affiliations
    \textsuperscript{\rm 1}Royal Holloway, University of London, Egham, United Kingdom\\
    \textsuperscript{\rm 2}Hasso Plattner Institute, University of Potsdam, Potsdam, Germany\\
    argyrios.deligkas@rhul.ac.uk,
    michelle.doering@hpi.de,
    eduard.eiben@rhul.ac.uk,
    tiger-lily.goldsmith@rhul.ac.uk,
    georgios.skretas@hpi.de,
    georg.tennigkeit@hpi.de
}
% \fi

\begin{document}

\maketitle

\begin{abstract}
Logistics and transportation networks require a large amount of resources to realise necessary connections between locations and minimizing these resources is a vital aspect of planning research. 
Since such networks have dynamic connections that are only available at specific times, intricate models are needed to portray them accurately. 
In this paper, we study the problem of minimizing the number of resources needed to realise a dynamic network, using the temporal graphs model.
In a temporal graph, edges appear at specific points in time.
Given a temporal graph and a natural number $k$, we ask whether we can cover every temporal edge exactly once using at most $k$ temporal journeys; in a temporal journey consecutive edges have to adhere to the order of time.
We conduct a thorough investigation of the complexity of the problem with respect to four dimensions: 
(a) whether the type of the temporal journey is a walk, a trail, or a path; 
(b) whether the chronological order of edges in the journey is strict or non-strict; 
(c) whether the temporal graph is directed or undirected; 
(d) whether the start and end points of each journey are given or not.
We almost completely resolve the complexity of all these problems and provide dichotomies for each one of them with respect to $k$.
\end{abstract}

\newcommand{\szTCref}[1]{{\scriptsize#1}}
\newcommand{\szTk}[1]{{\scriptsize#1}}
\definecolor{Shilo}{rgb}{0.905,0.705,0.709}
\definecolor{VanillaIce}{rgb}{0.956,0.835,0.843}
\definecolor{MexicanRed}{rgb}{0.678,0.149,0.172}
\definecolor{White}{rgb}{1,1,1}

\newif\ifktwo
\ktwofalse

\begin{table*}[t]
\centering
\footnotesize
\begin{tblr}{
    width=\textwidth,
    colspec = {@{}Q[50]Q[40]Q[55]Q[55]Q[60]Q[80]},
    row{even} = {Shilo},
    row{odd} = {VanillaIce},
    row{1,7,8} = {White},
    column{1,2} = {White},
    cell{2}{1} = {r=2}{},
    cell{4}{1} = {r=2}{},
    vline{4-6} = {2-5}{white},
    hline{1-2,4,6} = {-}{MexicanRed},
    hline{3,5} = {2-6}{white}    
}
\SetCell[c=2]{l} Complexity && \centering Paths  & \centering Trails & \centering Walks & \centering  Walks with fixed terminals\\
    directed edges    
    & strict   
        & \NP-c \szTCref{(\Cref{thm:SD_TPD})}
        & \NP-c \szTCref{(\Cref{thm:TTD_para})}   
        & \PP\ \szTCref{(\Cref{thm:SD_TWD_linear_time})}   
        & \PP\ \szTCref{(\Cref{thm:SD_TWD_linear_time})}   
    \\
    & non-strict 
        & \NP-c \szTCref{(\Cref{thm:SD_TPD})}
        & \NP-c \szTCref{(\Cref{thm:TTD_para})}
        & W[2]-h \szTCref{(\Cref{thm:DNS_TWD})}
        %\XP\ \szTCref{(Thm. \Cref{thm:UD_walks_XP})}
        & \PP\ \szTCref{(\Cref{thm:DNS_TWD_fixed_terminals})}
    \\
    undirected edges     
    & strict
        & \NP-c \szTCref{(\Cref{thm:SD_TPD})}
        & \NP-c \szTCref{(\Cref{thm:TTD_para})}
        & \NP-c \szTCref{(\Cref{thm:NSUD_TWD})}
         %\XP\ \szTCref{(Thm. \Cref{thm:UD_walks_XP})}
        & $\PP\footnotemark[1]$ \szTCref{(\Cref{cor:UD_S_fxdTerm_walks})}
    \\
    & non-strict 
        & \NP-c \szTCref{(\Cref{thm:SD_TPD})}
        & \NP-c \szTCref{(\Cref{thm:TTD_para})}
        & \NP-c \szTCref{(\Cref{thm:NSUD_TWD})}
        %\XP\ \szTCref{(Thm. \Cref{thm:UD_walks_XP})}
        & \PP\ \szTCref{(\Cref{thm:UD_NS_TWD_fixed_terminals})}
\end{tblr}
\caption{Overview of our results on general graphs for $k\geq3$ with the corresponding references.}
\label{tab:overview_table}
\end{table*}

\section{Introduction}
\label{sec:intro}

Networks are fundamentally designed to enable the transportation of various ``entities'', ranging from people and physical goods to data and information. Achieving efficient transportation requires the deployment of vehicles and connections, which most often are of limited quantity. Consequently, optimizing the use of these resources presents a natural challenge in network planning and design.
%====== Longer
% Take, for example, a train company that decided on optimal train connections to meet public demand. Next, the company has to decide on the number of trains to be deployed on a daily schedule to realize these connections effectively in order to maximize profit.
%===shorter
% For example, a train company that has decided on the optimal train connections to maximize profit and meet public demand must then decide how many trains to deploy daily to realise these connections.
Take, for example, a train company that decided on optimal train connections to meet public demand. Next, the company has to decide on the number of trains to be deployed on a daily schedule to realize these connections effectively in order to maximize profit. This highlights the critical issue of how to allocate limited resources efficiently to meet operational goals while balancing cost and demand.

% This issue is even more apparent in networks that are not operated by a single centralized authority.
This issue is even more apparent in networks that are operated by multiple companies rather than a single authority.
%
% Take the following logistics example in which several companies participate. Each company has chosen the connections between the depots of the network and their respective timings, depending on multiple criteria and restrictions, such as public service needs, profits and regulations. The main objective of each company is clear and usually, it is optimized: maintain the availability of the connections it owns at specific times.
%==================== suggestion:
In such scenarios, each company independently sets its own desired connection times between the depots based on various and individual criteria like public service needs, profits, regulations and resource restrictions. Their main objective is to optimally ensure their own connections at the designated time.
However, such decentralized planning often leads to suboptimal use of resources. This issue has been identified by the Supply Chain Management and Logistics industries, which have observed that ``{\em while logistics improvements may be difficult for each individual customer, they can be realized with collaborative operation among multiple customers}''\footnote{\url{https://www.logisteed.com/en/3pl/joint/}}.
% For this reason, several approaches attempted to realize improvement by leveraging shared infrastructure.

To illustrate, consider a scenario where company X wants to operate a line between points A and B at 08:00, and from point B to A at 20:00, where each connection takes one hour. 
Company Y wants to operate a line between points B and C at 09:00, and from point C to B at 17:00, also with each connection taking one hour. 
Independently, both companies require one vehicle each to operate their respective lines.
However, if they collaborated, they could share a vehicle for both lines with a single {\em journey}: from A to B at 08:00, from B to C at 09:00, from C to B at 17:00, and from B to A at 20:00.
\begin{figure}[h]
    \centering
    \includegraphics[width=0.6\linewidth]{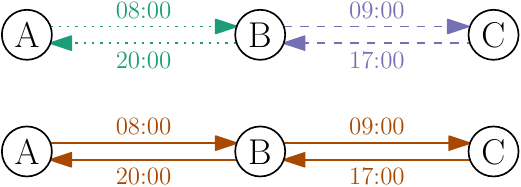}
    % \caption{Caption}
    \label{fig:intro-example}
\end{figure}

It is apparent that %the optimization of resources needed to realise networks is important in different fields and each network has its multiple intricacies that make it different.
optimizing resources on networks is important across various fields, with each scenario presenting unique complexities.
This poses a dilemma: should we create a specific model for each different scenario and study it independently, or develop a generalized, possibly simplified model to study many scenarios at once? 
Applied research typically chooses the former, tailoring models to individual applications, while theoretical sciences favor the latter, aiming for generalized models with broader applicability.
% Applied research chooses the former and constructs a tailored model for each specific application. In contrast, theoretical sciences choose the latter and develops generalized models applicable to multiple applications.
%Applied research focuses on constructing tailored models for the specific application, while theoretical sciences develop generalized models that can be applied across multiple applications.

In this paper, we follow route 2 and provide a general model that is both simple and realistic enough to represent multiple scenarios. 
We want to study the problem of finding the minimum number of resources needed to realise the connections of a network.
%===== suggest
To account for the dynamic nature of the networks with time-specific connections, 
we use the prominent framework of the \emph{temporal graphs model} \cite{kempe02-temporal, Michail2015}.
A temporal graph consists of a static set of vertices and a dynamic set of edges, called temporal edges, with a labelling function indicating at which time steps each edge is available.
A trip through a temporal graph, known as a {\em temporal journey}, has to respect these time constraints. 
%These journeys seem to be an excellent candidate to provide a formal mathematical treatment to the problem. Our aim is to utilize them to full capacity. 

%\paragraph{\bf Our Contribution.}
\subsection*{Our Contribution}
\footnotetext{For temporal graphs with unique labels at each vertex.}
Our contribution is twofold: (a) formalize the planning problem highlighted in the example above as an optimization problem on temporal graphs; (b) provide the landscape of its computational complexity for a rich variety of settings. 
% Put simply, we study the following problem. 
\subsubsection*{The Exact Edge-Cover.}
%\paragraph{The Exact Edge-Cover.}
We study the following problem. %
\begin{quote}
{\em ``Given a temporal graph and a natural number $k$, can we cover every temporal edge of the graph, exactly once, with at most $k$ temporal journeys?''} 
\end{quote}
\noindent
Motivated by the variety of real-life scenarios that our model can capture, we perform a thorough study of our problem with respect to four dimensions.
% reflecting the different properties.
% that capture properties of different networks.
First, the type of temporal journey, which can be a {\em temporal path}, {\em trail}, or {\em walk}. This distinction is important as, for example, trains usually should not revisit a station in a single trip, which can only be modeled by a temporal path.
Second, the nature of the temporal journey, which can be {\em strict}, or {\em non-strict}. In a strict journey, at most one temporal edge can be used per time step. %, while non-strict journeys can use multiple edges simultaneously. 
For example, in online networks, information can traverse several connections immediately due to its inherent transmission speed.
Third, the type of the graph, which can be either {\em directed} or {\em undirected}. Physical networks are usually directed, whereas information networks allow for bidirected connections.
Fourth, having {\em fixed} or {\em unfixed terminals} for the journeys, i.e., whether the starting and ending points of the journeys are given or not.
%the type of vertices, which can be start or end terminals, or neither. In particular, we are given the start and end point of each walk as part of the input. 
This captures that depots might be available only at specific locations in the network.
A summary of our technical results is depicted at \Cref{tab:overview_table}. 

\subsubsection{Our Technical Results.}
% We begin our study by assuming that the start- and end-terminals are not known. First, we provide a linear time algorithm for computing an exact edge-cover \todo{this sentence implies that there is always a solution} (eec) for $k=1$. 
% We then consider $k=2$, where we can solve the problem for paths, walks and all but non-strict trails using a \sat{2} solver. Interestingly, for non-strict trails, the problem of finding an eec of size 2 appears to be \NP-hard.
The first part of our technical results focuses on the case where the terminals are not fixed. As a warm-up, we observe that when $k=1$, the problem resembles Eulerian journeys, and provide a linear time algorithm for all our settings.
We extend our positive results to $k=2$, where we show that {\em almost} all versions of our problem can be reduced to a \sat{2} problem, enabling a polynomial-time solution. However, this approach fails for non-strict trails, in which the problem is \NP-complete.
%\todo{why do we say \NP-h and not \NP-c?}

%============================== Version 1
% In \Cref{sec:pathsAndtrails}, we study the problem of finding eecs for paths and trails of size $k=3$ and observe that both variants behave identically. Both are not only hard to compute, but we also show that there cannot be an $\alpha$-approximation for any constant $\alpha$ unless $\PP=\NP$.  

% In \Cref{sec:walks}, we turn to studying the problem for walks.
% In \Cref{subsec:walk_poly}, we provide two polynomial-time approaches for strict walks on directed graphs. 
% One can either stream the edges in temporal order, storing all possible endpoints of walks, or finding an exact path cover of an equivalent directed acyclic static graph. 
% Next in \Cref{subsec:walk_NP}, we show that the other three walk variants cannot be solved efficiently.
%================================ Version 2
For $k \geq 3$, %the situation inverses and 
positive results become scarce. 
In \Cref{sec:pathsAndtrails}, we study paths and trails, proving intractability for all versions of the problem for any constant $k$. Furthermore, we show that no version can admit an $\alpha$-approximation for any constant $\alpha$ unless $\PP=\NP$.  
On the other hand, walks (\Cref{sec:walks}) offer some positive results. We observe that when the number of walks $k$ is constant, all versions of our problem can be solved efficiently.
For arbitrary $k$, we additionally derive an efficient algorithm for the case of strict walks on directed graphs. Unfortunately, we complement the above by showing that every other version is \NP-complete.
%================================

%=============================== Version 1
% We continue by studying the cases where the number of walks starting and ending on each vertex is given. Our results for paths and trails still hold since the terminals in the reductions are clear.
% For walks, additional knowledge of the terminals makes a difference. In \Cref{sec:fxdTerminals}, we provide our main technical result and show that non-strict walks on directed and undirected graphs admit a polynomial time algorithm. We do so by solving the static problem on a gadget graph that uses both directed and undirected edges. 
%============================== Version 2

The second part of our technical results (\Cref{sec:fxdTerminals}) focuses on fixed terminals, where the start and end points of the journeys are given. 
We begin by observing for paths and trails our hardness reductions hold even with fixed terminals.
For walks, the situation is different. 
We provide a dynamic program for non-strict walks on directed graphs.
On undirected graphs, the problem resembles a circulation flow.
To capture the temporal aspect of the flow, we require a complex construction, making this the most technically involved result of the paper. The resulting algorithm solves the problem for non-strict walks.
For strict walks, the algorithm works only when the labels at each vertex are distinct. Additionally, we provide a dynamic program for graphs where there are exactly $k$ or 0 edges per time step.
%===============================
% \iflong For strict walks in undirected graphs we only have a partial answer. When we assume a proper labeling, which is common practice in the literature \textcolor{red}{cite}, they overlap with the non-strict walks. Thus, they are polynomial-time solvable.
% For the case where there are exactly $k$ edges per time-step, we also provide a polynomial time algorithm. However, for the cases where there are more than one and less than $k$ edges in a time step, the question of the existence of an efficient algorithm is still open. 
% \fi
% In summary, we classify the computational complexity of temporal exact edge-cover for paths, trails and walks.
% Refer to \Cref{tab:overview_table} for an overview of our results.

\subsection*{Related Work}
The minimum number of resources needed to realize a network has been studied in a lot of different contexts.
For example, the \textit{minimum fleet size problem} asks for the minimum number of vehicles needed to cover every trip in a network.
Unlike our model, this problem is studied only on directed networks that can be modeled to contain no cycles and aims to cover every node rather than every edge \cite{vazifeh2018addressing}.
% The main differences from our model are that the fleet size problem is only studied on directed networks that can be modeled to contain no cycles, and the goal is to cover every node of the graph, instead of every edge \cite{vazifeh2018addressing}.
% In particular, the algorithm presented in \textcolor{red}{cite} shares some similarities with our algorithm from \Cref{thm:SD_TWD_linear_time}.
Another example is the \textit{rolling stock problem}, which asks for the minimum number of wagons needed to meet the passenger demand of a train network \cite{alfieri2006efficient}.
Here, the train lines are given together with a capacity demand on each connection that needs to be met.
The goal is to optimize the number of cars needed by leveraging car exchanges between changes on stations.
In contrast, our model is given individual connections (each with unit capacity) and aims to find the optimal train lines. 

The study of edge covering problems on static graphs was initiated by \citet{ErdösTheRepresentation1966} and \citet{LovaszOnCovering1968}.
This led to an extensive line of work studying the minimum number of gadgets needed to cover the edges of a graph. The usual gadgets considered are cliques, cycles and paths. As far as paths are concerned, most of the research focuses on proving existential upper bounds on the number of gadgets. 
\citet{ChungOnTheCoverings1980} conjectured that a graph can be covered with at most $\ceil{\frac{n}{2}}$ paths, which was settled by \citet{FanSubgraph2002}. For an extensive overview of the covering problem on static graphs, we refer to this survey \cite{SchwartzAnOverview2022}.

In the case of temporal graphs, many pieces of work have studied reachability problems adjacent to ours \cite{zschoche:LIPIcs.STACS.2023.55,BiloSparse2022,BiloBlackout2022,DBLP:journals/jcss/EnrightMS21,DBLP:journals/ans/BentertHNN20,DBLP:journals/ijfcs/XuanFJ03,DBLP:conf/atal/DeligkasEGS23,DBLP:conf/ijcai/DeligkasES23}.
The most prominent one is the temporal exploration problem, where the goal is to compute an earliest-arrival walk that visits every vertex of the graph.
It was introduced by \citet{ErlebachOnTemporal2021} and has been extensively studied since \cite{ArrighiKernelizing2023,ErlebachPrameterised2023,AdamsonFaster2022}.

%most relevant..?
Recently, \citet{BumpusEdgeExploration2023}, and \citet{MarinoEulerian2023} introduced an edge variant of the problem, where the goal is to decide whether the static edges of a temporal graph can be covered by an Eulerian circuit or trail.
The main difference between their work and ours is that we want to cover every temporal edge, whereas they focus on covering every static edge.
Additionally, they consider covering the edges with one trail, while we study an arbitrary number and different types of journeys. 
We highlight that the two models are orthogonal to one another and no result can be transferred between them.

% \todo{why do we need this paragraph? It's not even marginally related to our model}
%Vertex cover on temporal graphs has been studied as well.
% Multiple variants have been suggested such as the two that were introduced by \citet{AkridaTemporalVertex2020}.
% In the first one, the authors want to cover every static edge of the temporal graph at least once, whereas in the second one, they want to cover every static edge in every $\Delta$-time step window.
% The same problem was also studied in \cite{HammTheComplexity2022}, where the authors investigate the same variants in small degree graphs.
% Multiple variants have been suggested, including those by Akrida et al. \cite{AkridaTemporalVertex2020}, where one aims to cover every static edge of the temporal graph at least once, and the other aims to cover every static edge in each $\Delta$-time step window. The same variants were also studied in \cite{HammTheComplexity2022} for small degree graphs.

%% \eduard{https://drops.dagstuhl.de/storage/00lipics/lipics-vol058-mfcs2016/LIPIcs.MFCS.2016.62/LIPIcs.MFCS.2016.62.pdf ... might be relevant to mention in related work as a different extension of Eulerian paths/cycles
%
%% https://link.springer.com/article/10.1007/s00453-022-01021-y - also relevant for intro}

\section{Preliminaries}
    \label{sec:prelims}
    %Argy says to not use this sentence for ICALP
    %We use standard definitions and notation of graph theory, refer for example to \cite{}.\ee{cite todo}
    %We refer to the handbook by Diestel~\shortcite{Diestel12} for standard graph terminology. 
    For $n\in \mathbb{N}$, we denote $[n]:= \{1,2, \ldots, n\}$.
    A {\em temporal graph} $\gcal := \tuple{G, \ecal}$ is defined by an {\em underlying} static graph  $G=(V,E)$ and a sequence of edge-sets $\ecal = (E_1, E_2, \ldots, E_{\tmax})$ with $E = E_1\cup E_2\cup \dots \cup E_{t_{max}}$.
    We refer to $E$ as the set of \emph{static} edges of \gcal and to $\ecal$ as the set of \emph{temporal} edges. %\todo{\ecal is problematic. we can labels in the edges, not a sequence of edge sets}
    % It holds that $E = E_1\cup E_2\cup \dots \cup E_{t_{max}}$.
    The {\em lifetime} of $\gcal$ is \tmax.
    An edge $e \in E$ \emph{has label} $i$ if $e \in E_i$.
    We denote directed edges with \tdirect{u}{v}{t} and undirected edges with \tundirect{u}{v}{t}.
    The {\em edgestream representation} $\edgestream$ of \gcal is an ordered list of all temporal edges, sorted by increasing time label. %If not given as part of the input, this can be computed in $\ocal(\lvert\ecal\rvert \log(\lvert\ecal\rvert))$.
    By a {\em snapshot} $G_t$ of $\gcal$ we refer to the subgraph $(V, E_t)$ containing only the edges available at time $t \in [\tmax]$.
    % A temporal graph is called \emph{proper} when $\lvert E_t\rvert\leq1$ for all $t\in{\tmax]$.
%    
    % In addition, an edge has $k$ {\em labels}, if it appears in $k$ edge-sets. 
    The temporal degree $\tdegree(v)$ of a vertex $v$ denotes the number of temporal edges adjacent to $v$.
    On directed graphs, we further define $\tindegree(v)$ and $\toutdegree(v)$ as the number of incoming and outgoing temporal edges of $v$, respectively.

    % \section{Temporal Journeys and Exact Edge-Cover}
    \paragraph{Temporal Journeys.}
    A {\em temporal journey} $J$ in $\gcal=\tuple{(V,E), \ecal}$
    is a sequence of adjacent temporal edges that respect time, connecting a {\em start-terminal} with an {\em end-terminal}. 
    Formally, if $J = ((e_1,t_1),\dots,(e_{\ell},t_{\ell}))$ is a temporal journey, then for every 
    $i \in [\ell]$ holds that: $e_i \in E_{t_i}$; $e_i$ is adjacent to $e_{i+1}$; and $t_i \leq t_{i+1}$.
    If we require that $t_{i} < t_{i+1}$ for all $i\in[\ell]$, we call the journey {\em strict}, otherwise we call it {\em non-strict}.
    For a temporal journey $J$, we denote by $J[t_1,t_2]$ the {\em sub-journey} between $t_1$ and $t_2$, i.e., the sub-sequence of temporal edges $(e_i,t_i)$ of $J$, such that $t_i\in [t_1,t_2]$.
    If the temporal graph is directed, then we have {\em directed} journeys that have to respect the direction of the edges, otherwise, we have {\em undirected} journeys. We focus on three different types of temporal journeys:
    \begin{itemize}
        \item {\em temporal walks}, which do not impose any extra constraints;
        \item {\em temporal trails}, where no static edge is visited more than once by the journey;
        \item {\em temporal paths}, where no vertex is visited more than once by the journey.
    \end{itemize}
    % For a temporal journey $J$, we denote by $J[t_1,t_2]$ the {\em sub-journey} between $t_1$ and $t_2$, i.e., the sub-sequence of temporal edges $(e_i,t_i)$ of $J$, such that $t_i\in [t_1,t_2]$.
    % Observe, if $J$ is a walk, trail, or path so is any sub-journey of $J$.
    % The vertices any journey starts or ends at, are called terminals.

    \paragraph{Parameterized complexity.}
    We refer to the standard books for a basic overview of parameterized complexity theory~\cite{CyganFKLMPPS15,DowneyFellows13}. At a high level, parameterized complexity studies the complexity of a problem with respect to its input size, $n$,  and the size of a parameter $k$. A problem is {\em fixed parameter tractable} by $k$, if it can be solved in time $f(k)\cdot \poly(n)$, where $f$ is a computable function. 
    \iflong A less favorable, but still positive, outcome is an $\XP{}$ \emph{algorithm}, which has running-time $\bigoh(n^{f(k)})$; problems admitting such algorithms belong to the class $\XP$. \fi
    Showing that a problem is $\Wtwo$-hard rules out the existence of a fixed-parameter algorithm under the well-established assumption that $\Wtwo\neq \FPT$.
    
\section{Exact Edge-Cover}
    % We study the problem of finding {\em temporal exact edge-covers} of temporal graphs using temporal journeys.
    Exact edge-covers by paths in static graphs have been studied by \citet{donald1980upper, bolyaicovering, pyber1996covering}, where this structure is called an \textit{edge-disjoint path cover}.
    \iflong
    We define the generalisation of this problem to different types of journeys.
    \begin{definition}[Static Exact Edge-Cover by Journeys]
        Let $G=(V,E)$ be a static graph and $\Phi\in\{\text{path, walk, trail}\}$.
        A collection $\jcal=\{J_1,\dots,J_k\}$ of journeys of type $\Phi$, is called an {\em exact edge-cover (\eecShort) by $\Phi$} for $G$ if for all edges $e\in E$ there is exactly one $J_i\in \jcal$ with $e\in J_i$.
    \end{definition}
    The temporal analogue is then defined accordingly.
    \begin{definition}[Temporal Exact Edge-Cover by Temporal Journeys]
        Let $\gcal=\tuple{(V,E), \ecal}$ be a temporal graph and $\Phi\in\{\text{path, walk, trail}\}$.
        A collection $\jcal=\{J_1,\dots,J_k\}$ of journeys of type $\Phi$, is called a {\em temporal exact edge-cover by $\Phi$} for \gcal if for all temporal edges $e\in \ecal$ there is exactly one $J_i\in \jcal$ with $e\in J_i$.
    \end{definition}
    \fi
    \ifshort
    % We generalize their definition to different journeys 
    % and define the temporal analogue
    % to different types of journeys and define their temporal analogue.
    We define the temporal generalization on different types of journeys.
    \begin{definition}[Temporal Exact Edge-Cover by Temporal Journeys]
        Let $\gcal=\tuple{(V,E), \ecal}$ be a temporal graph and $\Phi\in\{\text{path, walk, trail}\}$.
        A collection $\jcal=\{J_1,\dots,J_k\}$ of journeys of type $\Phi$, is called a {\em temporal exact edge-cover (eec) by $\Phi$} for \gcal if for all temporal edges $e\in \ecal$ there is exactly one $J_i\in \jcal$ with $e\in J_i$.
    \end{definition}
    % The static analogue of this problem for static paths has been discussed in \cite{bolyaicovering, pyber1996covering, donald1980upper} as an \textit{edge-disjoint path cover}.
    \fi

    We study the problem of finding temporal exact edge-covers of size $k$ by the (non-)strict version of each journey type in (un)directed temporal graphs.
    For $\Phi\in\{\textsc{path}, \textsc{trail}, \textsc{walk}\}$, we define \TJEECnoK{(N)S-} as follows.
    %Let journey$\in$\{path, trail, walk\} and corresponding $J\in\{P,T,W\}$.
\noindent
    \begin{center} % VanillaIce -- Shilo -- MexicanRed
    \begin{tcolorbox}[colback=VanillaIce,colframe=MexicanRed!75!white, title=~\textsc{(Non-) Strict Temporal $\Phi$ Exact Edge-Cover},left=-1.5mm,top=0mm,bottom=0mm,right=0mm,boxsep=1mm]
    % \begin{tcolorbox}[colback=gray!15!white,colframe=gray!80!white, title=~\textsc{(Non-) Strict Temporal $\Phi$ Exact Edge-Cover},left=-1mm,top=0mm,bottom=0mm,right=0mm,boxsep=1mm]
        \begin{tabular}{p{.15\textwidth}p{.8\textwidth}}
            \textbf{Input:} & \parbox[t]{.74\textwidth}{A temporal graph $\gcal$ and $k\in \mathbb{N}$.}\\
            % \addlinespace
            % \addlinespace
            \textbf{Question:} & \parbox[t]{.74\textwidth}{Does there exist a temporal exact edge-cover of (non-)strict $\Phi$ for \gcal of size $k$?}\\
        \end{tabular}
      \end{tcolorbox}
      \end{center} 
    Checking whether a given collection of journeys is an exact edge-cover can be done in polynomial time by verifying that each temporal edge appears exactly once and that all journeys are of type $\Phi$. Thus, \TJEECnoK{(N)S-} is in the class \NP.
    
    % %=======Indented=======
    %     \par\addvspace{.5\baselineskip}
    %     \noindent
    %         \begin{tabularx}{\columnwidth}{@{\hspace{\parindent}} l X c}
    %         \multicolumn{2}{@{\hspace{\parindent}}l}{{\small\textsc{(Non-) Strict Temporal $\Phi$ Exact Edge-Cover}}}\\
    %         \textbf{Input:} & A temporal graph $\gcal$ and $k\in \mathbb{N}$. \\% Input
    %         \textbf{Question:} & Does there exist a temporal exact edge-cover of (non-) strict $\Phi$ for \gcal of size $k$?  % Question
    %         \end{tabularx}
    %     \par\addvspace{.5\baselineskip}
    %====Not Indented====
        % \par\addvspace{.5\baselineskip}
        % \noindent
        %     \begin{tabularx}{\columnwidth}{@{} l X c}
        %     \multicolumn{2}{@{}l}{\fontsize{9.5}{11.4}\selectfont\textsc{(Non-)Strict Temporal $\Phi$ Exact Edge-Cover}}\\
        %     % \multicolumn{2}{@{}l}{\fontsize{9.5}{11.4}\selectfont(\TJEECnoK{(N)S-})} \\% Title
        %     \textbf{Input:} & A temporal graph $\gcal$ and $k\in \mathbb{N}$. \\% Input
        %     \textbf{Question:} & Does there exist a temporal exact edge-cover of (non-)strict $\Phi$ for \gcal of size $k$?  % Question
        %     \end{tabularx}
        % \par\addvspace{.5\baselineskip}
    %========
    % In the minimization version of the problem, denoted by \TJEECnoK{(N)S-}, we ask for an \eecShort of minimum size. %maybe not needed -- especially in the short version?
    \noindent
    \iflong
    We refer to the static analogue of finding an exact edge-cover of $\Phi$ in a static graph $G$ as \textsc{S$\Phi$EEC}. Note that \SWEC is a natural generalisation of finding an Euler walk in a static graph, which corresponds to an eec of size $k=1$ and is well known to be computable in linear time \cite{fleischner1990eulerian}.
    \fi
    %As a warm up, we show that \TJEECnoK{(N)S-} for $k=1$ %a temporal eec of {size 1}  can be computed efficiently for all twelve problem variants (path, trail, walk -- (non-)strict -- (un)directed).
    % \textcolor{blue}{Fluff: as a warm-up we consider $k=1$ and $k=2$}
    % \ifshort
    \subsection{One Journey -- Extending Eulerian Journeys} \label{subsec:one-journey}
    % We begin by showing that for $k=1$, the problem of finding an eec of size $k$ can be solved in polynomial time with a similar strategy for every journey type that mainly varies depending on whether the journeys are strict or non-strict.
    As a warm-up, we show that when $k=1$, all versions of our problem can be solved in polynomial time. While our algorithm follows the same strategy for every journey type, the approach needs to differentiate depending on whether the journeys are strict or not.
    %For $k=1$, \TJEECnoK{(N)S-} is answering whether there exists a (non)-strict temporal Eulerian journey in $\gcal$, \ie a time-respecting journey which visits every temporal edge exactly once.
    %We note here, that this problem is different from the \textsc{Eulerian walk/trail} analysed by \citet{MarinoEulerian2023}. They show that finding a temporal walk or trail visiting every \emph{static} edge is \NP-complete.
    %In contrast, deciding whether one can visit every \emph{temporal} edge with one journey can be solved in polynomial time using a greedy approach.
    \ifshort
    \begin{proposition}[$\star$]
        For $\Phi\in\{\textsc{path}, \textsc{trail}, \textsc{walk}\}$, we can solve \TJEECwithK{1}{(N)S-} in polynomial time on both directed and undirected graphs.
    \end{proposition}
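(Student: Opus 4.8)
\emph{Overview.} The plan is to observe that a size-one exact edge-cover is precisely a single temporal journey traversing every temporal edge exactly once, \ie, a \emph{temporal Eulerian journey}, and to characterise when such a journey exists by processing the snapshots $G_1,\dots,G_{\tmax}$ in chronological order. The characterisation, and hence the algorithm, splits cleanly according to whether the journey is strict or non-strict.

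\emph{Strict case.} First I would dispose of the strict setting. Since a strict journey uses at most one edge per time step, a single strict journey can cover all of $E_t$ only if $|E_t|\le 1$ for every $t$; otherwise I immediately report \textsc{No}. When every snapshot has at most one edge, sorting the temporal edges by label (reading off \edgestream) yields the unique candidate sequence $e_1,\dots,e_m$, and it remains to verify that consecutive edges are adjacent (with matching head/tail in the directed case). This already decides the walk version in linear time; for trails I additionally reject if some static edge carries two labels (covering both copies would reuse a static edge), and for paths I additionally check that no vertex repeats along the sequence.

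\emph{Non-strict walk.} In the non-strict setting the edges sharing a label $t$ form a contiguous block of the journey that lives entirely inside the snapshot $G_t$; since a walk is connected, this block can cover all of $E_t$ iff the support of $G_t$ (its non-isolated vertices) is connected and admits an Eulerian trail, i.e. it satisfies the usual degree condition (undirected: zero or two odd-degree vertices; directed: all vertices balanced, or one source with $\toutdegree(v)-\tindegree(v)=1$ and one matching sink). These conditions are necessary, and once the per-snapshot trails are chained so that the end-vertex of the $G_t$-trail equals the start-vertex of the next nonempty snapshot's trail, the concatenation is a valid non-strict Eulerian walk. To decide whether a compatible chaining exists I maintain the set $S_t\subseteq V$ of vertices at which the journey can be after completing snapshot $t$: empty snapshots leave $S_t$ unchanged; an Eulerian-circuit snapshot keeps $S_t = S_{t-1}\cap \mathrm{supp}(G_t)$ (start equals end, free over the support); a genuine Eulerian-trail snapshot collapses $S_t$ to (a subset of) its two admissible endpoints, with a single forced endpoint in the directed case. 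The instance is feasible exactly when $S_t$ never becomes empty, which is checkable in polynomial time.

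\emph{Reductions and the main obstacle.} The trail and path versions reduce to the walk analysis. A non-strict trail uses each static edge at most once, so it can exist only if every static edge has a unique label; under that condition any Eulerian walk is automatically a trail, so I run the walk procedure after this check. A non-strict path forbids repeated vertices, forcing the support to be a simple (directed) path whose labels are monotone from one end to the other; I build this unique traversal and verify time-monotonicity. I expect the main obstacle to be the correctness of the non-strict walk chaining: proving that the set-propagation $S_t$ is both sound and complete, and in particular that the freedom of an Eulerian-circuit snapshot (any support vertex may serve as the common start/end) together with the forced endpoints of Eulerian-trail snapshots never rules out a globally consistent choice. Establishing that the per-snapshot Eulerian conditions are jointly necessary and sufficient is the technical heart of the argument.
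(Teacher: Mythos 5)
Your overall strategy coincides with the paper's: for strict journeys, observe that $|E_t|\le 1$ is forced and attach the edges in chronological order; for non-strict journeys, require each snapshot to admit an Eulerian trail/path and chain the snapshots by propagating the set of feasible current endpoints. (The paper implements this same propagation by intersecting snapshot supports until the first snapshot with forced endpoints, then guessing a start and tracking a single endpoint.) Your reductions for non-strict trails (every static edge must carry a unique label, then run the walk routine) and for non-strict paths (the support must be a simple path with monotone labels) are also sound.

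However, your strict test is flawed in the undirected case: checking that \emph{consecutive edges are pairwise adjacent} does not suffice, because a walk has a current position, and the vertex shared by $e_i$ and $e_{i+1}$ need not be the vertex at which the walk sits after traversing $e_i$. Concretely, take $(\{a,b\},1)$, $(\{b,c\},2)$, $(\{b,d\},3)$: every consecutive pair shares $b$, so your check accepts, yet no single temporal walk covers all three edges --- after traversing $\{a,b\}$ and $\{b,c\}$ the walk is at $c$, which is not incident to $\{b,d\}$. The repair is exactly what the paper does: guess which endpoint of the earliest edge is the start-terminal (two choices), then greedily track the walk's current endpoint and require each subsequent edge to be incident to \emph{that} vertex, rejecting a guess when this fails or when a path/trail constraint is violated. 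Your directed check (matching head to tail) already amounts to this tracking, which is why only the undirected branch breaks; with that correction the argument is complete and matches the paper's proof.
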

    For strict journeys, we attach the edges in temporal order and check whether this forms the desired journey. For non-strict journeys, we additionally observe that a set of edges appearing at the same time has to form an Eulerian journey which can be found in polynomial time \cite{fleischner1990eulerian}. 
    % Generally, for strict journeys there can be at most one temporal edge per time step in \gcal. 
    % If \gcal is directed, the start vertex of the journey is uniquely defined. For an undirected \gcal, we have to guess which of the two endpoints of the earliest edge is the start-terminal.
    % Greedily check if the next time edges can be attached to the current prefix journey. 
    % If some edge is not connected to the current endpoint or it would break a constraint of the journey type, we can safely reject the guess.
    % For non-strict journeys, we have to deal with sets of edges appearing at the same time.
    % Observe that in each snapshot, the journey needs to form either a path or an Eulerian trail (for trails and walks). Those can be found in poly time.
    \fi

    \iflong
    For strict temporal paths on directed graphs, we can find an optimal \eecShort by starting with the unique earliest edge and greedily attaching all other edges in temporal order. 
    \begin{observation}
        \TPEECwithK{1}{S-} on directed graphs can be solved in linear time.
    \end{observation}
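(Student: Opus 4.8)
The plan is to exploit strictness to collapse the problem to verifying a single forced candidate solution. First I would observe that in a strict temporal path the labels of consecutive edges strictly increase, so the path uses \emph{at most one} temporal edge per time step. Since a size-$1$ \eecShort must cover every temporal edge with this single path, a necessary condition is that every snapshot $G_t$ contains at most one edge. Hence I would scan the edgestream \edgestream and reject immediately if any time label is shared by two or more edges, since a single strict path can never cover two edges carrying the same label.

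Assuming this check passes, all temporal edges carry pairwise distinct labels, so \edgestream gives a canonical ordering $(e_1,t_1),\dots,(e_m,t_m)$ with $t_1 < \dots < t_m$. Because strictness forces the covering path to traverse edges in strictly increasing time, and every time step holds at most one edge, the \emph{only} possible strict temporal path covering all edges is the one that uses $e_1,\dots,e_m$ in exactly this order. The candidate is therefore unique, which is precisely what justifies the greedy ``start at the earliest edge and attach the remaining edges in temporal order'' strategy.

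It then remains to verify that this candidate is a valid directed strict temporal path. Writing $e_i = (u_i, v_i)$, I would check the two defining properties in a single pass: (a) \emph{direction/adjacency}, i.e.\ $v_i = u_{i+1}$ for every $i \in [m-1]$, so that the head of each edge is the tail of the next; and (b) the \emph{path condition}, i.e.\ no vertex is visited twice, tested by marking, in a boolean array over $V$, each vertex of the traversed sequence $u_1, v_1, v_2, \dots, v_m$ and aborting if a vertex is ever marked twice. If either check fails the instance is a NO-instance; otherwise the candidate is a valid size-$1$ \eecShort, and correctness follows because these two conditions are exactly necessary and sufficient for the forced edge ordering to be a strict temporal path.

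Finally, for the running time: \edgestream is already sorted by label, the duplicate-label test and the adjacency and path checks each do constant work per edge, and the visited-array is initialized in $\bigoh(|V|)$ time. The whole procedure thus runs in $\bigoh(|\ecal| + |V|)$ time, which is linear. I do not expect a genuine obstacle here; the only point requiring care is recognizing that strictness forces a unique edge ordering, after which correctness reduces to the two path conditions above.
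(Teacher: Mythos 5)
Your proposal is correct and takes essentially the same approach as the paper: the paper's (one-sentence) argument is precisely to start with the unique earliest edge and greedily attach the remaining edges in temporal order, which is exactly your forced-candidate ordering followed by verification of the adjacency and no-revisit conditions. The explicit duplicate-label rejection and the vertex-marking pass you describe are just the spelled-out details implicit in that greedy procedure, with the same linear running time.
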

    For the other variants, we can extend this approach.
    First, we analyse the case of strict journeys in undirected graphs. Observe that \gcal can have at most one temporal edge per time step.
    Because of this, the start-terminal in a directed graph is uniquely defined.
    For an undirected \gcal, we can guess which of the two endpoints of the earliest edge is the start-terminal.
    Now, we proceed as before by greedily checking if the next time edge can be attached to the current prefix journey. 
    If some edge is not connected to the current endpoint or attaching it would break a constraint of the journey type, we can safely reject the guess.
    \begin{lemma}
        For $\Phi\in\{\textsc{path}, \textsc{trail}, \textsc{walk}\}$, we can solve \TJEECnoK{S-} for $k=1$ in directed and undirected graphs in $\mathcal{O}(\lvert \ecal\rvert + \lvert V\rvert)$ time.
    \end{lemma}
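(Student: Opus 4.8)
The plan is to reduce each of the three strict, $k=1$ problems to verifying a single, essentially forced, candidate journey. The key observation (already noted in the excerpt) is that a strict journey uses at most one temporal edge per time step, so if any snapshot $G_t$ contains two or more edges we may reject immediately; otherwise all temporal edges carry pairwise distinct labels. Since a strict journey must traverse its edges in strictly increasing time order, any journey covering all of $\ecal$ must list the temporal edges in the unique order given by the edgestream $e_1,\dots,e_m$. Hence the edge sequence of a covering journey is forced, and the whole task collapses to checking whether this forced sequence is a valid journey of the required type.

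Next I would run a single greedy/verification pass over the edgestream. On directed graphs the start-terminal is forced to be the tail of $e_1$; I maintain the current endpoint (initially the head of $e_1$) and, for each subsequent $e_i$, reject unless the tail of $e_i$ equals the current endpoint, then advance the endpoint to the head of $e_i$. On undirected graphs the only extra freedom is the orientation of the first edge $\{u,v\}$: I try both choices of start-terminal (two runs), and at each step reject a run unless the current endpoint is incident to $e_i$, advancing to the other endpoint of $e_i$. The instance is a YES-instance iff at least one run survives. The journey-type constraints fold into the same pass: walks need no extra check; for trails I maintain a boolean array indexed by static edges and reject as soon as one is seen twice (note that two temporal copies of a single static edge can never both lie on one trail, so this also rejects such inputs); for paths I maintain a boolean array indexed by $V$ and reject upon revisiting a vertex.

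For the running time, the edgestream already lists the temporal edges in nondecreasing label order, so no sorting is needed, and detecting $|E_t|\ge 2$ amounts to spotting two consecutive equal labels while scanning. The single pass (two passes in the undirected case) costs $\mathcal{O}(\lvert\ecal\rvert)$, plus $\mathcal{O}(\lvert V\rvert)$ (respectively $\mathcal{O}(\lvert E\rvert)\le\mathcal{O}(\lvert\ecal\rvert)$) for array initialisation, giving the claimed $\mathcal{O}(\lvert\ecal\rvert+\lvert V\rvert)$ bound. Each edge is processed in constant time.

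Correctness follows from the forcing argument: the pass accepts exactly when the forced edge sequence satisfies adjacency, orientation/incidence, and the type constraint, which is precisely the condition for a covering strict journey to exist. The only real subtlety — and the point I would take care to justify formally — is this uniqueness/forcing claim, namely that strictness (together with rejecting snapshots of size $\ge 2$) removes all ordering freedom, so that greedy verification is complete and no backtracking beyond the single undirected orientation guess is ever required. Everything else is routine bookkeeping.
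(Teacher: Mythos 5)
Your proposal is correct and follows essentially the same strategy as the paper: observe that strictness forces at most one temporal edge per time step (reject otherwise), so the edge order is forced, then greedily attach edges in temporal order while checking the journey-type constraint, with the start-terminal unique in the directed case and guessed among the two endpoints of the earliest edge in the undirected case. Your write-up merely makes the bookkeeping (boolean arrays for the path/trail constraints, the two-run structure) more explicit than the paper does.
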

    
    In the non-strict case, we additionally have to deal with sets of edges appearing at the same time step.
    In each snapshot, the journey needs to form either an Eulerian path (for paths) or an Eulerian trail (for trails and walks).
    If we are looking for a path, we can identify the two vertices at which the path has to start or end in the particular snapshot.
    If we are looking for a walk or trail, the snapshot can contain a cycle: If there are two vertices of odd degree, the journey has to start or end at those in the particular snapshot.
    If all vertices have even degree, the Eulerian trail has to start and end in the same, but arbitrary, vertex.
    If there is a first snapshot where the journey needs to travel between specific two vertices, we guess which of them will be the start and continue as in the strict case. 
    If no such snapshot exists, we check whether each snapshot induces a connected graph and compute the intersection of the vertex sets with non-zero degree at each snapshot. 
    \begin{lemma}
    \label{lem:one_journey}
        For $\Phi\in\{\textsc{path}, \textsc{trail}, \textsc{walk}\}$, we can solve \TJEECnoK{NS-} for $k=1$ in directed and undirected graphs in $\mathcal{O}(\lvert \ecal\rvert + \lvert V\rvert)$ time.
    \end{lemma}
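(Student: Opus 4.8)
The plan is to reduce a single non-strict journey to a concatenation of per-snapshot Eulerian subjourneys glued together at matching terminals, and then to decide the gluing by a forward sweep. The starting observation is that any journey $J$ covering every temporal edge exactly once, when restricted to a fixed time step $t$, yields the subjourney $J[t,t]$, which uses each edge of $E_t$ exactly once; since labels along a journey are non-decreasing, $J[t,t]$ is a contiguous block of $J$, and as consecutive edges are adjacent it is an Eulerian trail of the snapshot $G_t$ when $\Phi\in\{\textsc{walk},\textsc{trail}\}$, and an Eulerian trail that additionally repeats no vertex --- i.e., $G_t$ is a simple path --- when $\Phi=\textsc{path}$. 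Conversely, $J$ is exactly the concatenation of the blocks $J[t_j,t_j]$ over the non-empty snapshots $t_1<\dots<t_m$, where the end vertex of each block must coincide with the start vertex of the next. Thus an \eecShort of size $1$ exists iff every non-empty snapshot admits the required Eulerian object and these objects glue consistently.

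First I would test each snapshot $G_t$ in isolation. For walks and trails this is the standard Eulerian-trail check: the non-isolated vertices induce a connected subgraph and, in the undirected case, at most two vertices have odd degree (directed: every vertex satisfies $\tindegree(v)=\toutdegree(v)$ except possibly one source with $\toutdegree-\tindegree=1$ and one sink with $\tindegree-\toutdegree=1$); for paths I instead verify that $G_t$ restricted to non-isolated vertices is a simple path. Each snapshot is thereby classified as \emph{forced}, meaning its subjourney must run between a determined pair of terminal vertices (the two odd vertices, the two path endpoints, or the directed source/sink), or \emph{free}, meaning it is an Eulerian circuit whose subjourney may start and end at any single non-isolated vertex.

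Next I would resolve the gluing. Let $t_j$ be the first forced snapshot. Every earlier snapshot is free, so the journey must visit a single common vertex $v$ at which all those circuits are traversed; hence $v$ must lie in the intersection of their non-isolated vertex sets and in the terminal pair of $t_j$. I guess which of the two terminals of $t_j$ serves as $v$ (only $2$ choices, and none in the directed case, where source/sink already orient the trail), and then sweep forward exactly as in the strict case: each later forced snapshot must be entered at the current vertex and is left at its other terminal, while each later free snapshot must contain the current vertex as a non-isolated vertex and returns to it. For $\Phi=\textsc{path}$ I additionally maintain the set of already visited vertices and reject if any recurs, enforcing global simplicity. If no forced snapshot exists, every snapshot is a free circuit and the journey is a closed walk pinned to one vertex, so I only need a vertex in the intersection of all non-isolated vertex sets, which I compute directly and reject if empty.

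Correctness follows because before the first forced snapshot the journey's position is unconstrained except for lying in every preceding (connected) circuit, after it the position is fully determined by the forward sweep, and the single binary guess exhausts the only remaining degree of freedom. Each phase touches every temporal edge and vertex a constant number of times, and intersecting vertex sets can be done with a single counter array over $V$, giving total running time $\bigoh(\lvert\ecal\rvert+\lvert V\rvert)$. I expect the main obstacle to be the bookkeeping of the gluing constraints --- in particular, arguing that the free circuit snapshots preceding the first forced one impose exactly the intersection condition, that the lone binary guess (needed only in the undirected case) suffices, and that global vertex-simplicity for paths can be maintained without breaking linear time.
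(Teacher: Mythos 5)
Your approach is essentially the paper's: decompose the single non-strict journey into contiguous per-snapshot blocks, classify each snapshot as a \emph{forced} Eulerian path/trail (two determined terminals; source/sink in the directed case) or a \emph{free} Eulerian circuit, guess the entry terminal of the first forced snapshot (no guess needed for directed graphs), sweep forward as in the strict case, and fall back to intersecting the non-isolated vertex sets when every snapshot is free. The linear-time accounting also matches.

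There is, however, one genuine gap: your central equivalence --- ``an \eecShort of size $1$ exists iff every non-empty snapshot admits the required Eulerian object and these objects glue consistently'' --- is false for $\Phi=\textsc{trail}$. A temporal trail may not revisit a \emph{static} edge, and this constraint acts \emph{across} snapshots, whereas your argument only enforces it within each snapshot (where it is automatic, since the temporal edges of $E_t$ are distinct static edges). Concretely, take $V=\{u,v\}$ and $\ecal=\{(\{u,v\},1),(\{u,v\},2)\}$: both snapshots are forced simple paths with terminals $u,v$, and they glue consistently ($u\to v$ at time $1$, then $v\to u$ at time $2$), so your algorithm accepts; yet the resulting journey traverses the static edge $\{u,v\}$ twice, so it is a walk but not a trail, and the correct answer for trails is no. A directed analogue exists as well: two snapshots, each consisting of the $2$-cycle $(u,v),(v,u)$, are both free circuits with common vertex set $\{u,v\}$, so your intersection test accepts, but any covering journey repeats both static edges. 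Your visited-vertex bookkeeping catches this phenomenon for paths, and for walks nothing needs to be caught, so the failure is specific to trails. The fix is cheap: either reject upfront whenever some static edge carries more than one time label (a necessary condition for a single-trail cover, checkable in linear time), or track used static edges during the forward sweep --- this is what the paper's strict-case subroutine implicitly does when it rejects any attachment that ``would break a constraint of the journey type'' --- and either variant preserves the $\mathcal{O}(\lvert\ecal\rvert+\lvert V\rvert)$ bound.
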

    \fi
    %We now take a look at finding an eec of size 2. 
    %We observe a divergence in the behavior of the three journey types, which, interestingly, is notably different from the behavior observed for all $k\geq 3$.
    
%===================Path figure
\ifshort
\begin{figure*}[t]
        \centering
        \includegraphics[width=0.7\textwidth]{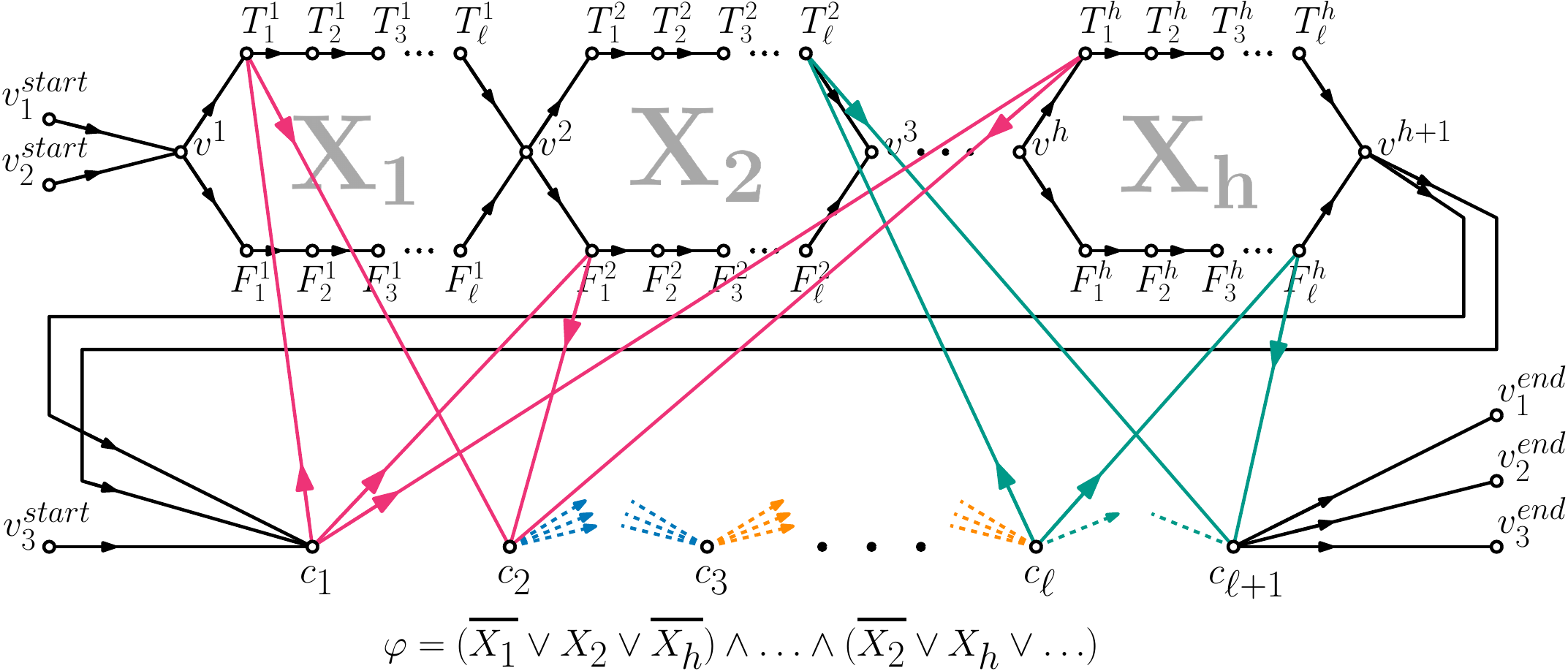}
        \caption{Illustration of the construction for the \NP-hardness reduction in \Cref{thm:SD_TPD} for $k=3$. The two edges from $v^{h+1}$ to $c_1$ connecting the assignment-gadget on the top and satisfaction-gadget on the bottom are drawn as two separate edges. With this, every drawn edge has exactly one time label increasing with the direction of the edges.
        }
        \label{fig:paraNPhard}
    \end{figure*}
\fi
%==================Path figure   
\subsection{Two Journeys -- \sat{2} Approach}\label{subsec:two_journeys} % for All but Non-Strict Trails}
    % \textcolor{red}{For $k=2$, we observe differing behaviour between the journey types which interestingly again differs from the picture for all $k\geq 3$.}
    % When considering an exact edge-cover with $k=2$ journeys, we observe a divergence in the behavior of the three journey types, which, interestingly,
    % is notably different from the behavior observed for all $k\geq 3$.
    %stands in contrast to the consistent behavior for all $k\geq 3$.    
    While \TPEECwithK{2}{(N)S-}, \TTEECwithK{2}{S-} and \TWEECwithK{2}{(N)S-} are solvable in polynomial time for both directed and undirected graphs, we show that \TTEECwithK{2}{NS-} is \NP-hard on directed and undirected graphs. 
    %For $k=2$, we present polynomial time algorithms for all but two problem variants: non-strict trail EECs on directed and undirected temporal graphs.

    \ifshort Towards the positive results, we utilize a polynomial-time algorithm for \sat{2}. \fi %\argy{for short version: we do not explain where the XOR kicks in. we can be a bit more vague and just say 2SAT?}
    \iflong 
    % \subsubsection{Two Walks}
    For walks, we observe in a later section (\Cref{sec:walks}) that \TWEECnoK{(N)S-} can be solved in polynomial time when the size of the eec we are looking for is fixed. This immediately implies the following for walk eecs of size 2.
    \begin{corollary}\label{lem:2TWD}
        \TWEECwithK{2}{(N)S-} can be solved in polynomial time.
    \end{corollary}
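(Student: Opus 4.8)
The statement is exactly the $k=2$ specialization of the forthcoming claim that \TWEECnoK{(N)S-} is polynomial-time solvable whenever the number of walks is a fixed constant (\Cref{sec:walks}). The cleanest plan is therefore to prove that general statement and simply read off $k=2$; but since that result is only established later in the paper, I would give a direct, self-contained argument for $k=2$ that already exhibits the idea behind the general case. The key feature to exploit is that walks, unlike trails and paths, impose no constraint beyond temporal adjacency, so an exact edge-cover by walks is nothing more than a partition of the temporal edges into (at most) two time-respecting sequences. I would also note at the outset that covering with at most two walks is equivalent to covering with exactly two, since any walk may be split at an internal vertex and a single edge is itself a walk; this removes all boundary fuss about the precise count.

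The engine would be a dynamic program that scans the edgestream of \gcal in non-decreasing order of time labels. At any moment the only information needed to extend a partial solution is where the two partial walks currently ``sit'', so I would take as DP state the unordered pair of current endpoints (each either a vertex of $V$ or a marker for ``not yet started''); this gives $\bigoh(|V|^2)$ states. To process a temporal edge $(u,v,t)$ I would assign it to one of the two walks whose current endpoint equals $u$ in the directed case (or equals $u$ or $v$ in the undirected case), updating that walk's endpoint to $v$ (respectively to the opposite vertex); a walk that is ``not yet started'' may instead be opened on this edge. Accepting iff every edge is consumed, the whole procedure runs in $\bigoh(|\ecal|\cdot|V|^2)$ time, which is polynomial. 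For strict walks this is immediate and even cheaper: each walk uses at most one edge per time step, so a snapshot carrying more than two edges is an instant reject, and within a snapshot the two walks simply pick up at most one edge each.

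The one place that genuinely needs care — and what I expect to be the \emph{main obstacle} — is the non-strict case, where several edges share a time label and a single walk may traverse many of them consecutively within one snapshot. Here a naive edge-by-edge scan might silently permit a traversal that the static structure of the snapshot forbids, so I would instead process one whole snapshot $G_t$ at a time: given the two carried-over endpoints, decide whether the edges of $G_t$ can be split into two time-equal sub-walks starting at those endpoints (or started freely, if a walk has not yet begun), and record the resulting pair of outgoing endpoints. This per-snapshot subproblem is precisely a static exact edge-cover of $G_t$ by two walks with prescribed start vertices, an Eulerian-type decomposition solvable in polynomial time by the same degree-and-connectivity analysis already used for a single journey in \Cref{lem:one_journey}. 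Chaining these snapshot transitions through the $\bigoh(|V|^2)$ endpoint states and checking that all edges are covered then yields the claimed polynomial-time algorithm; feeding $k=2$ into the general version of this argument (\Cref{sec:walks}) gives the same conclusion.
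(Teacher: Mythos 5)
Your proposal is correct and follows essentially the same route as the paper: the paper obtains this corollary by simply invoking the later result (\Cref{prop:walks_XP}) that \TWEECnoK{(N)S-} is solvable in time $n^{\mathcal{O}(k)}$ by tracking all possible positions of the $k$ walk endpoints at each time step, which is exactly your endpoint-pair dynamic program specialized to $k=2$. Your additional elaboration of the non-strict case (processing whole snapshots via a static two-trail decomposition with prescribed starts) is more detailed than what the paper writes and is sound, though the "degree-and-connectivity" check needs slightly more care than for a single journey (e.g., two closed sub-trails on one snapshot component require a cycle-separation/2-edge-cut argument rather than parity alone).
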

    
    Towards the positive results for paths and trails, we utilize a polynomial time algorithm for \xorsat{2}.
    We provide a detailed description for strict paths in directed graphs in \Cref{lem:D_S_2TPD} and extend it to the other path variants in \Cref{lem:UD_S_2TPD,lem:D_NS_2TPD,lem:UD_NS_2TPD}. Using the same approach, we prove the claim for strict trails on directed and undirected graphs in \Cref{lem:D_S_2TTD,lem:UD_S_2TTD}, respectively.

    \subsubsection{Two Paths}
    We begin with the main idea for the construction. % correspondence to \xorsat{2}.
    \fi
    Intuitively, as paths cannot revisit vertices we know that on directed temporal graphs every vertex can have at most two incoming and two outgoing edges%\todo{A: I find it a bit confusing. As it is written it implies directed graph. Either we should say that "Let us present the directed case as it is more intuitive, or explain a bit better what we mean by incoming and outgoing"}
    , each \iflong of which has to be \fi covered by exactly one path.
    \iflong Obviously, the earliest temporal edge incident to a vertex $v$ that is not one of the terminals has to be an incoming edge. \fi
    Now, if the temporal edges incident to a vertex are alternatingly incoming and outgoing [\textit{in-out-in-out}], we \textbf{have to} match the first \textit{in}-edge with the first \textit{out}-edge.
    % If the temporal appearance of the edges is [\textit{in-out-in-out}] \argy{either we need to explain, or write it differently}, we have to match the first \textit{in} with the first \textit{out}.
    However, when both incoming edges appear before the outgoing edges [\textit{in-in-out-out}], we \textbf{can choose} which \textit{in} is matched with which \textit{out}. 
    \iflong We refer to the latter also as a \textit{hub}. \fi
    % and which with the later \textit{out}. %If both edges have the same time label, the choice is arbitrary.
    This can be formalised via a \sat{2} formula with one variable for each vertex with \textit{in-in-out-out} edge appearance and one clause for each vertex with \textit{in-out-in-out} edge appearance. Setting a variable to True corresponds to the first path taking the earliest \textit{out}-edge at that vertex and setting \iflong the variable \fi \ifshort it \fi to False corresponds to the first path taking the latest \textit{out}-edge.
%    \textcolor{blue}{Differing run time for paths and trails -- so poly time it is.}
    \ifshort
    \begin{theorem}[$\star$]
    \label{thm:D_S_2TPD}\label{lem:D_S_2TPD}
        \TPEECwithK{2}{(N)S-}, \TTEECwithK{2}{S-} and \TWEECwithK{2}{(N)S-} 
        can be solved in polynomial time on directed and undirected 
        temporal graphs. %\todo{A: Again I would make it either a Thm or a Prop}
    \end{theorem}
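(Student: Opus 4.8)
The plan is to treat the three journey types separately. For walks the claim is immediate: \Cref{lem:2TWD} already gives a polynomial-time algorithm for \TWEECwithK{2}{(N)S-} on both directed and undirected graphs, so nothing further is needed there. The remaining work is to handle the paths \TPEECwithK{2}{(N)S-} and the strict trails \TTEECwithK{2}{S-}, both of which I would reduce to \sat{2}; since every constraint produced is an equality or an inequality between two Boolean values, the formula is in fact an instance of \xorsat{2}, solvable in polynomial time.

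I would start from the base case of directed strict paths and build the other cases on top of it. Because two paths enter and leave any vertex at most twice, I first reject any instance containing a vertex of in- or out-degree larger than two. On the surviving instances a two-path decomposition is exactly a choice, at every vertex, of a time-increasing matching of incoming to outgoing edges (the \emph{transits}), the unmatched edges acting as path endpoints. As noted before the statement, the only vertices offering a genuine choice are the \emph{in-in-out-out} ones, so I would introduce one Boolean variable for each of them recording which out-edge is paired with the earliest in-edge; at every \emph{in-out-in-out} vertex the time-maximal matching is forced, and following a maximal run of forced transits between two decision vertices yields a single clause relating the two corresponding variables. In addition, at each degree-four vertex I add the clause that its two transits lie in different paths. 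This produces a linear-size \xorsat{2} formula, and the essential correctness direction is that a satisfying assignment yields a valid eec: the chosen transits decompose the edge set into chains, the degree-four clauses forbid a chain from re-entering a vertex and acyclicity of temporal journeys rules out cycles, so each chain is a simple temporal path, while the construction is arranged so that their number does not exceed two.

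For the remaining cases I would adapt this construction. Undirected strict paths additionally require guessing, for each forced segment, the direction in which it is traversed; this only doubles the number of literals and stays within \sat{2}. The non-strict variants force me to recompute the forced/choice classification per vertex, since edges sharing a time label have no fixed relative order, and to respect the Eulerian substructure inside a snapshot exactly as in \Cref{lem:one_journey}; crucially the local constraints remain binary. For strict trails the difference is that a journey may revisit a vertex but may not reuse a \emph{static} edge, so I would drop the degree-four ``different paths'' clauses and instead add, for every static edge carrying two temporal copies, a clause forcing those copies into different journeys; strictness caps every snapshot at two edges, which keeps the per-vertex matching structure simple enough for the same variable/clause scheme to apply.

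The hardest part is the correctness analysis rather than the construction. For each of the six cases I must prove both directions of the equivalence between satisfying assignments and valid size-two eecs, and the delicate point is the global structure: I have to argue that locally consistent transits assemble into \emph{at most two} connected journeys of the correct type, not merely into a locally legal matching, and that the unfixed terminals and, in the non-strict setting, the intra-snapshot Eulerian substructures are tracked correctly. Keeping the journey count right when timing can obstruct the maximal matchings, and for trails in the absence of any degree bound, is where I expect the bulk of the careful case analysis to lie.
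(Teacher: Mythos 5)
Your proposal takes essentially the same route as the paper: walks are dispatched by the constant-$k$ algorithm (\Cref{lem:2TWD}), while paths and strict trails are reduced to \xorsat{2} with one variable per \emph{in-in-out-out} hub vertex, forced propagation at the remaining vertices, and XOR clauses forcing each revisited vertex (for paths) or doubly-labelled static edge (for trails) onto different journeys\,---\,precisely the construction in \Cref{lem:D_S_2TPD,lem:D_S_2TTD} and their undirected/non-strict extensions. The only real difference is mechanical: the paper generates these clauses via a sequential endpoint-tracking dynamic program with guessed start vertices, which is also what pins the journey count to two (the part you defer as ``arranged by the construction''), whereas you phrase the same reduction as a static local-matching formulation.
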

    \fi
    \iflong
    % \begin{figure*}
%     \centering
%     \includegraphics[width=0.8\textwidth]{}
%     \caption{Idea for the polynomial time algorithm for $k=2$. %The two path pieces have to start in their respective start-terminals and while they do not meet, we have to greedily attach the next temporal edges. When they meet in a hub, new path pieces are formed until the next hub.
%     Using the \xorsat{2} formula we ensure that the path taking the upper part after hub $h_1$ does not also take the upper part after $h_2$ as it would otherwise revisit the [\textit{in-out-in-out}] vertex $x$.}
%     \label{fig:enter-label}
% \end{figure*}
\newcommand{\visited}{\text{visited}}
\newcommand{\sOne}{\ensuremath{s_1}\xspace}
\newcommand{\sTwo}{\ensuremath{s_2}\xspace}
\newcommand{\POne}{\ensuremath{P_1}\xspace}
\newcommand{\PTwo}{\ensuremath{P_2}\xspace}
\newcommand{\vOne}{\ensuremath{v_1}\xspace}
\newcommand{\vTwo}{\ensuremath{v_2}\xspace}
\begin{lemma}[Strict Directed Paths]
\label{thm:D_S_2TPD}\label{lem:D_S_2TPD}
    For $k=2$, we can solve \TPEECnoK{S-} for directed temporal graphs in $\mathcal{O}(\lvert\ecal\rvert + \lvert V\rvert)$ time.
\end{lemma}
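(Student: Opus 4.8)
The plan is to reduce the problem to \sat{2}, exploiting that with only two paths every vertex is touched at most twice. First I would run the necessary degree checks: since each path is vertex-disjoint it uses at most one incoming and one outgoing edge at any vertex, so with two paths we must have $\tindegree(v)\le 2$ and $\toutdegree(v)\le 2$ for all $v$; otherwise reject. I would then record the local excess $\toutdegree(v)-\tindegree(v)$ at each vertex. Two paths contribute two units of out-excess at their start-terminals and two units of in-excess at their end-terminals, so I reject unless $\sum_v\max(0,\toutdegree(v)-\tindegree(v))\le 2$. These bounds make the local picture at every vertex a constant-size object, and they already bound the candidate terminals.

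Next I would introduce one Boolean variable $y_e$ per temporal edge $e$, read as ``$e$ belongs to the first path'', and encode local consistency as clauses of width two. After sorting the edges incident to each vertex by time label: if a vertex has two incoming edges $a,b$ the two paths must use different ones, giving the inequality clause $y_a\neq y_b$, and symmetrically for two outgoing edges. Strictness forbids a single path from using an incoming edge at time $t$ together with an outgoing edge of time $\le t$ at the same vertex, which I encode as a ``not both the same value'' clause for every offending in/out pair. As the excerpt observes, at an \emph{in-out-in-out} vertex these clauses force the matching, so edge values propagate deterministically, whereas at an \emph{in-in-out-out} hub the choice is genuinely free and is exactly what $y_e$ records; at a balanced pass-through vertex I additionally force its incoming and outgoing edge to share a value. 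Every produced constraint is a clause on at most two literals.

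The substantive step is to argue that a satisfying assignment yields \emph{exactly two} paths rather than an arbitrary vertex-disjoint path forest, and this is where I expect the real difficulty to lie. Once the local clauses hold, each colour class has in- and out-degree at most one at every vertex, hence decomposes into vertex-disjoint paths and cycles; strictness kills the cycles, since a time-respecting closed walk would need its re-entry edge to be later than its departure edge at the start vertex, contradicting $t_{in}<t_{out}$. So each colour is a disjoint union of strict temporal paths, and the number of paths equals the number of sources. The obstacle is forcing this count down to two. When $\sum_v\max(0,\cdot)=2$ the two start-vertices and two end-vertices are pinned by the excess, and the pass-through clauses above already forbid any extra (``accidental'') source, so a satisfying assignment is automatically a two-path cover. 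The delicate cases are small excess, where one or two terminals sit at balanced vertices; here I would pin them using the temporally extremal edges — the tail of the globally earliest edge cannot be fed by any earlier incoming edge and must begin a path, and dually the head of the globally latest edge must end one — which reduces the otherwise-free terminal roles to $\bigoh(1)$ candidates, each fixed by adding unit clauses and re-solving.

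Correctness in both directions then follows by reading a cover off a satisfying assignment and, conversely, colouring the edges of a genuine two-path cover by path index to satisfy all clauses. For the running time, the instance has $\bigoh(\lvert\ecal\rvert+\lvert V\rvert)$ variables and clauses; the per-vertex time-sorting is paid for once using the sorted edgestream \edgestream, and \sat{2} is solvable in linear time via the implication-graph strongly-connected-component algorithm, so the total is $\bigoh(\lvert\ecal\rvert+\lvert V\rvert)$. I expect the local reduction itself to be routine and essentially all the care to go into the terminal/connectivity bookkeeping of the previous paragraph — certifying that the pieces glue into precisely two paths while staying inside the linear-time budget.
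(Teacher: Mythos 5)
Your route differs genuinely from the paper's: you put one Boolean variable on every temporal edge and encode all constraints locally at vertices, whereas the paper guesses the two start-vertices, sweeps the edge stream while tracking the two current endpoints, creates one variable per \emph{hub} event (a moment at which the two endpoints coincide), and adds an XOR-clause whenever a vertex is revisited across phases. In the paper's scheme the two paths may silently exchange identities at every hub and can only ever start at the two guessed vertices, so the question ``does a satisfying assignment decode to exactly \emph{two} paths?'' never arises. Your scheme must enforce that global count with width-two clauses, and the device you introduce for this purpose --- the pass-through clause at balanced vertices --- is unsound. Concretely, take $V=\{x,w,v,y\}$ with edges $e_1=\tdirect{x}{w}{1}$, $e_2=\tdirect{w}{v}{2}$, $e_3=\tdirect{v}{w}{3}$, $e_4=\tdirect{w}{y}{4}$. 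The unique exact edge-cover by two strict paths is $\{e_1,e_2\}$ and $\{e_3,e_4\}$, \ie one path ends and the other starts at the balanced vertex $v$, even though $v$'s in-edge (time $2$) precedes its out-edge (time $3$). Your clauses at $w$ give $y_{e_1}\neq y_{e_3}$, $y_{e_2}\neq y_{e_4}$, and the strictness clause $y_{e_3}\neq y_{e_2}$, which force $y_{e_1}=y_{e_2}$ and $y_{e_3}=y_{e_4}$; your pass-through clause at $v$ then demands $y_{e_2}=y_{e_3}$, a contradiction. So your formula is unsatisfiable on a yes-instance. (Even simpler: the two-edge graph $\tdirect{u}{v}{5}$, $\tdirect{v}{w}{3}$ is a yes-instance on which pass-through and strictness at $v$ already clash.)

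The repair you sketch for the ``delicate cases'' does not close this gap. In the instance above, the second path start sits at $v$, which is neither the tail of the globally earliest edge (that is $x$, already pinned by the excess) nor locatable by any temporal extremality: the break at $v$ is forced by the \emph{global} constraint that a single path may not revisit $w$, not by any local time inversion. In general every balanced vertex is a candidate break point, so there are $\Theta(\lvert V\rvert)$ candidates rather than $\bigoh(1)$; guessing which balanced vertex hosts the hidden terminal and re-solving would cost $\Omega(\lvert V\rvert)$ separate \sat{2} calls and lose the claimed linear bound (restoring only a quadratic-type algorithm). What is missing is a mechanism that lets the two paths exchange roles or break at a balanced vertex while still certifying that the total number of paths is two --- this is exactly what the paper's phase variables at hubs, combined with its outer enumeration of start-vertex pairs, provide.
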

\begin{proof}
    We reduce the problem to \xorsat{2}.
    Let $\gcal=\tuple{(V,E), \ecal}$ be a temporal graph and assume that one path cannot cover $\gcal$.
    Without loss of generality each vertex has temporal in- and out-degree of at most two. % and we denote by $t^{in}_1(v), t^{in}_2(v), t^{out}_1(v), t^{out}_2(v)$ the time labels of possible incoming and outgoing edges.

    \vspace{0.4em}\noindent\textit{Start vertices.}\quad Every temporal path has exactly one start-vertex. As there are $n$ vertices and 2 paths, there are $\binom{n}{2}$ possibilities which can be iterated through in polynomial time.
    From hereon, we assume the two start-vertices \sOne\ and \sTwo\ to be given and \sOne to be adjacent to the earlier edge.

    % \smallparagraph{Necessary conditions.} For a path eec we have to check for (1) the \emph{journey condition}, \ie the edges can be attached in temporal order using at most two journeys, and (2) the \emph{path condition}, \ie no vertex is visited twice by the same journey. Since all vertices have temporal in-degree at most two, we only have to ensure that the two in-edges of any vertex are not part of the same path.

    % (1) The journey-condition can be checked in a fashion similar to $k=1$, that is, by keeping track of the current possible endpoints and updating those with every new temporal~edge.
    % (2)~The path-condition can be checked by storing the visited vertices and generating an \sat{XOR\text{-}2} formula which has a satisfying assignment if and only if there exists a 2-\TPEEC for \gcal. Each variable of $\varphi$ corresponds to a choice (at a (start-/end-)hub-vertex) and we construct one clause for each vertex with in-degree two.
    % An assignment of 1 to a variable $X_i$ implies that \POne takes the \emph{earlier} edge (random if $t^{out}_1=t^{out}_2$) at the $i$\textsuperscript{th} hub-vertex and \PTwo takes the \emph{later} edge, and vice versa for assignment of 0.
    
    \vspace{0.4em}\noindent\textit{Path building.}\quad To construct the \xorsat{2} formula $\varphi$, we use a dynamic program to  track the two endpoints of the current paths at each time step, while storing visited vertices.
    
    We initiate $X_0$ as the current variable and %, representing the path pieces before the first [\textit{in-in-out-out}] vertex.
    start the two paths \POne and \PTwo at \sOne and \sTwo, respectively, by storing the current endpoint of $P_i$ as $v_i=s_i$. We also set the currently visited vertices of both paths to $V^0_1=\emptyset=V^0_2$. 
    
    We process the edges $e=((u,v),t)$ in temporal order. Let $X_\alpha$ be the current variable.
    As long as $\vOne\neq\vTwo$, we extend $P_i$ if $u=v_i$. If neither paths can be extended by $e$, the algorithm terminates and returns that there is no path eec of size 2.
    If $P_i$ is extended by $((u,v),t)$, we mark $v$ as visited and add it to $V^\alpha_i$ -- the visited vertices of $P_i$ in phase $\alpha$.
    Next, we check if $v$ has been visited before. If $v$ has been visited by $P_j$ in phase $\gamma\leq\alpha$, let $Y=X_\gamma$ if $j=1$ and $Y=\overline{X_\gamma}$ if $j=2$. Similarly, let $Z=X_\alpha$ if $i=1$ and $Z=\overline{X_\alpha}$ if $i=2$.
    Now, we add the clause $Y\oplus Z$.
    This clause ensure that either $P_1$ takes the early edge at the hub-vertex $\gamma$ and $P_2$ takes the early edge at the hub-vertex $\alpha$ or vice versa.
    
    Whenever the two paths ``meet'' in a hub-vertex at time step $t$ ($v_1=v_2$), we generate a new variable $X_{\alpha+1}$ and store it as the current variable.
    We reset the two paths by setting $V^{\alpha+1}_1=\emptyset=V^{\alpha+1}_2$.
    The next edge has to start at $v_1$ and, without loss of generality, we attach it to the earlier path $P_1$.

    If every edge is processed without returning that there is no path eec of size 2, we have constructed a (possibly empty) \xorsat{2} formula $\varphi$. We show that this formula is satisfiable if and only if $\gcal$ has a path eec of size 2.

    \bigparagraph{($\Rightarrow$)\quad } Let $\beta$ be a satisfying assignment of $\varphi$, $a=\lvert var(\varphi)\rvert$ variables. Let $h_i\in V$ for $i\in[a]$ denote the [\textit{in-in-out-out}] hub-vertices corresponding to each variable. 
    % We define the two paths $\POne$ and $\PTwo$ of the path exact edge-cover of size 2 by going over every time edge and attaching it to either one of the paths. If we can attach every edge and ensure that \POne and \PTwo stay paths throughout the process, the claim is proven.

    We start the two edges in the corresponding start-vertices. We  attach each time edge in chronological order to the unique adjacent temporal path until the two paths meet in a hub-vertex $h_i$.
    Now we use the assignment $\beta$ to decide which of the paths exits $h_i$ first: \POne leaves the vertex first if and only if $\beta(X_i)=1$.
    Since $\beta$ is a satisfying assignment, this ensures that no vertex is visited by the same path twice.
    
    \bigparagraph{($\Leftarrow$) \quad} %$2$-\TPEEC implies algorithm output True.}
    Let $\mathcal{P}$ be a path exact edge-cover of size 2 for \gcal and let $\varphi$ be the constructed formula.
    We construct a satisfying assignment $\beta$ by setting $\beta(X)=1$ if and only if \POne exits the $i$\textsuperscript{th} hub-vertex first. Since $\mathcal{P}$ is a path eec, the two contained paths do not revisit any vertices and therefore, every clause of $\varphi$ is satisfied.

    \vspace{0.4em}\noindent\textit{Polynomial runtime.}\quad %\bigparagraph{Polynomial runtime.}
    The construction of the \xorsat{2} formula from \ecal takes $\mathcal{O}(\lvert\ecal\rvert + \lvert V\rvert)$ time and the formula has $x=\mathcal{O}(\lvert V\rvert)$ variables and $y=\mathcal{O}(\lvert V\rvert)$ clauses.
    An \xorsat{2} formula with $x$ variables and $y$ clauses can be transformed into a \sat{2} formula with $x'=x$ variables and $y'=2y$ clauses by the following equivalence $(L_1\oplus L_2)\equiv(L_1\vee L_2)\wedge(\overline{L_1} \vee \overline{L_2})$.
    A satisfying assignment for such a \sat{2} formula can be found in $\mathcal{O}(x'+y')$ time if it exists \cite{even1975complexity}.
    The overall running time of the algorithm is thus $\mathcal{O}(\lvert\ecal\rvert + \lvert V\rvert)$.
\end{proof}

    This \xorsat{2} approach can be extended to undirected graphs and non-strict paths using arguments similar to the extensions for $k=1$ in \Cref{subsec:one-journey}.
    %=========START==k=2 trails illustration of cases
    \begin{figure*}[h]
        \centering
        \includegraphics[width=\textwidth]{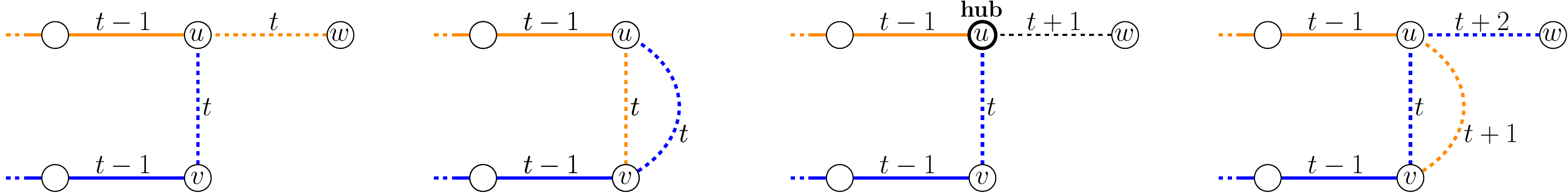}
        \caption{Illustration of the cases for possible adjacent edges for two journeys. The cases are the same for all three types of journeys, only the behaviour of the journeys may change.}
        \label{fig:ktwo-trails-cases}
    \end{figure*}
    %=========END===k=2 trails illustration of cases
    \begin{lemma}[Non-Strict Directed  Paths] \label{lem:D_NS_2TPD}
        For $k=2$, we can solve \TPEECnoK{NS-} for directed temporal graphs in $\mathcal{O}(\lvert\ecal\rvert + \lvert V\rvert)$ time.
    \end{lemma}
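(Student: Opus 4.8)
The plan is to reuse the reduction to \xorsat{2} from \Cref{lem:D_S_2TPD} almost verbatim, changing only the edge-processing step so that it treats a whole snapshot at once instead of a single edge. As in the strict case we may assume every vertex has temporal in- and out-degree at most two (otherwise no cover by two paths exists), and we again iterate over the $\binom{n}{2}$ choices of the two start-terminals $s_1,s_2$. The dynamic program still maintains the two current endpoints $v_1,v_2$, the visited-vertex sets $V_1^\alpha,V_2^\alpha$, and a current variable $X_\alpha$; it still emits a clause $Y\oplus Z$ (with the same sign convention $Y=X_\gamma$ or $\overline{X_\gamma}$ and $Z=X_\alpha$ or $\overline{X_\alpha}$) whenever an extension would re-enter an already-visited vertex, and it still opens a new phase and variable whenever the two endpoints coincide ($v_1=v_2$).

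The only genuinely new ingredient is that a snapshot $G_t=(V,E_t)$ may contain several edges traversed consecutively. Since each of $P_1,P_2$ is time-monotone and vertex-simple, its restriction to time $t$ is a single contiguous, vertex-simple sub-path, so $E_t$ must decompose into at most two edge-disjoint simple directed paths. Exactly as in the $k=1$ non-strict analysis (\Cref{lem:one_journey}), I would first test this decomposition property snapshot by snapshot and reject if it fails; the degree-two bound caps each vertex's contribution so that only the local configurations enumerated in \Cref{fig:ktwo-trails-cases} can arise. Given the two entry points into $G_t$ supplied by the DP, these configurations determine, for each internal vertex, whether the incident in/out edges are matched in a forced way (contributing a clause) or admit a free choice (spawning a fresh variable), and they determine the two exit points carried into the next snapshot.

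Next I would fold the non-strict relaxation into the time order. In the strict case a hub was an in-in-out-out vertex identified via distinct time labels; non-strictness only enlarges this class, since edges sharing a label may be used consecutively, so more vertices admit a free matching of in- to out-edges. The precise local test for each vertex is read off from the case analysis of \Cref{fig:ktwo-trails-cases}, after which variable and clause generation proceeds exactly as before. The first-snapshot subtlety from \Cref{lem:one_journey}—where the entry of the two paths into a symmetric snapshot is not yet fixed—is handled by guessing which boundary vertex each path takes, just as that lemma guesses a start among the two endpoints of the first constrained snapshot.

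The main obstacle I expect is precisely this local snapshot classification: covering $E_t$ by two edge-disjoint simple directed paths from the given entry pair is more delicate than threading a single edge, because a vertex of snapshot-degree two may have to be left only in a later snapshot (the path waits there), and vertex-simplicity must be enforced \emph{across} snapshot boundaries, not merely within one. I would discharge this by proving that every admissible snapshot falls into one of the finitely many cases of \Cref{fig:ktwo-trails-cases}, each forcing a matching or yielding a single binary choice, so the formula still has $\mathcal{O}(\lvert V\rvert)$ variables and clauses. Correctness then follows from the same two directions as in \Cref{lem:D_S_2TPD}, and since each snapshot is scanned a constant number of times and the \xorsat{2} instance is solved after the standard conversion $(L_1\oplus L_2)\equiv(L_1\vee L_2)\wedge(\overline{L_1}\vee\overline{L_2})$ to \sat{2}, the total running time remains $\mathcal{O}(\lvert\ecal\rvert+\lvert V\rvert)$.
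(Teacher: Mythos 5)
Your proposal is correct and follows essentially the same route as the paper: the paper also extends the strict directed \xorsat{2} reduction of \Cref{lem:D_S_2TPD} with a local check invoked only when $\lvert E_t\rvert>1$, which routes the second path through the snapshot into an endpoint that has two outgoing edges at label $t$, and otherwise case-splits at a hub $v$ on whether $v$ has $0$, $1$, or $2$ outgoing edges in $E_t$ (reject / cover $E_t$ by an Eulerian path from $v$ / split via a search on $E_t$ with a fresh variable), reusing the same clause and variable generation and the same $\mathcal{O}(\lvert\ecal\rvert+\lvert V\rvert)$ bound. Your invariant that each snapshot must decompose into at most two edge-disjoint simple directed sub-paths anchored at the current endpoints is exactly what underlies that check, so the two arguments differ only in presentation (the paper states the case analysis directly on out-degrees rather than via \Cref{fig:ktwo-trails-cases}).
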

    \begin{proof}
        We extend the argument of \Cref{thm:D_S_2TPD} from strict  to non-strict paths. As the conditions on revisiting vertices do not change, neither do the restrictions on the in- and out-degree of every vertex nor the properties of the start-vertices.

        Non-strict paths can take edges with the same time label consecutively. Therefore, we add a local check to the process of \Cref{thm:D_S_2TPD} that is used only when we encounter a time label $t$ with $\lvert E_t\rvert>1$.
        
        If one of the current endpoints $v_1$ has two outgoing edges with time label $t$, there has to be an incoming path from the other current endpoint $v_2$ at this time label $t$. If that is not the case, one of the two outgoing edges from $v_1$ cannot be covered. This path can be found in linear time.

        Therefore, assume $v_1=v_2=v$, \ie the two current paths meet in a hub-vertex. That means the current variable is a new variable, say $X^\alpha$. %, is the current variable.
        If $out(v)\cap E_t = 0$, the edges cannot be attachted to the current paths %the paths cannot be extended by the edges at time label $t$ 
        and an eec with two paths is not possible.
        If $out(v)\cap E_t = 1$, the edges in $E_t$ have to form an Eulerian path starting in $v$ covered by one path. We can compute this in linear time, add all the vertices to $V^\alpha_1$ and check whether we have to add clauses to $\varphi$ as in the original process.
        Lastly, if $out(v)\cap E_t = 2$, both paths will have to exit the vertex at time step $t$. We proceed with a breadth-first search on the edges of $E_t$ starting in $v$. Out of the two outgoing edges, we choose one at random as the \emph{earlier} edge and proceed with attaching the outgoing edges if possible, until either all edges are attached, or the two paths meet again in another vertex. We then repeat the case-distinction on $out(v)\cap E_t$.
        After processing all edges of $E_t$, we continue with the next time edge(s).        
        % \vspace{0.4em}\noindent         \emph{The algorithm} \textcolor{gray}{Make adjustments for when more than one edge per time label. Then the order of edges is unclear -- given as a set of vertices. In the corresponding line, check if $out(v)\cup E_t\neq\emptyset$. If so, we can extend. If $\lvert out(v)\cup E_t\rvert = 2$, this has to be a hub and both paths must have arrived here.}

        The local check does not add any more processing time to the overall process which is thus still $\mathcal{O}(\lvert\ecal\rvert + \lvert V\rvert)$.
        \end{proof}   
    Moving on to strict paths in undirected graphs, we have to add a process to process undirected edges. We provide a case distinction based on where an undirected edge attaches.

    \begin{lemma}[Strict Undirected Paths] \label{lem:UD_S_2TPD}
        For $k=2$, we can solve \TPEECnoK{S-} for undirected temporal graphs in $\mathcal{O}(\lvert\ecal\rvert + \lvert V\rvert)$ time.
    \end{lemma}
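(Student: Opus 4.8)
The plan is to extend the \xorsat{2} reduction of \Cref{thm:D_S_2TPD} to undirected temporal graphs, reusing its dynamic program almost verbatim and inserting a case distinction to cope with the missing edge orientations. As in the directed case, we may assume $|E_t| \le 2$ for every time step $t$ (otherwise two strict paths cannot cover $E_t$), and since each path visits a vertex at most once, every vertex has temporal degree at most four. The start-terminals are determined as for $k=1$ in \Cref{subsec:one-journey}, reading off candidate terminals from the odd-degree vertices; the only genuinely new guess is that, for an undirected edge, we must decide which of its two endpoints a path enters from. We then run the dynamic program of \Cref{thm:D_S_2TPD}, maintaining the two current endpoints $v_1, v_2$, the per-phase visited sets $V^\alpha_1, V^\alpha_2$, and the current variable $X_\alpha$: a fresh variable is created whenever the two endpoints coincide ($v_1 = v_2$, a hub), and an XOR-clause $Y \oplus Z$ is added whenever an extension would revisit a previously visited vertex, with the literals $Y, Z$ drawn from the relevant phase-variables and negated according to which path is involved, exactly as before.

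The single new ingredient is the treatment of an undirected edge $e = \{u,w\}$ at the current time step, for which I would distinguish the cases illustrated in \Cref{fig:ktwo-trails-cases} according to how $e$ meets the current endpoints. If $e$ is incident to exactly one of $v_1, v_2$, say $u = v_1$, then $e$ must extend $P_1$ and its new endpoint becomes $w$ (symmetrically for $v_2$); if $e$ is incident to neither, the edge cannot be covered and we reject. If $v_1 \ne v_2$ and $e = \{v_1, v_2\}$, the edge may go to either path, after which both endpoints collapse onto one vertex, creating a hub and thus a fresh variable. If $v_1 = v_2 = w$ is already a hub, an incident edge can be taken by either path, and when two edges are incident to $w$ at the same time step they are split between the paths, the binary choice of which path leaves first being recorded by the hub variable. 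Because processing is chronological, a genuine binary choice arises \emph{only} when both paths sit at the same vertex before either has left it (the undirected analogue of the \emph{in-in-out-out} pattern); every other configuration forces a unique assignment. Hence each vertex still contributes at most one variable and the instance stays inside \xorsat{2}.

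Correctness follows the two directions of \Cref{thm:D_S_2TPD}: from a satisfying assignment we orient the hub choices to obtain two strict paths that never revisit a vertex, and from a valid $2$-path exact edge-cover we read off, at each hub, which path leaves first, which satisfies every clause. The main obstacle is precisely the loss of orientation. I expect the crux to be proving that, once the start-terminals are fixed, the temporal order forces a unique traversal direction for every non-hub edge, so that the remaining directional ambiguity is confined to hub vertices and is faithfully captured by a single binary variable there; in particular I must show that the two degenerate situations — an edge joining both current endpoints, and two distinct edges incident to the same endpoint in one snapshot — are handled consistently with the visited-set bookkeeping so that no revisit constraint is dropped. Finally, to preserve the $\mathcal{O}(|\ecal| + |V|)$ bound I would argue that the odd-degree characterisation of terminals restricts the start configurations to a constant number, after which the chronological scan and the linear-time \sat{2} solver give the claimed running time.
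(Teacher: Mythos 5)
Your overall strategy---running the directed \xorsat{2} dynamic program of \Cref{lem:D_S_2TPD} on the edgestream and adding a case distinction for how each undirected edge meets the two current endpoints---is the same as the paper's, and your first two cases (edge attachable to exactly one prefix: forced; both prefixes sitting at the same hub vertex: proceed as in the directed case) match the paper's Cases 1 and 2. The gap is in your third case, an edge $e=\{v_1,v_2\}$ joining the two \emph{distinct} current endpoints. You write that ``the edge may go to either path, after which both endpoints collapse onto one vertex, creating a hub and thus a fresh variable.'' But the two choices do not collapse onto the \emph{same} vertex: if $P_1$ takes $e$, the hub forms at $v_2$ (and $P_1$ has now visited $v_2$), while if $P_2$ takes $e$, the hub forms at $v_1$. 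The dynamic program cannot proceed without knowing where the hub is---a future edge may attach at one location and not the other---so this choice cannot be deferred to an \xorsat{2} variable: in the directed argument a variable only records which path \emph{leaves} a fixed, known hub first, never where the two paths currently \emph{are}. Your own ``crux'' paragraph flags exactly this situation but supplies no mechanism for it, and your claim that ``a genuine binary choice arises only when both paths sit at the same vertex'' is precisely what fails in this configuration.

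The paper resolves this case deterministically by look-ahead in the edgestream (its Case 3, with subcases 3(a)--3(d)): inspecting the next edge $e_{i+1}$ (and occasionally $e_{i+2}$) forces which prefix must take $e$. For instance, if $e_{i+1}$ attaches at $v_1$ at a later time, then $P_2$ must take $e$ so that the hub forms at $v_1$ (and symmetrically for $v_2$); if $e_{i+1}$ is a parallel edge between $v_1$ and $v_2$, both prefixes take one copy each and swap endpoints. So the choice here is \emph{forced}, not free, and no variable is created---but establishing this requires exactly the look-ahead analysis your proposal omits. A secondary, more minor point: determining the start-terminals from ``odd-degree vertices'' is an Eulerian-type criterion that does not apply to covers by temporal paths (which may not visit all edges at a vertex); the paper instead takes the start-terminals as guessed, exactly as in the directed lemma, independently of any degree condition.
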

    \begin{proof}
        Observe that the direction of the first temporal edge is clear for both prefixes from the given start-terminals. Assume we processed all temporal edges until time step $t-1$ and let $v_1^{t-1}$, $v_2^{t-1}$ denote the endpoints of the current prefixes.
        We make a case distinction based on how the next undirected edge $e_i=\tundirect{u}{v}{t}$ can be connected to the prefix paths.
        In the following, we omit mirrored cases like $u=v_1^t$, $v=v_2^t$ and $u=v_2^t$, $v=v_1^t$.
        
        \vspace{0.4em}\noindent\textit{Case 1.}\quad 
            The next edge is attachable to exactly one prefix, \ie $v_i^{t-1}=u\neq v_2^{t-1}$ and $v_1^{t-1}\neq v\neq v_2^{t-1}$. The edge has to be attached to that prefix.
        
        \vspace{0.4em}\noindent\textit{Case 2.}\quad 
            The next edge is attachable to both prefixes and $v_1^t=v_2^t$, \ie the prefixes meet in a hub-vertex. In this case, we proceed in the same way as in the directed case (\Cref{lem:D_S_2TPD}).
        
        \vspace{0.4em}\noindent\textit{Case 3.}\quad
            The next edge is attachable to both prefixes and $v_1^t\neq v_2^t$, \ie $v_1^{t-1}=u\neq v_2^{t-1}$ and $v_1^{t-1}\neq v= v_2^{t-1}$.
            In this case it is unclear which prefix should be extended and we check the next edge $e_{i+1}=\tundirect{u'}{v'}{t'}$ in the edgestream.
            We distinguish four subcases.
            % In the first two, $e_i$ and $e_{i+1}$ belong to the same static edge, \ie $\{u,v\}=\{u',v'\}$.
            % In the latter two, $e_{i+1}$ connects $u$ or $v$ to another vertex $w$.
            Refer to \Cref{fig:ktwo-trails-cases} for an illustration.

            \vspace{0.2em}\hspace{0.4em}\noindent\textit{Case 3(a).}\quad 
            Let $\{u,v\}=\{u',v'\}$ and $t=t'$.
            The two prefixes have to take one edge each and their endpoints are switched after time $t$, \ie $v_1^{t}=v_2^{t-1}$ and $v_2^{t}=v_1^{t-1}$.
            
            \vspace{0.2em}\hspace{0.4em}\noindent\textit{Case 3(b).}\quad 
            Let $\{u,v\}=\{u',v'\}$ and $t<t'$.
            Again, the two prefixes have to take one edge each and their endpoints are switched after time $t+1$.
            However, if the second next edge $e_{i+2}$ after $e_{i+1}$ is also at time step $t+1$ and attaches at $u$, then prefix two has to take the edge $e_i$ at time step $t$ in order to be able to take $e_{i+2}$. Vice versa, if $e_{i+2}$ attaches at $v$, prefix one has to take $e_i$ at time $t$.

            \vspace{0.2em}\hspace{0.4em}\noindent\textit{Case 3(c).}\quad 
            Let $\{u,v\}\neq\{u',v'\}$ and $t=t'$.
            Then wlog $u'=u=v_1^{t-1}, v'=w\neq v$ and prefix one has to be extended to $w$ while prefix two is extended to $u$.
            After $t$, we have $v_1^{t}=w$ and $v_2^{t}=u=v_1^{t-1}$.
            % So, assume $\{u,v\}\neq\{u',v'\}$.
            % If $\{u,v\}\cap\{u',v'\}=\emptyset$, then $\tundirect{u'}{v'}{t'}$ cannot be covered by any of the two paths and a path eec of size $2$ is impossible.
            
            \vspace{0.2em}\hspace{0.4em}\noindent\textit{Case 3(d).}\quad 
            Let $\{u,v\}\neq\{u',v'\}$ and $t<t'$.
            In this case prefix two has to take $e_i$ and the two prefixes meet in $u$ after time step $t$ which is therefore a hub-vertex. From the hub-vertex, we continue as in Case 2.
        
        These additional checks can be done in linear time, which yields an overall runtime of $\mathcal{O}(\lvert\ecal\rvert + \lvert V\rvert)$.    
    \end{proof}   
    
    Lastly, we combine the arguments 
    % We combine the arguments 
    of \Cref{lem:D_NS_2TPD} and \Cref{lem:UD_S_2TPD} to get an algorithm for non-strict paths in undirected graphs.
    Meaning, for every set of edges $E_t$ with the same time label $t$, we add a local check starting in the current endpoints $v_1^{t-1}$ and $v_2^{t-1}$ and whenever we encounter an undirected edge $\tundirect{u}{v}{t}$ attachable to both paths, we look ahead at which end of $(u,v)$ the next time edge will appear.
    \begin{lemma}[Non-Strict Undirected Paths] \label{lem:UD_NS_2TPD}
        For $k=2$, we can solve \TPEECnoK{NS-} for undirected temporal graphs in $\mathcal{O}(\lvert\ecal\rvert + \lvert V\rvert)$ time.
    \end{lemma}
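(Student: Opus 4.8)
The plan is to combine the two refinements already developed for $k=2$ paths — the look-ahead procedure for undirected edges from \Cref{lem:UD_S_2TPD} and the snapshot-local Eulerian handling for non-strict journeys from \Cref{lem:D_NS_2TPD} — inside the same \xorsat{2} framework used in \Cref{lem:D_S_2TPD}. As before, I would first guess the two start-terminals (there are $\binom{n}{2}$ choices, enumerable in polynomial time), and maintain the two current prefix endpoints $v_1^{t-1}, v_2^{t-1}$, the per-phase sets of visited vertices, and a current variable $X_\alpha$. Whenever the two prefixes meet in a hub-vertex I open a new phase and a new variable, and whenever a vertex is visited a second time I emit the corresponding $\oplus$-clause exactly as in \Cref{lem:D_S_2TPD}. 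The only part of the procedure that needs to change is how the edges of a single time label $t$ are processed.

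For a label $t$ with $\lvert E_t\rvert = 1$ the edge is undirected but isolated in its snapshot, so I would handle it with the verbatim Case~1--3 distinction of \Cref{lem:UD_S_2TPD}: if the edge attaches to exactly one prefix it is forced there; if it attaches to both and the prefixes would meet, we treat $t$ as a hub; and if it attaches to both but the prefixes would separate, we resolve the orientation ambiguity by looking ahead to the next edge(s) in the edgestream, using subcases 3(a)--3(d). For a label $t$ with $\lvert E_t\rvert > 1$ I would run the local check of \Cref{lem:D_NS_2TPD}, but replace its directed ``$out(v)\cap E_t$'' reasoning by the undirected analogue: starting from $v_1^{t-1}$ and $v_2^{t-1}$, I greedily extend the prefixes through $E_t$, and each time an undirected edge $\tundirect{u}{v}{t}$ is attachable to both prefixes I apply the same look-ahead --- now within the snapshot, to the remaining edges of $E_t$, and only then to later labels --- to decide which prefix takes it. As in the non-strict directed argument, whenever the two prefixes merge inside the snapshot, the remaining edges of $E_t$ incident to that component must form an Eulerian path carried by a single prefix; this subwalk is computed in linear time, its vertices are added to the current visited set, and any second visit again produces a clause.

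The main obstacle I expect is the interaction between the two mechanisms inside a \emph{multi-edge undirected} snapshot: within one label the two prefixes can repeatedly split and re-merge, each split carrying an orientation choice that is not locally forced, while simultaneously the two prefixes must jointly realise an edge-disjoint decomposition of the snapshot component into at most two journeys. The care is in showing (i) that the look-ahead never needs to reach beyond a constant number of upcoming edges to break any genuine ambiguity, so the total work stays $\mathcal{O}(\lvert\ecal\rvert + \lvert V\rvert)$; and (ii) that every free orientation choice can still be encoded by a hub-variable and every revisitation constraint by an $\oplus$-clause, so that no constraint is silently lost when a merge or split happens mid-snapshot. Once this local routine is in place, correctness follows exactly as in \Cref{lem:D_S_2TPD}: a satisfying assignment of the resulting \xorsat{2} formula --- equivalently a \sat{2} formula of twice the size, solvable in linear time --- tells each prefix which edge to take at every hub and thus yields two paths that revisit no vertex and cover every temporal edge exactly once, while conversely any path eec of size two induces a satisfying assignment by recording which prefix leaves each hub first.
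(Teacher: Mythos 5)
Your proposal takes essentially the same route as the paper: the paper's own justification for \Cref{lem:UD_NS_2TPD} is exactly the combination you describe---running the \xorsat{2} machinery of \Cref{lem:D_S_2TPD} while handling multi-edge snapshots as in \Cref{lem:D_NS_2TPD} and resolving undirected-edge ambiguities via the look-ahead of \Cref{lem:UD_S_2TPD}. In fact, your write-up is more detailed than the paper's, which devotes only a single sentence to this lemma; the split/re-merge interaction inside a multi-edge snapshot that you flag as the main obstacle is left equally unaddressed there.
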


\subsubsection{Two Strict Trails}
Finally, we discuss the strict trails in directed and undirected graphs; once again using the \xorsat{2} approach.
However, instead of creating a clause for every [\textit{in-out-in-out}] vertex, we create a clause for every static edge $(u,v)$ which has two time labels (and has to be visited by both trails).
We will again create a new variable for every hub-vertex, \ie every vertex at which the trails meet in one time step. Note that those do not correspond to [\textit{in-in-out-out}] vertices anymore as trails can revisit vertices and we can have vertices with degree higher than two. Otherwise, the algorithm will work as for paths.
We first prove the result for strict trails in directed graphs.
\begin{lemma}[Strict Directed Trails] \label{lem:D_S_2TTD}
    For $k=2$, we can solve \TTEECnoK{S-} for directed temporal graphs in $\mathcal{O}(\lvert\ecal\rvert + \lvert V\rvert)$ time.
\end{lemma}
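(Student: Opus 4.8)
The plan is to reuse the \xorsat{2} reduction of \Cref{lem:D_S_2TPD} almost verbatim, changing only what the variables and clauses encode. Following the authors' outline, I would introduce one fresh variable per \emph{hub-vertex}---now any vertex at which the two trails sit simultaneously at a single time step---and one clause per static edge that carries two time labels. The guiding observation is the trail constraint: if a static edge $(u,v)$ occurs at two times $t_1<t_2$, no single trail may take both copies, so the two temporal copies must be covered by \emph{different} trails. This is exactly the kind of binary constraint that \xorsat{2} captures.

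First I would preprocess. Strictness means each trail uses at most one edge per time step, so two trails cover at most two edges per snapshot; if any $E_t$ has three or more edges I reject outright, which also caps the work per time step at a constant. For the start-terminals I would use the degree balance: $\sum_v(\toutdegree(v)-\tindegree(v))=0$ and each open trail contributes one unit of out-excess at its start, so the total positive out-excess is at most two. Together with seeding the trails from the tails of the earliest edges and letting any remaining start emerge as the first edge that cannot extend an active trail, this pins the starts down to $\mathcal{O}(1)$ configurations, each testable in linear time---this is what delivers the claimed $\mathcal{O}(\lvert\ecal\rvert+\lvert V\rvert)$ bound rather than a naive $\binom{n}{2}$ enumeration.

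The heart is the sweep over the edgestream in temporal order, maintaining the two current endpoints $v_1,v_2$, the static edges already used by each trail, and the phase (current variable) in which each was used. While $v_1\neq v_2$, every incoming edge attaches to the unique trail whose endpoint is its tail, rejecting if neither matches. Whenever the trails coincide at a time step I mint a new variable $X_{\alpha+1}$ whose truth value records which trail exits the hub along the earlier outgoing edge. The clauses are generated as in the path proof: when the sweep reaches the second copy of a static edge---say covered by trail $i$ in the current phase $\alpha$, while its first copy was covered by trail $j$ in phase $\gamma$---I add $Y\oplus Z$ with $Y=X_\gamma$ (resp.\ $\overline{X_\gamma}$) according to $j=1$ (resp.\ $j=2$) and $Z=X_\alpha$ (resp.\ $\overline{X_\alpha}$) according to $i=1$ (resp.\ $i=2$), forcing the two copies onto opposite trails.

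Correctness then runs as for paths. A satisfying assignment fixes, at each hub, which trail leaves first; replaying the sweep with these choices yields two strict trails that partition the temporal edges, the $Y\oplus Z$ clauses guaranteeing that no static edge lands twice on one trail. Conversely, a size-$2$ strict-trail eec induces the assignment $X_\alpha=1$ iff the first trail exits the $\alpha$-th hub first, satisfying every clause. The formula has $\mathcal{O}(\lvert V\rvert)$ variables and $\mathcal{O}(\lvert\ecal\rvert)$ clauses; rewriting \xorsat{2} as \sat{2} via $(L_1\oplus L_2)\equiv(L_1\vee L_2)\wedge(\overline{L_1}\vee\overline{L_2})$ and solving in linear time \cite{even1975complexity} gives the overall bound. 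The hard part will be the hub bookkeeping: since trails may revisit vertices and a single snapshot can push two edges into the same hub at once, I must pin down precisely what ``the earlier outgoing edge'' means there and verify that the phase/variable accounting stays consistent, so that the $Y\oplus Z$ clauses capture the split-static-edge constraint and nothing more.
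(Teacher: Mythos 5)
Your proposal matches the paper's proof essentially step for step: the paper likewise reruns the \xorsat{2} sweep of \Cref{lem:D_S_2TPD}, creating a fresh variable at every hub-vertex where the two trails meet, a clause for every static edge carrying two time labels, and recording per phase the \emph{edges} (rather than vertices) used by each tentative trail, with correctness obtained by substituting ``no edge is revisited'' for ``no vertex is revisited'' and the same $\mathcal{O}(\lvert V\rvert)$-variable, $\mathcal{O}(\lvert \ecal\rvert)$-clause count yielding the linear bound. Your degree-balance handling of the start-terminals is only a minor (arguably cleaner) variant of the paper's choice to seed the two trails at the first two edges of the stream, and the hub bookkeeping you flag as delicate is glossed over at the same level of detail in the paper's own argument.
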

\begin{proof}
    We initiate variable $X^0$ as the current variable.
    As for paths, we start an early trail $T_1$ and a late trail $T_2$ at the first two edges and greedily attach edges to the two current trails until they meet in a hub-vertex. During this process, we store in $E_i^\alpha$ the \textbf{edges} visited by trail $T_i$ in phase $\alpha$.

    Rest of process same as \Cref{lem:D_S_2TPD} and the same arguments hold when one substitutes the property check \textit{no vertex is revisited by the same journey} by \textit{no edge is revisited by the same journey}.

    The constructed \xorsat{2} formula constructed from this process contains $x=\mathcal{O}(\lvert V\rvert)$ variables and $y=\mathcal{O}(\lvert E\rvert)$ clauses and so the overall running time is still $\mathcal{O}(\lvert\ecal\rvert + \lvert V\rvert)$.
\end{proof}

For trails in undirected graphs, we copy the case distinction of \Cref{lem:UD_S_2TPD}. The arguments for paths hold just the same for trails.
\begin{lemma}[Strict Undirected Trails] \label{lem:UD_S_2TTD}
    For $k=2$, we can solve \TTEECnoK{S-} for undirected temporal graphs in $\mathcal{O}(\lvert\ecal\rvert + \lvert V\rvert)$ time.
\end{lemma}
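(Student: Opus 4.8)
The plan is to obtain the algorithm by merging two ingredients already developed: the \xorsat{2} construction for strict directed trails (\Cref{lem:D_S_2TTD}) and the case distinction governing how an \emph{undirected} edge attaches to the two current prefixes (\Cref{lem:UD_S_2TPD}). As in \Cref{lem:D_S_2TTD}, I would guess the two start-terminals, grow an early trail $T_1$ and a late trail $T_2$ greedily in temporal order, open a fresh variable $X^{\alpha+1}$ whenever the two prefixes meet at a common vertex within a single time step (a \emph{hub}), and record in $E_i^{\alpha}$ the static edges consumed by $T_i$ during phase $\alpha$. The only structural difference from the path setting is that, because trails may revisit vertices but not static edges, hubs are no longer the [\emph{in-in-out-out}] vertices and the clauses are not indexed by [\emph{in-out-in-out}] vertices: instead I would emit one clause for every static edge carrying two time labels, forcing its two temporal copies to be distributed consistently between the two trails.

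The undirected aspect is handled exactly as in \Cref{lem:UD_S_2TPD}. Since edges are no longer oriented, when the next temporal edge $\tundirect{u}{v}{t}$ is attachable to exactly one prefix (Case~1) I attach it there; when it is attachable to both and the endpoints coincide afterwards (Case~2) a hub is created and I branch as in the directed trail construction; and when it is attachable to both but the resulting endpoints differ (Case~3) I resolve the ambiguity by looking ahead at the next edge in the edgestream, splitting into the subcases 3(a)--3(d) according to whether that edge shares both endpoints with $\{u,v\}$ and whether it carries the same time label. I would carry these checks over verbatim, observing that each of them either forces a unique assignment of the two temporal copies to $T_1$ and $T_2$ or postpones the choice to the next hub, which is precisely the information the formula needs.

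For correctness I would argue, as in \Cref{lem:D_S_2TTD}, that the constructed \xorsat{2} formula $\varphi$ is satisfiable if and only if \gcal admits a strict trail eec of size two, replacing the invariant ``no vertex is revisited by the same journey'' with ``no static edge is revisited by the same journey''. A satisfying assignment tells, at each hub, which trail leaves first; conversely a trail eec induces such an assignment, and every clause -- one per doubly-labelled static edge -- is satisfied precisely because the two trails are edge-disjoint on their shared static edges. Translating $\varphi$ to a \sat{2} instance via $(L_1\oplus L_2)\equiv(L_1\vee L_2)\wedge(\overline{L_1}\vee\overline{L_2})$ and solving it in linear time yields the overall $\mathcal{O}(\lvert\ecal\rvert+\lvert V\rvert)$ bound, since the formula has $\mathcal{O}(\lvert V\rvert)$ variables and $\mathcal{O}(\lvert E\rvert)$ clauses.

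The step I expect to be the main obstacle is verifying that the look-ahead logic of Case~3 remains sound for trails. For paths the subcases 3(a)--3(d) rely implicitly on the fact that the two prefixes sit at distinct vertices; for trails a vertex may legitimately be revisited, so I would need to confirm that the forcing argument in each subcase depends only on the endpoints and time labels of the edges involved -- not on vertex-disjointness -- and that an incidental coincidence of endpoints with an earlier visit cannot spuriously trigger a hub or a clause. Once this transfer is checked, the remaining details are routine adaptations of the path proof.
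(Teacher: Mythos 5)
Your proposal follows exactly the paper's route: the paper justifies this lemma in one line by combining the \xorsat{2} construction for strict directed trails (\Cref{lem:D_S_2TTD}) with the undirected-edge case distinction of \Cref{lem:UD_S_2TPD}, stating only that ``the arguments for paths hold just the same for trails.'' Your write-up is in fact more careful than the paper's, since you explicitly flag the need to check that the Case~3 look-ahead logic does not secretly rely on vertex-disjointness of the two prefixes\,---\,a point the paper glosses over entirely.
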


Now, we proceed with the computational hardness for exact egde-cover of size two with non-strict trails in directed and undirected graphs.
    \fi

    \iflong
    \subsubsection{Two Non-Strict Trails}
    \fi
    To show hardness for non-strict trails, we adjust the proof of \citet[Theorem 10]{MarinoEulerian2023}.
    They show that \textsc{Eulerian Trail} -- finding a single temporal trail visiting all \textbf{static} edges of a temporal graph \iflong with lifetime two\fi -- is \NP-hard. %; they call this problem \textsc{Eulerian Trail}.
    \iflong 
    \begin{theorem}
    \fi
    \ifshort
    \begin{theorem}[$\star$]
    \fi
    \label{thm:UD_NS_2TTD}
        \TTEECwithK{2}{NS-} is \NP-complete on both directed and undirected temporal graphs.
    \end{theorem}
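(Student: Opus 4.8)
The plan is to establish \NP-membership for free and then prove \NP-hardness by adapting the reduction of \citet{MarinoEulerian2023} for the single-trail temporal Eulerian trail problem on lifetime-two graphs. Membership is immediate: a candidate pair of trails can be verified in polynomial time (each temporal edge appears exactly once, and each trail respects adjacency, time-monotonicity, and the no-repeated-static-edge condition), exactly as argued for \TJEECnoK{(N)S-} in general. So the whole content is the hardness direction, and I would aim for a single construction that serves both the directed and the undirected case by orienting each constructed edge along increasing time.

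The lever that connects the \emph{number} of trails to the combinatorial core is the trail constraint itself: a single trail can use each static edge at most once, so whenever a static edge carries two distinct time labels, its two temporal copies must be covered by two \emph{different} trails. With a budget of exactly two trails this is extremely rigid, since every doubled static edge pins down a global assignment of its two temporal copies between trail~$1$ and trail~$2$. I would exploit this by taking \citet{MarinoEulerian2023}'s lifetime-two gadgetry, in which a single trail covering all \emph{static} edges must make a consistent Boolean-style choice at each doubled edge, and reinterpreting those choices as the assignment of the two temporal copies to the two trails. Concretely, I would rebuild their variable-, clause-, and connection-gadgets so that (i) the mandatory traversal of the two snapshots $G_1$ and $G_2$ forces the two trails to behave like the two halves of their single-trail traversal, and (ii) a satisfying assignment of the source instance corresponds exactly to a legal two-trail cover of all temporal edges.

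For correctness I would argue both directions gadget by gadget. In the forward direction, from a \textsc{Yes}-instance of the source problem I build the two trails explicitly: follow the prescribed traversal through the gadgets, sending one temporal copy of each doubled edge into trail~$1$ and the other into trail~$2$ according to the assignment, and verify that time-monotonicity and the no-repeated-static-edge condition hold within each trail. In the backward direction, I take any two non-strict trails covering every temporal edge exactly once and run a degree/parity plus timing argument on the two snapshots to show that the forced structure of the gadgets leaves no freedom beyond the Boolean choice at each doubled edge; reading off those choices yields a satisfying assignment. The degree bounds and connectivity of the gadgets are what must guarantee that two trails are simultaneously necessary and sufficient, so that neither one trail nor three can ever arise.

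The main obstacle I anticipate is controlling the \emph{non-strict} behaviour within a single snapshot: because many temporal edges share a time label, a trail may traverse a long sub-walk inside $G_1$ (respectively $G_2$) in one time step, and the two trails could in principle recombine or swap roles at a shared vertex in ways a strict model would forbid. Pinning down exactly how the two trails partition each snapshot, and ensuring that the only surviving degrees of freedom are the intended Boolean choices, is the delicate part, and it is precisely why a bespoke adaptation of the lifetime-two construction is needed rather than a black-box reduction from the single-trail result. Handling the undirected case adds the extra subtlety that a gadget edge can be entered from either endpoint, which I would neutralise by the orientation-by-time trick so that each undirected edge is effectively traversable in only the one direction consistent with its label.
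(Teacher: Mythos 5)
Your \NP-membership argument is fine, and you have correctly identified the central lever that the paper's proof also rests on: since a trail may use each static edge at most once, a static edge carrying two time labels forces its two temporal copies onto \emph{different} trails. The gap is that everything after this observation is deferred rather than proved, and the specific plan you sketch for it would not work. You propose to keep \citet{MarinoEulerian2023}'s lifetime-two structure and ``reinterpret'' their single-trail choices so that the two trails become ``the two halves of their single-trail traversal''. But the two problems do not line up under a reinterpretation: in the single-trail temporal Eulerian problem each \emph{static} edge is traversed exactly once, at a label of the trail's choosing, whereas in \TTEECwithK{2}{NS-} every \emph{temporal} edge---both copies of every doubled static edge---must be covered. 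So the gadgets cannot be carried over with their roles relabelled; they must be genuinely redesigned, and that redesign is exactly where the difficulty lives (you flag it yourself as ``the delicate part'' without resolving it). The paper's resolution is concrete and structurally different from your plan: it uses lifetime \emph{three}, not two. The assignment- and clause-gadgets live entirely at time step $3$, and two additional \emph{trail-restriction gadgets} at time steps $1$ and $2$ (i) pin down where each of the two trails must start its time-$3$ traversal, and (ii) pre-consume, inside trail one, the static $b$-edges of every clause gadget (at time $1$) and, inside trail two, the $a$-edges (at time $2$), so that at time $3$ the $a$-edges are forced into trail one and the $b$-edges into trail two. This forces \emph{both} trails to enter every clause gadget, which is precisely why the reduction is from \naesat{3}: trail one must enter through a satisfied literal and trail two through an unsatisfied one. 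Your proposal never names a source problem, and this not-all-equal structure is essential to making the backward direction close.

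A second concrete flaw is your treatment of the undirected case. The ``orientation-by-time trick'' (each undirected edge is traversable only in the direction consistent with its label) is unavailable in any construction of this kind: whole gadgets live inside a single snapshot, all of their edges share one label, and a non-strict trail may traverse them in either direction within that time step. The paper instead handles undirectedness by bespoke degree and connectivity arguments---for instance, $s_1$ has degree zero at time steps $2$ and $3$, so no trail can end there, and the four degree-one vertices at time step $3$ pin down the four trail endpoints---which is what rules out the trails recombining or swapping roles inside a snapshot. Without arguments of this type, your backward direction (recovering an assignment from an arbitrary two-trail cover) cannot be completed.
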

    \iflong
        \begin{proof}
    As mentioned above, this proof is nearly identical to the proof of \cite[Theorem 10]{MarinoEulerian2023}. We focus on the directed case. For the undirected case, the reduction is exactly the same just that the edges are not directed. The argument is analogous and we will point out the differences. We reduce from \naesat{3}. 

    % \bigparagraph{Construction.}
    \vspace{0.4em}\noindent\textit{Construction.}\quad
    Let $\varphi = \bigwedge_{i\in[\ell]}(\bigvee_{j\in[3]}L_{i,j})$ for some $\ell\in\mathbb{N}$ be a \naesat{3} instance with $h=\lvert var(\varphi)\rvert$ many variables and $\ell$ many clauses. We construct a directed temporal graph $\gcal := \tuple{G, \ecal}$ with lifetime three.
    
    The graph consists of an \emph{assignment-gadget}, a \emph{clause-gadget} for each clause $C_i$ in $\varphi$, and two \emph{trail-restriction-gadgets}.
    The assignment-gadget and clause-gadgets contain only edges at time step three.
        The first trail-restriction-gadget contains edges at time step one and makes sure that the first trail has to start at a specific vertex at time step three and has to visit all static edges that need to be covered by the second trail in clause-gadgets at time step three.
        The second trail-restriction-gadget is analogous at time step two for the second trail.

    Let us now describe the three gadgets separately, starting with the assignment-gadget. 
        \begin{figure}[h]
        \centering
        \includegraphics[width=\columnwidth]{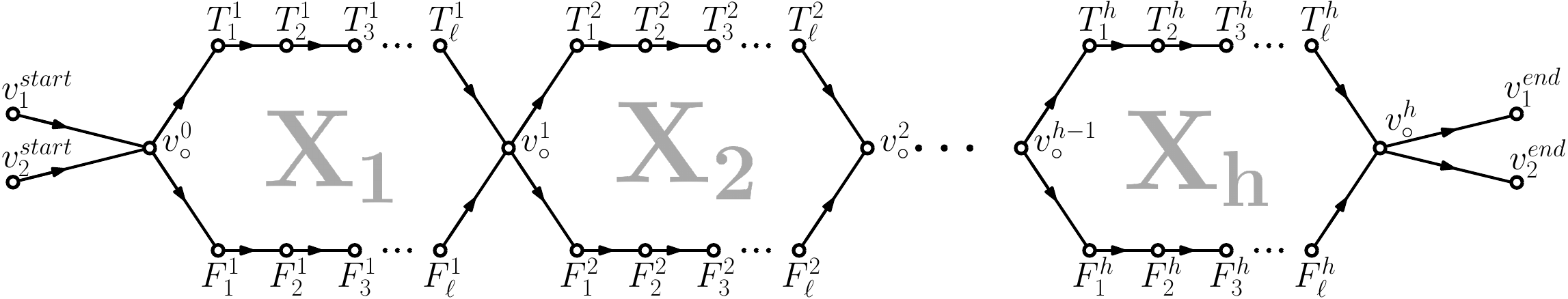}
        \caption{Illustration of the assignment gadget for the \NP-hardness reduction in \Cref{thm:UD_NS_2TTD}.}
        \label{fig:SD_TTD_var_gadget}
    \end{figure}
    %The assignment-gadget is depicted in \Cref{fig:SD_TTD_var_gadget}. 
    Refer to \Cref{fig:SD_TTD_var_gadget} for an illustration.
    The vertices of the assignment-gadget are: 
    \begin{itemize}
        \item vertices $v_1^{start}, v_2^{start}$ and $v_1^{end}, v_2^{end}$, where the two trails are intended to start and end their journey at time step three; 
        \item vertices $v_\circ^{0}, v_\circ^{1}, \ldots, v_\circ^{h}$ that connect separate variable gadgets;
        \item vertices $T^i_j$ and $F^i_j$ for each $i\in [h]$ and $j\in [\ell]$ representing assigning ;\footnote{Technically we only need $T^i_j$ if the clause $C_j$ contains the literal $X_i$ and $F^i_j$ if it contains the literal $\overline{X_i}$, but we keep all of them for the sake of exposition.}
    \end{itemize}
    The edges of the assignment-gadget are (recall that for the undirected case, we have exactly the same edges but without a direction): 
    \begin{itemize}
        \item $((v_1^{start}, v_\circ^{0}),3), ((v_2^{start}, v_\circ^{0}),3)$ and $(( v_\circ^{h}, v_1^{end}),3), (( v_\circ^{h}, v_2^{end}),3)$;
        \item for all $i\in [h]$, the edges $((v_\circ^{i-1}, T^i_1),3), ((v_\circ^{i-1}, F^i_1),3)$ and $((T^i_{\ell},v_\circ^{i}, ),3), ((F^i_{\ell},v_\circ^{i}),3)$;
        \item for all $i\in [h]$ and $j\in [\ell-1]$, the edges $((T^i_j, T^i_{j+1}), 3)$ and $((F^i_j, F^i_{j+1}), 3)$.
    \end{itemize}

        \begin{figure}[t]
        \centering
        \includegraphics[width=\columnwidth]{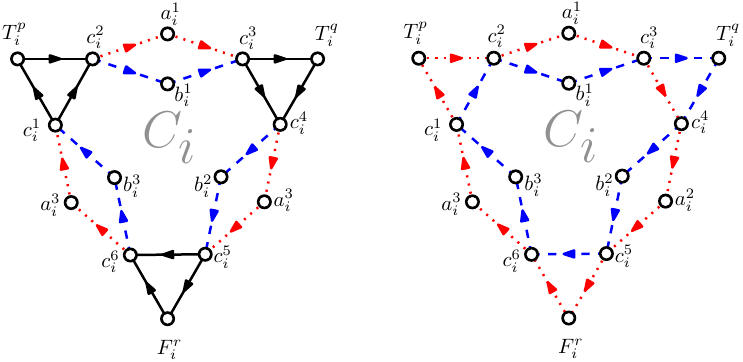}
        \caption{On the left is an illustration of a clause gadget for a clause $C_i = X_p\vee X_q \vee \overline{X_r}$ for the \NP-hardness reduction in \Cref{thm:UD_NS_2TTD}. The red dotted edges have labels $2$ and $3$ and are traversed in time step $2$ by the second trail. The blue dashed edges have labels $1$ and $3$ and are traversed in time step $2$ by the second trail. The black edges have only label $3$. On the right is an example how one can traverse the clause-gadget for $C_i$ in the time step three if one of $T_i^p$ or $F_i^r$ is on the first (red dotted) trail and $T_i^q$ is on the second (blue dashed) trail in the assignment gadget.}
        \label{fig:SD_TTD_clause_gadget}
    \end{figure}
    Now for each $i\in [\ell]$, we have a clause-gadget. Let $C_i$ be a clause over variables $X_p, X_q, X_r$, let $L_i^j$ for $j\in \{p,q,r\}$ be the vertex $T_i^j$ if $X_j$ appears in $C_i$ positively and $F_i^j$ if it appears negatively.
    In addition assume that $p < q < r$. 
    Then the clause-gadget for $C_i$ is as follows (see Figure~\ref{fig:SD_TTD_clause_gadget} for an example when $X_p$ and $X_q$ appear in $C_i$ positively and $\overline{X_r}$ negatively). We have additional vertices $c_i^1, \ldots, c_i^6$, $a_i^1, a_i^2, a_i^3$, and $b_i^1, b_i^2, b_i^3$, these are fresh new vertices not appearing in assignment-gadget or other clause-gadgets. The edges of the clause-gadget are 
    \begin{itemize}
        \item $((c_i^1, L_i^p),3),((L_i^p, c_i^2),3), ((c_i^3, L_i^q),3),((L_i^q, c_i^4),3)$, $((c_i^5, L_i^r),3),((L_i^r, c_i^6),3)$; 
        \item for $j\in \{1,3,5\}$, the edge $((c_i^j, c_i^{j+1}), 3)$;
        \item for $j\in \{1,2,3\}$, the edges $((c_i^{2j}, a_i^j), 3), ((c_i^{2j}, b_i^j), 3)$ and $((a_i^j, c_i^{2j+1}), 3), (( b_i^j, c_i^{2j+1}), 3)$, where $c_i^7 = c_i^1$.
    \end{itemize}
        An important idea to note here is that the trail-restriction-gadgets, will force that all edges $((c_i^{2j}, a_i^j), 3)$ and $((a_i^j, c_i^{2j+1}), 3)$ in all clause gadgets are taken by the same trail, which we refer to as the ``first'' trail and all edges $((c_i^{2j}, b_i^j), 3)$ and $((b_i^j, c_i^{2j+1}), 3)$ in all clause gadgets are taken by the ``second'' trail. Hence both trails have to enter the clause gadget. 
        For the first trail-restriction-gadget, we add additional vertices $s_1$ and for each $j\in \{1,2\}$ and $i\in [\ell]$, vertex $x_i^j$. The edges in this gadget are 
        \begin{itemize}
            \item $((s_1, c_1^2), 1)$, and $((c_\ell^1, v_1^{start}), 1)$;
            \item for all $j\in \{1,2,3\}$ and $i\in [\ell]$, the edges $((c_i^{2j}, b_i^j), 1)$ and $((b_i^j, c_i^{2j+1}), 1)$, where $c^7_i = c^1_i$;
            \item for all $j\in \{1,2\}$ and $i\in [\ell]$, the edges $((c_i^{2j+1}, x_i^j), 1)$;
            \item for all $i\in [\ell-1]$, the edge $((c_i^1, c_{i+1}^2), 1)$.
        \end{itemize}  
        The second trail-restriction-gadget is analogous, but all edges are at time step two and instead of $((c_i^{2j}, b_i^j), 1)$ and $((b_i^j, c_i^{2j+1}), 1)$ there are edges $((c_i^{2j}, a_i^j), 2)$ and $((a_i^j, c_i^{2j+1}), 2)$. 
    \begin{figure}[t]
        \centering
        \includegraphics[width=\columnwidth]{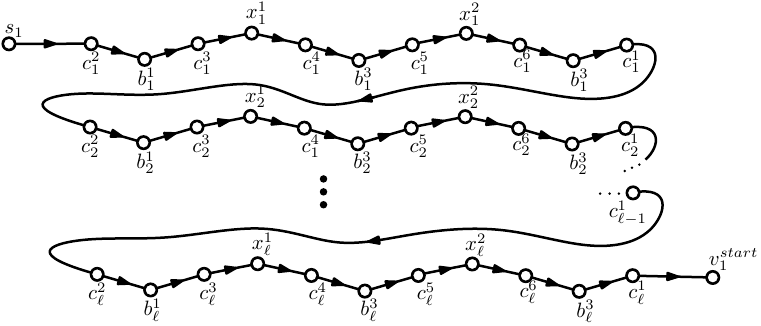}
        \caption{Illustration of the first trail-restriction-gadget for the \NP-hardness reduction in \Cref{thm:UD_NS_2TTD}. The second trail-restriction-gadget is analogous, but edges have time step two, it starts at $s_2$, ends at $v_2^{start}$ and it goes through $a_i^j$ whenever the first trail-restriction-gadget goes through $b_i^j$.}
        \label{fig:SD_TTD_restriction_gadget}
    \end{figure}
    
    % \bigparagraph{$\rightarrow$ \naesat{3} non-strict $2$-\TTEEC}
    \bigparagraph{($\Rightarrow$)\quad} % \naesat{3} non-strict $2$-\TTEEC}
    Let $\beta$ be a satisfying assignment for $\varphi$. For a clause $C_i$, let $L_{i,T}$ be the literal in $C_i$ that is satisfied and $L_{i,F}$ be literal that is not satisfied. We construct the two trails as follows. 
            
    The first trail starts in $s_1$ at time step one it follows the first trail-restriction-gadget (which is a path) to $v_1^{start}$. Then at time step three, it goes from $v_1^{start}$ to $v_\circ^0$, for every $i\in [h]$, if $\beta(X_i) = 1$, then the first path follows $P^T_i = (v_\circ^{i-1}, T_1^i, T_2^i, \ldots, T_\ell^i, v_\circ^{i-1})$ and if $\beta(X_i) = 0$ the first path follows $P^F_i = (v_\circ^{i-1}, F_1^i, F_2^i, \ldots, F_\ell^i, v_\circ^{i-1})$. Whenever the first path visits a vertex $L_i^j\in \{T^j_i, F^j_i\}$ that represents the literal $L_{i,T}$ (vertex $T^j_i$, if $L_{i,T} = X_j$  and $F^j_i$, if $L_{i,T} = \overline{X_j}$), then the first path makes a detour that starts and end in $L_i^j$ and is as follows. Let us assume that $X_j$ is the minimum variable in $C_i$, so $L_i^j$ is adjacent to $c_i^1$ and $c_i^2$, the other cases are symmetric. The detour of first trail does the following cycle inside the clause-gadget for $C_i$ (see the red dotted cycle on right in Figure~\ref{fig:SD_TTD_clause_gadget}): It starts with $L_i^j, c_i^2, a_i^1, c_i^3$, afterwards if the vertex $L^q_i$ adjacent to $c_i^3$ and $c_i^4$ represents the literal $L_{i,F}$, then the first trail goes directly to $c_i^4$, else it takes the edge to $L^q_i$ and only then the edge to $c_i^4$, then it continues to $a_i^2$ and $c_i^5$. 
    From $c_i^5$ it again either reaches $c_i^6$ directly if vertex $L^r_i$ adjacent to $c_i^5$ and $c_i^6$ represents the literal $L_{i,F}$ or through the vertex $L^r_i$. The detour finishes by taking $c_i^6, a_i^3, c_i^1, L_i^j$. 
    Note that the remaining edges in the clause-gadget for $C_i$ that are not covered by the first trail form a (directed) cycle that contains a vertex representing the literal $L_{i,F}$ and none of the edge on this cycle is active at time step two (i.e., they can be covered by the second trail). The first trail finishes by taking the edge $((v_\circ^h,v_1^{end}),3)$. 

    The second trail is analogous to the first one. 
    It starts in $s_2$, covers all the edges on the path from $s_2$ to $v_2^{start}$ that are active in time step two. 
    At time step three, it goes from $v_2^{start}$ to $v_\circ^0$, for every $i\in [h]$, if $\beta(X_i) = 0$, then the second trail follows $P^T_i$ ($P^F_i$ is covered by the first trail) and if $\beta(X_i) = 1$ the second path follows $P^F_i$ ($P^T_i$ is covered by the first trail). 
    Whenever the second trail visits vertex $L_i^j$ representing the literal $L_{i,F}$, then it makes a detour to cover the (directed) cycle left in the clause-gadget for $C_i$ by the first trail that contains $L_i^j$ (see the blue dashed cycle on the right in Figure~\ref{fig:SD_TTD_clause_gadget}). The second trail finishes by taking the edge $((v_\circ^h,v_2^{end}),3)$. 

    Given the two trails, it is rather straightforward to verify that each of them is indeed a temporal trail and that they cover the edges of $\gcal$ exactly. 
    
    % \bigparagraph{$\leftarrow$ Non-strict $2$-\TTEEC to \naesat{3}}
    \bigparagraph{($\Leftarrow$)\quad}
        We start with some general observations about the two trails and then show how to obtain a satisfying assignment for $\varphi$. Observe that at time step three, there are four vertices of degree one -- $v_1^{start}$, $v_2^{start}$, $v_1^{end}$, $v_2^{end}$. Therefore, after time step two each the two temporal trails have to be two different vertices among these four. 
        In addition, $v_1^{end}$, $v_2^{end}$ have degree zero at time steps one and two. 
        We now argue that one trail has to start at time step one in $s_1$, cover all the edge active at time step one and end in $v_1^{start}$ at the beginning of time step two 
        and the second start at time step two in $s_2$, covers all the edges active at time step two and end in $v_2^{start}$ at the beginning of the time step three.
        The graph induced by time step one is a path between $s_1$ and $v_1^{start}$. If there is only one trail active at time step one, then it has be go between $s_1$ and $v_1^{start}$, it the directed case, it has to follow the direction, start in $s_1$ and end in $v_1^{start}$. 
        In the undirected case, it could start in $v_1^{start}$ and finish in $s_1$, but $s_1$ has degree zero at time steps two and three, hence it contradicts above claim that each of the two paths is at one of the vertices $v_1^{start}$, $v_2^{start}$, $v_1^{end}$, $v_2^{end}$ at the start of time step three. 
        The other possibility to cover the path at time step one is for one trail to start at $s_1$ and reach some vertex $v$ and the other to start at $v$ are reach $v_1^{start}$ or start at $v_1^{start}$ and reach $v$ (as we already argued neither of the two trails can finish time step one at $s_1$). 
        In either case, to cover all edges at time step two, and because $s_2$ has degree one at time step two, one of the two trails would have to finish the time step two at $s_2$, which is a contradiction as $s_2\notin \{v_1^{start},v_2^{start},v_1^{end},v_2^{end}\}$. Note that the directed case is even simpler, as a path has to start at $s_2$ to cover the temporal edge $((s_2, c_1^2),2)$. It follows that in order to exactly cover $\gcal$ by at most two trails.
    \begin{itemize}
        \item The first trail starts in $s_1$ covers all edges active at time step one and reaches $v_1^{start}$ at the end of time step one, where it waits at time step two and from where it continues at time step three. At time step three, the first trail finishes at either $v_1^{end}$ or $v_2^{end}$.
        \item The second trail starts in $s_2$ covers all edges active at time step two and reaches $v_2^{start}$ at the end of time step two, from where it continues at time step three. At time step three, the first trail finishes at either $v_2^{end}$ or $v_1^{end}$.
    \end{itemize}
    Given this, the rest of the proof follows by exactly the same arguments as \cite[Theorem 10]{MarinoEulerian2023}, which we include in order for the proof to be self-contained.

    Let us now argue that in both directed and undirected cases, for all $i\in [h]$ the path $P^T_i = (v_\circ^{i-1}, T_1^i, T_2^i, \ldots, T_\ell^i, v_\circ^{i-1})$ either fully belongs to the first trail or it fully belongs to the second trail. 
    Let us consider an internal vertex $T_j^i$ of this path. 
    Since $T_j^i\notin \{v_1^{start},v_2^{start},v_1^{end},v_2^{end}\}$, it is rather straightforward that both edges on $P^T_i$ incident on $T_j^i$ at time step three belong to the same trail if they are indeed the only two edges incident on $T_j^i$ at time step three. 
    Else, the clause $C_j$ contains $X_i$ as a literal and the degree of $T_j^i$ is four with two neighbours $c_j^{2x-1}, c_j^{2x}$ for some $x\in \{1,2,3\}$. 
    Both  $c_j^{2x-1}, c_j^{2x}$ have degree four - one edge between $c_j^{2x-1}$ and $c_j^{2x}$, one edge between $c_j^{2x-1}$ (resp., $c_j^{2x}$) and $T_j^i$, one edge between $c_j^{2x}$ and $a_j^x$ (resp., $c_j^{2x-1}$ and $a_j^{x-1}$) and one edge between $c_j^{2x}$ and $b_j^x$ (resp., $c_j^{2x-1}$ and $b_j^{x-1}$). 
    Note that edge between $c_j^{2x}$ and $a_j^x$ has to be covered by the first trail and the edge between between $c_j^{2x}$ and $b_j^x$ by the second trail. Same for the edges between $c_j^{2x-1}$ and vertices $a_j^{x-1}$ and $b_j^{x-1}$, respectively.
    Hence, the edge between  $c_j^{2x-1}$ and $c_j^{2x}$ has to be covered by a different trail that the edges between $T^i_j$ and the vertices $c_j^{2x-1}$ and $c_j^{2x}$, respectively. 
    It follows that the edges between $T^i_j$ and the vertices $c_j^{2x-1}$ and $c_j^{2x}$, respectively, have to be covered by the same trail. 
    As $T^i_j$ can only be inner vertex of either of the two trails, also both the edges in $P^T_i$ incident on $T^i_j$ belong to the same trail. Analogous argument shows that for all $i\in [h]$ the path $P^F_i = (v_\circ^{i-1}, F_1^i, F_2^i, \ldots, F_\ell^i, v_\circ^{i-1})$ either fully belongs to the first trail or it fully belongs to the second trail. 
    
    Now let us show that $P^T_i$ belongs to the first trail if and only if $P^F_i$ belongs to the second trail and, vice-versa, $P^F_i$ belongs to the first trail if and only if $P^T_i$ belongs to the second trail.  
    For all $i \in \{0,1,\ldots, h\}$, the vertex $v_\circ^i$ has degree four and can only be an inner vertex of either trail at time step three. Clearly, $((v_1^{start}, v_\circ^0),3)$ belongs to the first trail and  $((v_2^{start}, v_\circ^0),3)$ belongs to the second trail. It follows that edges $((v_\circ^0, T^1_1), 3)$ and $((v_\circ^0, F^1_1), 3)$ belong to different trails and so $P^T_1$ and $P^F_1$ belong to different trails as well. Now by an inductive argument, if $P^T_i$ and $P^F_i$ belong to different trails (i.e., $((T^i_\ell, v_\circ^i, ), 3)$ and $((F^i_\ell,v_\circ^i), 3)$ are in different trails), then $((v_\circ^i, T^{i+1}_1), 3)$ and $((v_\circ^i, F^{i+1}_1), 3)$ belong to different trails and so $P^T_{i+1}$ and $P^F_{i+1}$ belong to different trails as well.

    Let us now define an assignment $\beta$ such that $\beta(X_i)= 1$ if and only if $P_i^T$ belongs to (i.e., is fully contained in) the first trail. Note that if $\beta(X_i)= 1$, then $P_i^F$ belongs to the second trail. Similarly, if $\beta(X_i)= 0$, then $P_i^F$ belongs to the first trail and $P_i^T$ belongs to the second trail. Let $C_j = L_j^p\vee L_j^q\vee L_j^r$ be a clause of $\varphi$. Because the edge $(c_j^2,a_j^1)$ is at time step two in second trail, it has to be at time step three in the first trail. It follows that the first trail has to enter the clause-gadget for $C_j$ at time step three. The only three entry point to the clause-gadget are the vertices representing $L_j^p$, $L_j^q$, and $L_j^r$. The two edge incident with $L_j^p$ (resp., $L_j^q$ or $L_j^r$) that are not in the clause-gadget for $C_j$ are either on $P^T_p$ (resp., $P^T_q$ or $P^T_r$) or $P^F_p$ (resp., $P^F_q$ or $P^F_r$) depending on whether the $L_j^p$ (resp., $L_j^q$ or $L_j^r$) is positive or negative. Since the first trail contains $P^T_i$ if and only if $\beta(X_i) = 1$ and it contains $P^F_i$ if and only if $\beta(X_i) = 0$, it follows that since the first trail enters the clause-gadget for $C_j$, at least one of the literals $L_j^p$, $L_j^q$, and $L_j^r$ has to be true under $\beta$. Analogously, the second trail has to enter the clause gadget for $C_j$, because of the edge $(c_j^2,b_j^1)$ that cannot be covered by the first trail at time step three, as it is covered by the first trail at time step one. It follows that since the second trail enters the clause-gadget for $C_j$, least one of the literals $L_j^p$, $L_j^q$, and $L_j^r$ has to be false under $\beta$. Using the same argument for each clause $C_j$, $j\in [\ell]$, we get that under the assignment $\beta$, in $C_j$ at least one literal evaluates to True and at least one literal evaluate to False, hence $\beta$ is a satisfying assingment for \naesat{3} instance $\varphi$. 
    \end{proof}
    \fi
    
    % We now turn our attention away from the special cases and to the computational complexity of the general problem with larger values of $k$. We observe a divergence in the behavior of the three journey types, which, interestingly, is notably different from the behavior observed for $k=2$.
    % In the upcoming section, we show that paths and trails share the same computational properties, in contrast to walks which we explore in the section after the next, \Cref{sec:walks}.
    Having completely resolved our problem for one and two journeys, we now move on to the computational complexity of the general problem with $k$ at least 3. 
    As we will see, for larger $k$ there is a divergence in the complexity of the problem for the three journey types, which, interestingly, is notably different from the behavior observed for $k=2$.

% \section{EEC With Paths and Trails} %Exact Edge-Cover With 
\section{Paths and Trails -- Computational and Approximation Hardness for $k\geq3$}\label{sec:pathsAndtrails}
We prove that all versions of our problem are hard for both paths and trails, starting with intractability for paths, then augmenting the construction to get a strong inapproximability bound, and finally extending the results to trails. 
    % We use a reduction similar to \citet[Theorem 2]{klobas_interference-free_2023} to show that \TPEECnoK{(N)S-} and \TTEECnoK{(N)S-} of size $k\geq3$ in directed and undirected graphs cannot be computed efficiently unless $\PP=\NP$. %Namely, we show that %these problems are para-NP-hard -- 
    % there cannot exist an FPT algorithm unless $\PP=\NP$. 
    % \iflong Extending this construction, we prove that there also cannot be an $\alpha$-approximation for any constant $\alpha$.\fi 
    \iflong
    \paragraph{Computational Hardness For Paths}
        We prove the %eight
        four claims by providing the detailed reduction for an exact edge-cover with \textit{strict} paths in \textit{directed} graphs in \Cref{thm:SD_TPD} and explaining the adjustments for the other problems variants 
        %for paths and trails 
        in the following  \Cref{lem:D_NS_TPD,lem:UD_SNS_TPD}.
        \begin{theorem}
        \label{thm:TPD_para}\label{thm:SD_TPD}
            %For all $k\geq 3$, 
            \TPEECnoK{S-} is \NP-hard on directed temporal graphs.
        \end{theorem}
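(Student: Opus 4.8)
The plan is to reduce from \sat{3}, establishing hardness already for the fixed value $k=3$; this is stronger than plain \NP-hardness of the (input-$k$) problem and, by padding with $k-3$ pairwise-disjoint single-edge components (each of which must form its own path), it also yields hardness for every constant $k\ge 3$, matching the $k\ge 3$ row of \Cref{tab:overview_table}. Given a formula $\varphi$ with variables $x_1,\dots,x_h$ and clauses $C_1,\dots,C_\ell$, I would build a directed temporal graph $\gcal$ together with a labelling in which every edge carries a single time label that strictly increases along its direction. Under such a labelling a strict temporal path is exactly a vertex-simple directed path whose labels increase along it, so the question collapses to: can $\ecal$ be partitioned into at most three vertex-simple, label-monotone directed paths covering each temporal edge exactly once? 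This removes the temporal bookkeeping and lets vertex-simplicity carry the combinatorial load.

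The graph would have two parts. The \emph{assignment gadget} is a spine $v^0,v^1,\dots,v^{h+1}$ in which the segment between $v^{i-1}$ and $v^i$ splits into a \emph{true}-branch (through literal-vertices $T^i_\bullet$) and a \emph{false}-branch (through $F^i_\bullet$) that rejoin at $v^i$; a designated path running along the spine is forced to commit, at each variable, to exactly one branch, while the complementary branch must be picked up by another path. This is the same ``choice at a hub'' phenomenon exploited for $k=2$, except that with a third path available the per-variable choices are no longer globally coupled into a \sat{2} instance, which is precisely what opens the door to encoding \sat{3}. The \emph{satisfaction gadget} is a chain of clause vertices $c_1,\dots,c_\ell$ reached from $v^{h+1}$ (via the two $v^{h+1}\to c_1$ edges), where each clause gadget contains edges that can be covered only if a path enters it through a connector attached to a literal set so as to satisfy $C_j$. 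All labels are chosen to increase consistently so that every intended route is a legal strict temporal path.

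For the forward direction I would take a satisfying assignment, route the spine path along the branch dictated by each variable's truth value, route the complementary branches and the clause chain through the connectors of satisfied literals, and verify that the outcome is exactly three strict temporal paths with every temporal edge covered once and no vertex repeated within a path. For the reverse direction, given a path eec of size at most three, I would read off the value of $x_i$ from which branch of the $i$-th diamond lies on the spine path; this is well defined because both branches must be covered, they share the endpoints $v^{i-1},v^i$, and a vertex-simple path cannot traverse both. I would then argue that, since each clause gadget is fully covered, some incident literal connector was used, forcing that literal true, so $\varphi$ is satisfied.

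The main obstacle is the clause gadget: it must be exactly coverable by the three available paths \emph{iff} at least one literal is satisfied, and it must admit no ``cheating'' cover that bypasses the connectors or that blurs a variable's assignment. Concretely, I expect the delicate points to be (i) fixing the start/end terminals and the in/out-degrees at the gadget vertices so that the cover is forced to use exactly three paths, each entering and leaving every vertex at most once, and (ii) ruling out spurious routings in which a single path threads through both the true- and false-branch of some variable, or in which a clause gadget is covered without a satisfying literal. Since membership in \NP\ is already established in the excerpt, establishing these structural invariants---most cleanly by a case analysis on the in/out-degrees at the gadget vertices, analogous to the hub analysis used for $k=2$---is what completes the hardness proof.
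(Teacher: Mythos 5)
Your high-level architecture matches the paper's (an assignment gadget of variable diamonds traversed by two complementary paths, followed by a chain of clause vertices with literal branches), but there is a genuine gap, and it sits exactly where you flag ``the main obstacle'': the clause gadget is not merely delicate to design, it cannot be made to encode plain \sat{3} with $k=3$ under your architecture. Since temporal paths are vertex-simple, each path visits a clause vertex $c_j$ at most once, so it can cover at most one of the three literal branches between $c_j$ and $c_{j+1}$; covering all three branches therefore forces all three paths to thread through every clause vertex, and one of them is necessarily the complementary-assignment path $p_2$. A branch attached to a literal can be covered by the spine path $p_1$ only if that literal is \emph{true} under the encoded assignment, and by $p_2$ only if that literal is \emph{false} (whichever of the two symmetric connector conventions you choose, since $p_1$ and $p_2$ traverse complementary sides of every variable diamond). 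Hence a size-$3$ cover exists if and only if every clause contains at least one true \emph{and} at least one false literal: your construction decides \naesat{3}, not \sat{3}. Concretely, the forward (completeness) direction of your proof breaks on any satisfying assignment that makes all three literals of some clause true --- then $p_2$ has already visited all three branch vertices of that clause gadget, cannot even pass from $c_j$ to $c_{j+1}$, and a fourth path becomes unavoidable. This mismatch is precisely why the paper's proof reduces from \naesat{k} rather than from \textsc{SAT}: there, $p_1$ threads a satisfied-literal branch, $p_2$ a falsified-literal branch, and the remaining $k-2$ fresh paths (started at extra start-vertices feeding into $c_1$) take the rest.

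The repair is to change the source problem, not the gadgets: reduce from \naesat{3} (\NP-complete by Schaefer~\cite{schaefer1978complexity}), attach the connector for a positive literal $X_\alpha$ of clause $C_j$ to the False-path vertex $F^\alpha_j$ and for a negative literal to the True-path vertex $T^\alpha_j$, and both directions of your argument then go through essentially verbatim, including your well-defined read-off of the assignment in the reverse direction. Your padding idea (adding $k-3$ pairwise-disjoint single edges, each demanding its own path) is a valid way to lift hardness from $k=3$ to every constant $k\geq 3$; the paper instead achieves this directly by reducing from \naesat{k} with $k$ parallel literal branches per clause, which avoids disconnected instances.
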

        \begin{proof}
We reduce from \naesat{k}.
    Let $\varphi$ % = \bigwedge_{i\in[\ell]}(\bigvee_{j\in[k]}L_{i,j})$
    be an instance of  \naesat{k} with $h$ variables $X_1,\dots,X_h$ and $\ell$ clauses. We construct a directed temporal graph $\gcal := \tuple{G, \ecal}$.% which will have a path eec of size $k$ if and only if $\varphi$ is satisfiable.
    
    Intuitively, \gcal consists of an \textit{assignment-gadget} constructed out of two paths and a \textit{satisfaction-gadget} constructed out of $k$ paths. If $\varphi$ is satisfiable and the two paths in the assignment-gadget are chosen accordingly, they will be extendable in the satisfaction-gadget. If $\varphi$ is not satisfiable, any path eec will need one additional path to cover all edges exactly once and \gcal has a minimal path eec of size $k+1$.
    For an illustration refer to \Cref{fig:paraNPhard}.

    % \[V = \left\{ v^{start}_p \colon 0\leq p\leq k \right\}\cup \left\{v_\circ^{start},v_\circ^i \colon i\in[h]\right\}\cup \bigcup_{i\in[h]} \left\{T^i_j,F^i_j\colon j\in[\ell]\right\} \cup \left\{c_i\colon 1\leq i\leq \ell+1\right\}\cup \left\{ v_p^{end} \colon 0\leq p\leq k \right\}\]
    % We now give a detailed description of the constructed \gcal, first giving all vertices and then all edges in temporal order with the precise time-labels.
    %
    For each variable $X_i$, we construct a \textit{variable-gadget} with vertices $\{v^i,v^{i+1}\}\cup\{T_j^i,F_j^i \colon j\in[\ell]\}$ and static edges 
    $E_T^{x_i} = 
    \left\{ \{(v_i,T^i_1)\} \cup \{ (T^i_j,T^i_{j+1}) \colon j\in[\ell] \right\} \cup \{(T^i_\ell,v_{i+1})\}$ and
    $
    E_F^{x_i} = 
    \left\{ \{(v_i,F^i_1)\} \cup \{ (F^i_j,F^i_{j+1}) \colon j\in[\ell] \right\} \cup \{(F^i_\ell,v_{i+1})\}$. 
    % We connect the start-vertices $v_1^{start}$ and $v_2^{start}$ with $v^1$.
    All variable-gadgets together form the \textit{assignment-gadget}.

    Let $C_i$ be a clause containing $k$ literals.
    For each literal $L_\alpha$ in $C_i$, let $X_\alpha$ be the corresponding variable.
    We construct a \textit{clause-gadget} with vertices $\{c_{i},c_{i+1}\}$ and static edges $\{ (c_i,F^\alpha_i), (F^\alpha_i,c_{i+1})\}$ for every positive literal $L_\alpha\equiv X_\alpha$ and edges
    $\{ (c_i,T^\alpha_i),(T^\alpha_i,c_{i+1})\}$ for every negative literal $L_\alpha\equiv\overline{X_\alpha}$.
    All clause-gadgets together form the \textit{satisfaction-gadget}.

    \begin{figure*}[t]
        \centering
        \includegraphics[width=0.7\textwidth]{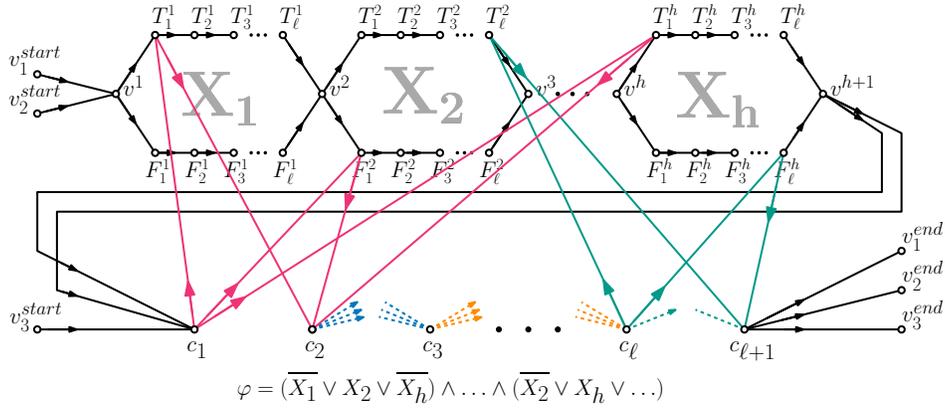}
        \caption{Illustration of the construction for the \NP-hardness reduction in \Cref{thm:SD_TPD} for $k=3$. The assignment-gadget is on the top and the satisfaction-gadget is on the bottom. The two edges $(v^{h+1},c_1,2h(\ell+1)+1),(v^{h+1},c_1,2h(\ell+1)+2)$ connecting the assignment- and satisfaction-gadget are drawn as two separate edges. With this each edge has exactly one time label increasing with the direction of the edges.
        }
        \label{fig:paraNPhard}
    \end{figure*}
    Lastly, we add $k$ many \textit{start-vertices} and \textit{end-vertices} $v^{start}_i,v^{end}_i, i\in[k]$.
    We connect the start-vertices $v_1^{start}$ and $v_2^{start}$ with $v^1$ and the remaining $k-2$ start-vertices with the first clause-vertex $c_1$.
    We connect the assignment-gadget and the satisfaction-gadget by adding two edges $(T^h_\ell,c_1)$ and $(F^h_\ell,c_1)$.
    And lastly, we connect the last clause-vertex $c_{l+1}$ with the $k$ end-vertices $v_1^{end},\dots v_k^{end}$.
    Again, refer to the illustration in \Cref{fig:paraNPhard}. 

    To each edge, we incrementally add distinct time labels going through the variable-gadgets first and the clause-vertices next -- following the direction of the edges. For readability, we give the edges going out of the start-vertices all time-label zero. We now present all time labels:
    
    Let $D_i=i\cdot 2(\ell+1)$ be the largest time-label added for variable $X_i$, \ie $D_0=0$. For each variable $X_i$ add two incremental temporal paths, namely, the True-path $\tdirect{v^{i}}{T^i_1}{D_{i-1}+1}$, $\tdirect{T^i_j}{T^i_{j+1}}{D_{i-1}+j+1}$ for $1\leq j<\ell$ and $\tdirect{T^i_\ell}{v^{i+1}}{D_{i-1}+\ell+1}$, and the False-path $\tdirect{v^{i}}{F^i_1}{D_{i-1}+\ell+2}$, $\tdirect{F^i_j}{F^i_{j+1}}{D_{i-1}+\ell+j+1}$ for $1\leq j<\ell$ and $\tdirect{F^i_\ell}{v^{i+1}}{D_{i-1}+2(\ell+1)}$.

    Next, add labels to the connections $\tdirect{v^{h+1}}{c^1}{D_h+1}$, $\tdirect{v^{h+1}}{c^1}{D_h+2}$ and let $E=D_h+3$ be the first time label of the satisfaction-gadget.
    For every clause $C_i=L_{i,1}\vee\dots\vee L_{i,k}$ with $1\leq i\leq \ell$ and $1\leq j\leq k$ we add the following labels.
    If $L_{i,j}$ is positive we add $\tdirect{c_i}{F^j_i}{E+(i-1)2k+2(j-1)}$ and $\tdirect{F^j_i}{c_{i+1}}{E+(i-1)2k+2j}$.
    If the literal $L_{i,j}$ is negative, we add $\tdirect{c_i}{T^j_i}{E+(i-1)2k+2(j-1)}$ and $\tdirect{T^j_i}{c_{i+1}}{E+(i-1)2k+2j}$.
    This forms $k$ many paths from $c_i$ to $c_{i+1}$.
    Lastly, we add label $E+(\ell+1)2k+j$ to the edge from $c_{\ell+1}$ to end-vertex $v^{end}_j$.

    % \begin{enumerate}
    % %     \item Add $\tdirect{v^{start}_1}{v^{0}_\circ}{0}$ and $\tdirect{v^{start}_2}{v^{0}_\circ}{-1}$.
    %     \item For every variable $X_i$ with $1\leq i\leq h$, let $D_i=i\cdot 2(\ell+1)$ be the largest time-label added for variable $X_i$, and add
    %     \begin{enumerate}
    %         \item $\tdirect{v^{i-1}_\circ}{T^i_1}{D_{i-1}+1}$ and $\tdirect{v^{i-1}_\circ}{F^i_1}{D_{i-1}+\ell+2}$,
    %         \item $\tdirect{T^i_j}{T^i_{j+1}}{D_{i-1}+j+1}$ and $\tdirect{F^i_j}{F^i_{j+1}}{D_{i-1}+\ell+j+1}$ for each $1\leq j<\ell$, and 
    %         \item $\tdirect{T^i_\ell}{v^i_\circ}{D_{i-1}+\ell+1}$ and $\tdirect{F^i_\ell}{v^i_\circ}{D_{i-1}+2(\ell+1)}$.
    %     \end{enumerate} 
    %     \item Let $E=D_h+1$ and add $\tdirect{v_\circ^h}{c^1}{E+1}$, $\tdirect{v_\circ^h}{c^1}{E+2}$, and $\tdirect{v_j^{start}}{c^1}{E+j}$ for all $3\leq j\leq k$.
    %     \item For every clause $c_i=L_{i,1}\vee\dots\vee L_{i,k}$ with $1\leq i\leq \ell$ and $1\leq j\leq k$, \begin{itemize}
    %         \item if $L_{i,j}$ is positive add $\tdirect{c_i}{F^j_i}{E+k+(i-1)2k+2(j-1)}$ and $\tdirect{F^j_i}{c_{i+1}}{E+k+(i-1)2k+2j}$,
    %         \item if $L_{i,j}$ is negative add $\tdirect{c_i}{T^j_i}{E+k+(i-1)2k+2(j-1)}$ and $\tdirect{T^j_i}{c_{i+1}}{E+k+(i-1)2k+2j}$. %end E+k+(\ell+1)*2k
    %         \end{itemize}
    %     \item Add $\tdirect{c_{\ell+1}}{v^{end}_j}{E+k+(\ell+1)2k+j}$, for $1\leq j\leq k$.
    % \end{enumerate}
    This concludes the construction, which can be done in polynomial time.    
    Note that every static edge has exactly one distinct time label. Therefore, there is a distinct order for the edges and we omit time-labels during the analysis. % when they are irrelevant for the argument.
    
    \bigparagraph{($\Rightarrow$)\quad} % \naesat{k} to strict directed $k$-\TPEEC.}
    Let $\beta$ be a satisfying assignment for $\varphi$. We construct $k$ paths, starting with $p_1$ and $p_2$ that cover 
    %$v_1^{start}$ and $v_2^{start}$, respectively, by adding the edges to the connector-vertex 
    $(v_1^{start},v^1)$ and $(v_2^{start},v^1)$, respectively.

    For each variable $X_i$, $i\in[h]$, we extend $p_1$ and $p_2$ according to $\beta$.
    If $\beta(X_i)=1$, we extend
    $p_1$ by $(v^{i},T^i_1)$, $\{(T^i_j,T^i_{j+1})\colon j\in[\ell]\}$, $(T^i_\ell,v^{i+1})$ and
    $p_2$ by $(v^{i},F^i_1)$, $\{(F^1_j,F^1_{j+1})\colon j\in[\ell]\}$, $(F^i_\ell,v^{i+1})$.
    % the True-path, \ie $(v_\circ^{i-1},T^i_1)$, all $(T^i_j,T^i_{j+1})$ for $1\leq j<\ell$, and $(T^i_\ell,v_\circ^i)$
    Invertedly, if $\beta(X_i)=0$, we extend $p_1$ over the False-vertices and $p_2$ over the True-vertices.
    After processing the last variable $X_h$, both paths are extended by the edge to $c_1$ and we construct the other $k-2$ paths $p_3,\dots,p_k$ starting with $(v_3^{start},c_1),\dots,(v_k^{start},c_1)$, respectively.
    
    Since $\beta$ is a satisfying assignment for the \naesat{k} formula, for each clause $C_i$ there is one satisfied literal and one unsatisfied literal. We assume without loss of generality that these are the first and the second literal $L_{i,1}$ and $L_{i,2}$.
    If $L_{i,1}=X_j$ is a positive literal, we extend $p_1$ by $(c_i,F^j_1)$ and $(F^j_1,c_{i+1})$. Since $p_1$ visited the True-vertices of $X_j$, it has not visited $F^j_1$ before and this does not violate the path condition. Conversely, if $L_{i,1}=\overline{X_j}$ is a negative literal, we extend $p_1$ by $(c_i,T^j_1)$ and $(T^j_1,c_{i+1})$. We proceed in the same fashion for $p_2$ and $L_{i,2}$.    
    The remaining $k-2$ outgoing edges of the vertex $c_i$ can be covered by $p_3$ to $p_{k}$ since these paths did not visit any vertices outside of the satisfaction-gadget.
    We end every path $p_i$ with $(c_{\ell+1},v^{end}_i)$.
    
    With this every edge is covered exactly once and $\pcal=\{p_1,\dots,p_k\}$ is a path exact edge-cover of size $k$.
    
    \bigparagraph{($\Leftarrow$)\quad} % Strict directed $k$-\TPEEC to \naesat{k}.}
    Let $\mathcal{P}=\{p_1,\dots,p_k\}$ be a path eec of size $k$ for \gcal. We define an assignment $\beta$ for $\varphi$ by $\beta(X_i)=1$ if and only if $p_1$ visits $T^i_1$ via $(v^{i},T^i_1)$.
    
    Observe that the assignment-gadget with the start-vertices $v^{start}_1$ and $v^{start}_2$ form a temporal subgraph of \gcal with a path eec of size $2$. It consists of two edge-disjoint paths $p_1,p_2$ that each contain one of the two False-/True-paths connecting consecutive variable-gadgets.
    Furthermore, there have to be at least $k-2$ additional temporal paths starting in $v_i^{start}$, $3\leq i \leq k$, as these vertices have out-degree one and no incoming edges.
    \iflong
    \begin{figure*}[h]
        \centering
        \includegraphics[width=0.7\textwidth]{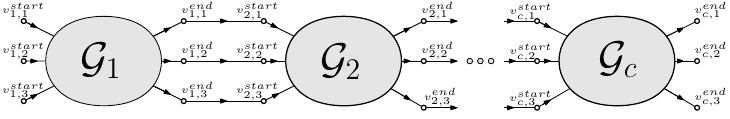}
        \caption{Illustration of the construction used to prove \Cref{thm:approx_TPD} with $c$ copies $\gcal_1,\dots,\gcal_c$ of $\gcal$.
        The last clause vertex $c_{i,l+1}$ of $\gcal_i$ is connected twice to the first variable gadget via $v_{i+1,1}$ and $k-2$ times to $c_{i+1,1}$ from $\gcal_{i+1}$.
        }\label{fig:approx_TPD_proof}
    \end{figure*}
    \fi    
    In the satisfaction gadget, $k$ paths meet in the clause-vertex $c_1$ and each has to cover one outgoing edge of $c_1$. % per clause.
    Since $\mathcal{P}$ is an eec of size $k$, $p_1$ and $p_2$ have to be paths. So, for each clause-vertex $c_i$, there is one outgoing edge corresponding to a literal $L_{i,j_1}$ that is covered by $p_1$ and another corresponding to a literal $L_{i,j_2}$ that can be covered by $p_2$.
    The former implies $L_{i,j_1}$ is satisfied while the latter implies $L_{i,j_2}$ is not satisfied:
    % The former implies that 
    For the former, $p_1$ must have visited the True-path of the variable-gadget of $X_1$ if $L_{i,j_1}=X_1$ and the False-path of the variable-gadget of $X_1$ if $L_{i,j_1}=\overline{X_1}$. Therefore, $\beta(X_1)=1$ if $L_{i,j_1}=X_1$ and $\beta(X_1)=0$ if $L_{i,j}=\overline{X_1}$.
    % In summary, the literal $L_{i,j_1}$ is satisfied.
    % The latter implies that $L_{i,j_2}$ is not satisfied, as 
    For the latter, $p_2$ must have visited the True-path of the variable-gadget of $X_2$ if $L_{i,j_2}=X_2$ and $p_1$ visited the False-path. Analogously, if $L_{i,j_2}=\overline{X_2}$. Therefore, $\beta(X_2)=0$ if $L_{i,j_2}=X_2$ and $\beta(X_2)=1$ if $L_{i,j_2}=\overline{X_2}$. %In summary, the literal $L_{i,j_2}$ is not satisfied.    
    For each clause there is one satisfied and one unsatisfied literal, so the assignment with $\beta(X_i)=1$ if and only if $p_1$ visits $T^i_1$, satisfies the \naesat{k} formula.
    \end{proof}
    
    Since each time label in the construction appears exactly once, every non-strict temporal path is also a strict path, which directly implies \NP-hardness for \TPEECnoK{NS-} on directed graphs.
    \begin{corollary} \label{lem:D_NS_TPD}
        % For all $k\geq 3$, 
        \TPEECnoK{NS-} is \NP-hard on directed temporal graphs.
    \end{corollary}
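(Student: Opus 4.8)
The plan is to reuse the reduction from \Cref{thm:SD_TPD} verbatim, observing that it already produces an instance in which the strict and non-strict notions of a path coincide. Recall that in that construction every static edge is assigned a single, \emph{distinct} time label, so no two temporal edges of \gcal share a label. The only thing I need to check is that this label-uniqueness forces every non-strict temporal path in \gcal to actually be strict. Indeed, if $(e_i,t_i)$ and $(e_{i+1},t_{i+1})$ are consecutive edges of a non-strict path, then the non-strict condition gives $t_i \leq t_{i+1}$; since $e_i$ and $e_{i+1}$ are distinct temporal edges and all labels are distinct, we have $t_i \neq t_{i+1}$, and hence $t_i < t_{i+1}$. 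Thus any valid non-strict path is automatically a valid strict path, and conversely every strict path is trivially non-strict.

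Consequently, the set of non-strict path exact edge-covers of \gcal equals the set of strict path exact edge-covers of \gcal, so \gcal admits a non-strict path eec of size $k$ if and only if it admits a strict one. By \Cref{thm:SD_TPD} the latter holds if and only if the underlying \naesat{k} instance $\varphi$ is satisfiable. Since the reduction is computed in polynomial time, this establishes \NP-hardness of \TPEECnoK{NS-} on directed temporal graphs. There is no genuine obstacle here: the entire argument hinges on the single syntactic observation that distinct labels collapse $t_i \leq t_{i+1}$ into $t_i < t_{i+1}$, so the hardness transfers directly from the strict case with no change to the gadgets or the correctness proof.
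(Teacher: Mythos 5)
Your proposal is correct and is exactly the paper's own argument: the paper derives this corollary from \Cref{thm:SD_TPD} via the same observation that the time labels in that construction are distinct, so every non-strict temporal path is automatically strict and the hardness transfers without modifying the gadgets. (The one shared label in the construction, the label zero on the edges leaving the start-vertices, is harmless for both you and the paper, since start-vertices have no incoming edges and thus two such edges can never appear consecutively on a path.)
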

    For undirected graphs, the \NP-hardness follows from the distinct temporal order in which the edges appear. Observe that all edges in the assignment-gadget appear  before all edges of the clause-gadget, which enforces a direction on any eec of minimal size, regardless of whether the paths are strict or non-strict. So, replacing every directed edge by an undirected edge yields a construction on undirected graphs.
    \begin{corollary} \label{lem:UD_SNS_TPD}
        % For all $k\geq 3$, 
        \TPEECnoK{(N)S-} are \NP-hard on undirected temporal graphs.
    \end{corollary}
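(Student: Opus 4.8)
The plan is to reuse the directed construction of \Cref{thm:SD_TPD} essentially unchanged: I would take the temporal graph \gcal built there from a \naesat{k} instance $\varphi$ and replace every directed edge $\tdirect{u}{v}{t}$ by the undirected edge $\tundirect{u}{v}{t}$ carrying the same label. This transformation is clearly still polynomial, so the only thing to verify is that a minimum-size path eec in the undirected graph behaves exactly as in the directed one. The engine driving this is the fact, already exploited in \Cref{lem:D_NS_TPD}, that within the gadgets the static edges carry distinct time labels that increase along the intended forward direction.

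The key observation I would isolate is a \emph{monotonicity lemma}: if $e_1,\dots,e_r$ is a chain of edges sharing consecutive endpoints whose labels strictly increase, then any temporal journey (strict or non-strict) that traverses all of $e_1,\dots,e_r$ must traverse them in order of increasing label, i.e.\ in the forward direction. This is immediate from the definition of a temporal journey, whose labels are non-decreasing: entering such a chain from its high-label end would force a later edge of the journey to carry a smaller label, a contradiction. Applied to the True-/False-paths of each variable-gadget, to the connector edges into $c_1$, and to each clause-gadget chain, this shows that every temporal path in the undirected graph is forced to move ``downstream'' exactly as the directed orientation prescribed.

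With the monotonicity lemma in hand, the correspondence is straightforward. The degree-one start-vertices $v_1^{start},v_2^{start}$ (and the $k-2$ further start-vertices feeding $c_1$) can only host an endpoint of a path, and since their incident edges are the earliest, these paths must start there and proceed forward. Hence any undirected path eec of size $k$ induces on every edge the same orientation as in the directed construction, so the ($\Rightarrow$) and ($\Leftarrow$) arguments of \Cref{thm:SD_TPD} transfer verbatim, giving \NP-hardness for \TPEECnoK{S-} on undirected graphs. Because all relevant labels are distinct, a non-strict path is automatically strict, so the identical construction also settles \TPEECnoK{NS-}, establishing both parts of the corollary.

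The step I expect to be the main obstacle is ruling out ``backward'' traversals at the branching vertices---$v^i$, $c_i$, and the literal-vertices $T^i_j,F^i_j$---where several chains meet and an undirected path could in principle try to leave along an edge of smaller label. The monotonicity lemma handles the interior of each chain, but I would need a short case analysis at these junctions to confirm that no temporal path can enter a chain from its high-label side and thereby produce a covering that is infeasible for the directed version; once that is settled, the minimum eec size is preserved and the reduction is complete.
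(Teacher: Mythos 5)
Your proposal is correct and takes essentially the same route as the paper: the paper's proof of this corollary is precisely the observation that replacing every directed edge of the \Cref{thm:SD_TPD} construction by an undirected edge with the same label works because the distinct time labels, which increase along the intended direction (all assignment-gadget edges preceding all clause-gadget edges), force any eec of minimal size to traverse the gadgets forwards, for strict and non-strict paths alike. Your explicit monotonicity lemma, the endpoint accounting at the degree-one start/end vertices, and the junction case analysis you flag as the remaining obstacle simply spell out what the paper compresses into the single remark that the temporal order ``enforces a direction on any eec of minimal size.''
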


    \fi
    \ifshort
% \todo{A: I think this should be inside the proof sketch. Now, we say: OK, we will prove this special case, and then we state a general thm.}
    \begin{theorem}[$\star$]
    \label{thm:TPD_para}\label{thm:SD_TPD}
        % For all $k\geq 3$, 
        \TPEECnoK{(N)S-} is \NP-complete on directed and undirected temporal graphs, for every $k\geq 3$. %\todo{where is $k$? Does the thm hold for every $k\geq 3$? What's $\ell$ in the proof sketch?} 
    \end{theorem}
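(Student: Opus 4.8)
The plan is to prove \NP-hardness by a reduction from \naesat{k}; membership in \NP has already been established. Given a formula $\varphi$ with variables $X_1,\dots,X_h$ and clauses $C_1,\dots,C_\ell$, I would build a directed temporal graph \gcal that admits a path exact edge-cover of size exactly $k$ if and only if $\varphi$ is not-all-equal satisfiable. The construction splits into an \emph{assignment-gadget}, traversed by two distinguished paths, and a \emph{satisfaction-gadget}, entered by all $k$ paths.

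For the assignment-gadget, for each variable $X_i$ I would place two internally vertex-disjoint directed paths between a pair of hub vertices $v^i, v^{i+1}$: a True-path through fresh vertices $T^i_1,\dots,T^i_\ell$ and a False-path through $F^i_1,\dots,F^i_\ell$. Two paths $p_1,p_2$ start at dedicated start-vertices and thread through the variable-gadgets in series. Because the two paths are edge-disjoint and both must traverse each variable-gadget, exactly one of them takes the True-path and the other the False-path; reading off which path uses the True-path of $X_i$ defines a truth assignment $\beta$.

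For the satisfaction-gadget, for each clause $C_i$ I would build $k$ parallel two-edge routes from $c_i$ to $c_{i+1}$, one per literal, where the intermediate vertex of the route for a positive literal $X_j$ is the assignment vertex $F^j_i$ (and $T^j_i$ for a negative literal). The crux is that a path may not revisit a vertex: the assignment-carrying path that traversed, say, the True-path of $X_j$ is still free to cover the clause route through $F^j_i$, \ie it can cover exactly the routes of \emph{satisfied} literals, while its complementary partner can cover exactly the routes of \emph{unsatisfied} literals. The remaining $k-2$ paths start fresh at extra start-vertices and can mop up any leftover routes. The not-all-equal condition---every clause has at least one satisfied and at least one unsatisfied literal---then guarantees that both assignment paths can legally thread every clause-gadget, so $k$ paths suffice; conversely, if some clause is all-true or all-false, one of the two assignment paths is blocked there and a $(k+1)$-st path is forced, so no eec of size $k$ exists.

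Two observations would close the directed and undirected, strict and non-strict cases simultaneously. I would give every static edge a distinct time label, increasing along the intended direction of travel and ordering the entire assignment-gadget before the entire satisfaction-gadget. Then every temporal path is automatically strict, so the (N)S distinction collapses, and the monotone time order pins down a unique orientation of any minimum eec even after all edges are made undirected, yielding the undirected variants for free. The main obstacle I anticipate is the clause-gadget bookkeeping: one must verify that forbidding vertex reuse precisely captures ``$p_1$ covers a satisfied literal and $p_2$ an unsatisfied one,'' and that the auxiliary $k-2$ paths neither force unavoidable vertex collisions in the ``yes'' direction nor permit a cover of size $k$ when $\varphi$ is unsatisfiable.
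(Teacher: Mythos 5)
Your proposal is correct and follows essentially the same route as the paper's proof: a reduction from \naesat{k} with an assignment-gadget of True/False paths per variable traversed by two distinguished paths, a satisfaction-gadget of $k$ two-edge routes per clause through the \emph{complementary} literal vertices, $k-2$ auxiliary paths entering at $c_1$, and distinct increasing time labels that simultaneously collapse the strict/non-strict distinction and force the orientation in the undirected case. The clause-gadget bookkeeping you flag as a risk is handled in the paper exactly as you anticipate: each clause's $k$ routes require $k$ distinct paths, so both assignment paths must thread every clause, and vertex-reuse forbids $p_1$ (resp.\ $p_2$) from covering anything but a satisfied (resp.\ unsatisfied) literal's route.
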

    % \argy{So here I would say (inside the proofsketch): (a) All four results follow the same high level idea and they share (almost) the same (??) construction. On the other hand, each result requires some small adjustments. (b) Our construction follows/is inspired by \cite{klobas_interference-free_2023} blah blah. (c) next we give the high level idea for the ``easiest'' setting of our theorem.}
    % \begin{proof}\textbf{Sketch.\quad}    
    
% \bigparagraph{Construction and proof sketch for non-strict directed paths.}
\begin{proof}[Proof sketch]
    All four constructions\,---\,(non-)strict paths on (un)directed graphs\,---\,follow the same idea, inspired by the \textsc{SAT} reduction in \citet[Theorem 2]{klobas_interference-free_2023}.
    We outline the case of non-strict paths in directed graphs, which is the most concise construction, requiring just two time steps.
    
    We reduce from \naesat{k}, which is known to be NP-complete \cite{schaefer1978complexity}. The input is a formula $\varphi$ over $\ell$ variables in conjunctive normal form where each clause contains exactly $k$ literals. The goal is to check whether the formula is satisfiable such that each clause has at least one literal that is True and one that is False. 
    % the truth values in each clause are not all equal to each other, \ie at least one is set to True and one to False. 
    % We reduce from \naesat{k}.
    % \todo[inline]{define naeksat}
    % To show that we can decide the satisfiability of a \naesat{k} formula $\varphi$ using \TPEECwithK{k}{(N)S-},
    
    We construct a temporal graph \gcal with an exact edge-cover of size $k$ if and only if $\varphi$ is satisfiable; otherwise, the cover will require size $k+1$. See \Cref{fig:paraNPhard} for an illustration. 

    % The construction for exact edge-covers using non-strict paths in directed graphs is as follows:
    For each variable $X_i$, we construct a \textit{variable-gadget} %$\gcal^{x_i}$ 
    with vertices $\{v^{i},v^{i+1}\} \cup \{T^{i}_j,F^i_j\colon j\in[\ell]\}$ and edges %${\ecal_T^{x_i} \cup \ecal_F^{x_i}}$ with 
    $\ecal_T^{x_i} = 
    \left\{ \{(v^i,T^i_1,1)\} \cup \{ (T^i_j,T^i_{j+1}) \colon j\in[\ell] \right\} \cup \{(T^i_\ell,v^{i+1},1)\}$ and
    $
    \ecal_F^{x_i} = 
    \left\{ \{(v^i,F^i_1,1)\} \cup \{ (F^i_j,F^i_{j+1}) \colon j\in[\ell] \right\} \cup \{(F^i_\ell,v^{i+1},1)\}$. All variable-gadgets together form the \textit{assignment-gadget}.

    Let $C_i$ be a clause containing $k$ literals.
    For each literal $L_\alpha$ in $C_i$, let $X_\alpha$ be the corresponding variable.
    % We abuse notation and say literal $L_\alpha \equiv \overline{X_\alpha}$ if the literal is a negation of $X\alpha$ and otherwise we say $L_\alpha\equiv X_\alpha$.  
    We construct a \textit{clause-gadget} with vertices $\{c_{i},c_{i+1}\}$ and edges $\{
    (c_i,F^\alpha_i,2),(F^\alpha_i,c_{i+1},2)\}$ for every positive literal $L_\alpha\equiv X_\alpha$ and 
    edges $\{
    (c_i,T^\alpha_i,2),(T^\alpha_i,c_{i+1},2)\}$ for every negative literal $L_\alpha\equiv\overline{X_\alpha}$. All clause-gadgets together form the \textit{satisfaction-gadget}.
    Lastly, we connect the assignment-gadget and the satisfaction-gadget by adding $\{(v^{h+1},c_1,1),(v^{h+1},c_1,2)\}$.

    Intuitively, the \textit{assignment-gadget} is constructed out of two paths and the \textit{satisfaction-gadget} is constructed out of $k$ paths.
    If $\varphi$ is satisfiable and the two paths in the assignment-gadget are chosen correctly, then they will be extendable to cover two of the $k$ paths of the satisfaction-gadget. The remaining paths can be covered for free. If $\varphi$ is not satisfiable, we will need to create an additional path no matter how the two paths in the assignment-gadget are chosen.
    \end{proof}
    % \qed\vspace{0.5em}

    % For strict paths or undirected graphs we need to change the time labels and slightly adjust the construction. Refer to the full version for these constructions.
    % \end{proof}
    \fi

    \paragraph{Hardness of Approximation.}
    Further extending this construction, we show that there cannot be a polynomial time algorithm that computes a constant $\alpha$-approximation of \TPEECnoK{(N)S-} for $k\geq 3$, unless $\PP=\NP$.
    
    \ifshort
    We do so by connecting a sufficient number of copies of the constructed \gcal  via edges from the end-vertices of the $j^\text{th}$ copy $v_{j,i}^\text{end}$ to the start-vertices of the $j+1^\text{th}$ copy $v_{j+1,i}^\text{start}$, as illustrated in \Cref{fig:approx_TPD_proof} for \sat{3}.
    Each copy increases the number of paths -- necessary to cover all edges if $\varphi$ is unsatisfiable -- by one. If the number of copies is large enough, we would be able to correctly decide whether the \naesat{k} formula $\varphi$ is satisfiable given an $\alpha$-approximation of \TPEECnoK{(N)S-}.
    \fi
    \iflong 
    \begin{theorem}
    \fi
    \ifshort
    \begin{theorem}[$\star$]
    \fi\label{thm:approx_TPD}
        For all $\alpha > 1$, there is no polynomial time $\alpha$-approximation algorithm for \TPEECnoK{(N)S-} on directed and undirected temporal graphs unless $\PP = \NP$.
        % There cannot be an $\alpha$-approximation algorithm for constant $\alpha$ solving \TPEECnoK{(N)S-} in directed or undirected temporal graphs unless $\PP = \NP$.
    \end{theorem}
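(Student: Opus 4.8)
The plan is to amplify the unit gap from \Cref{thm:SD_TPD} into an arbitrarily large multiplicative gap by chaining copies of the hard instance. Recall that the construction of \Cref{thm:SD_TPD} turns a \naesat{k} formula $\varphi$ into a temporal graph admitting a path eec of size $k$ exactly when $\varphi$ is satisfiable, and otherwise requiring size $k+1$; moreover it has $k$ designated start-vertices (sources) and $k$ designated end-vertices (sinks), and all of its time labels precede those of any later copy. Given a constant $\alpha>1$, we would build $\gcal^{(c)}$ from $c$ vertex-disjoint copies $\gcal_1,\dots,\gcal_c$, playing the role of the start-vertices of $\gcal_{j+1}$ with the end-vertices of $\gcal_j$ through connecting edges (two into the first variable gadget and $k-2$ into the first clause-vertex, exactly mirroring how the start-vertices were attached in the single gadget); see \Cref{fig:approx_TPD_proof}. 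We would pick $c := \lfloor(\alpha-1)k\rfloor+1$, a constant, so that the reduction remains polynomial, and carry it out uniformly for all four variants since the base construction already covers (non-)strict paths on (un)directed graphs.

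First I would record the easy upper bound: if $\varphi$ is satisfiable, the $k$ paths produced per copy by the satisfying assignment stitch together across the connecting edges into $k$ global paths running source-to-sink through every copy, so $\mathrm{OPT}(\gcal^{(c)})=k$ (the matching lower bound $\ge k$ being the degree bound from the $k$ sources of $\gcal_1$). The heart of the argument, and the main obstacle, is the lower bound $\mathrm{OPT}(\gcal^{(c)})\ge k+c$ when $\varphi$ is unsatisfiable; since the pure degree bound only yields $k$, this step must genuinely exploit the no-revisit constraint of paths.

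For the lower bound I would use the time-separation of the copies: because every edge of $\gcal_j$ is earlier than every edge of $\gcal_{j+1}$ and all links point forward, each path of an eec visits a contiguous block of copies and meets each copy in a single contiguous segment. A short counting argument shows that exactly $k$ paths cross each copy boundary (there are $k$ connecting edges and no path can use two of them), and that inside every copy the number of path-starts strictly in its interior equals the number of path-ends strictly in its interior. The key structural claim is that if some copy $\gcal_j$ were covered cleanly by $k$ through-segments, each running from an input slot to an output slot, then restricting the eec to $\gcal_j$ would yield a size-$k$ path eec of a single gadget, which by the $(\Leftarrow)$ direction of \Cref{thm:SD_TPD} forces $\varphi$ to be satisfiable. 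Hence for unsatisfiable $\varphi$ each of the $c$ copies must contain an endpoint strictly in its interior, and therefore (by the start/end balance) at least one internal path-end. As the only true sinks lie in $\gcal_c$ and internal ends in distinct copies are distinct, we collect at least $c$ internal ends on top of the $k$ ends at the sinks, giving at least $k+c$ paths in total.

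Finally I would close the gap. An $\alpha$-approximation $A$ for \TPEECnoK{(N)S-} returns a value with $\mathrm{OPT}\le A\le\alpha\cdot\mathrm{OPT}$. If $\varphi$ is satisfiable then $A\le\alpha k<k+c$, while if $\varphi$ is unsatisfiable then $A\ge\mathrm{OPT}\ge k+c$; thus comparing $A$ against the threshold $k+c$ decides \naesat{k} in polynomial time, forcing $\PP=\NP$. I expect the most delicate point to be the interior-endpoint claim for the two boundary copies $\gcal_1$ and $\gcal_c$, together with checking that the restricted segments really constitute a valid single-gadget eec (no revisits, each internal edge covered exactly once); the internal copies follow the same template, and the three remaining variants inherit the whole argument verbatim from their base constructions.
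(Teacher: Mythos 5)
Your proposal is correct and follows essentially the same route as the paper's proof: you chain a constant number $c > (\alpha-1)k$ of copies of the instance from \Cref{thm:SD_TPD} by linking end-vertices to start-vertices, so that satisfiable formulas still admit a size-$k$ cover while unsatisfiable ones force one extra path per copy, and the resulting disjoint value intervals let an $\alpha$-approximation decide \naesat{k}. The only cosmetic differences are that the paper fixes $k=3$ and phrases the pigeonhole step as ``some copy in which no new path begins'' (counting path starts) rather than your counting of interior path ends.
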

    \ifshort
    \begin{figure}[ht]
        \centering
        \includegraphics[width=\columnwidth]{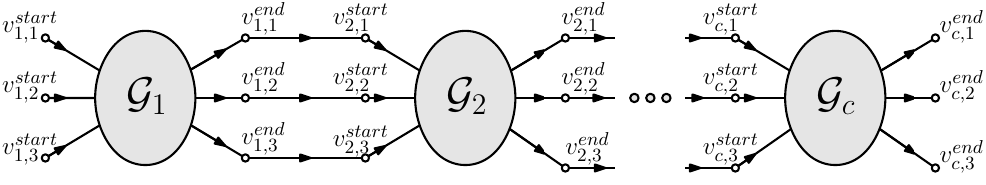}
        \caption{Illustration of \Cref{thm:approx_TPD} with $c$ copies $\gcal_1,\dots,\gcal_c$ of $\gcal$.
        % The last clause vertex $c_{i,l+1}$ of $\gcal_i$ is connected twice to the first variable gadget via $v_{i+1,1}$ and $k-2$ times to $c_{i+1,1}$ from $\gcal_{i+1}$.
        The end-vertices $v_{j,1}^{\text{end}}, v_{j,2}^{\text{end}} ,v_{j,3}^{\text{end}}$ of $\gcal_j$ are connected to start-vertices $v_{j+1,1}^{\text{start}}, v_{j+1,2}^{\text{start}} ,v_{j+1,3}^{\text{start}}$ of $\gcal_{j+1}$.% for all $j $.
        }\label{fig:approx_TPD_proof}
    \end{figure}
    \fi
    \iflong
    \begin{proof}
        First, we show that the reduction from \naesat{k} for $k=3$ implies  there is no $\alpha$-approximation algorithm for \TPEECnoK{(N)S-} with $\alpha<\frac{4}{3}$. We then explain how to extend this to arbitrary $\alpha$.
    
        Assume that an $\alpha$-approximation algorithm for \TPEECnoK{(N)S-} with $\alpha<\frac{4}{3}$ exists. Given a \naesat{3} instance $\varphi$, let \gcal be the constructed temporal graph from \Cref{thm:TPD_para} such that a temporal path exact edge-cover of size 3 for \gcal exists if and only if $\varphi$ is satisfiable.

        Thus, if $\varphi$ is satisfiable, the $\alpha$-approximation algorithm returns a path \eecShort of size $k$ with $3\leq k < 4$, \ie between the optimum $3$ and the worst approximation $3\cdot\alpha=3\cdot\frac{4}{3}=4$.
        Whereas, if $\varphi$ is unsatisfiable, the algorithm returns a path \eecShort of size $k'$ with $k'\geq 4$. As the two intervals are disjoint, one could correctly decide \naesat{3} in polynomial time, which would imply $\PP=\NP$. The claim follows.

        For larger $\alpha$, we concatenate multiple copies of $\gcal$, as each copy increases the difference between the optimal path \eecShort for a satisfiable and an unsatisfiable \naesat{3}-instance: Choose a number of copies $c$ such that $c > 3\cdot (\alpha-1)$. Create $c$ copies $\gcal_1, \dots, \gcal_c$ of \gcal. Iteratively connect $\gcal_i$ to $\gcal_{i+1}$ for increasing $0 < i < c$ as follows:
        Connect the end-vertices $v_{i,j}^{end}$ of $\gcal_i$ with the start-vertices $v_{i+1,j}^{start}$ of $\gcal_{i+1}$ via an edge with time label $t_{max}^{i+1}$ (where $t_{max}^i$ is the largest time label used in $\gcal_i$).
        Furthermore, increase all time labels in $\gcal_{i+1}$ by $t_{max}^{i+1}$. See \Cref{fig:approx_TPD_proof} for an illustration.
        % Connect each of the 3 end-vertices of $\gcal_i$ with a start-vertex of $\gcal_{i+1}$ via an edge of time label $t_{max}^i+1$ (where $t_{max}^i$ is the largest time label used in $\gcal_i$). Furthermore, increase all time labels in $\gcal_{i+1}$ by $t_{max}^i+1$. See \Cref{fig:approx_TPD_proof} for an illustration.
        \ifshort
        \begin{figure}[ht]
            \centering
            \includegraphics[width=\columnwidth]{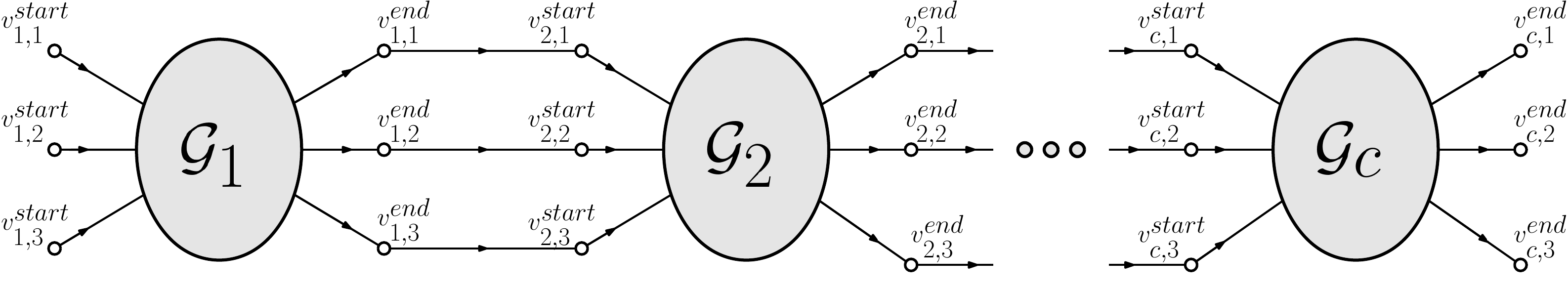}
            \caption{Illustration of the construction used to prove \Cref{thm:approx_TPD} with $c$ copies $\gcal_1,\dots,\gcal_c$ of $\gcal$.
            The last clause vertex $c_{i,l+1}$ of $\gcal_i$ is connected twice to the first variable gadget via $v_{i+1,1}$ and $k-2$ times to $c_{i+1,1}$ from $\gcal_{i+1}$.
            }\label{fig:approx_TPD_proof}
        \end{figure}
        \fi
        If $\varphi$ is satisfiable, each copy $\gcal_i$ can be covered with 3 paths. % and no path needs to begin within a copy. 
        Because end-vertices and start-vertices are connected, the same 3 paths can be re-used for the entire construction: Each $\gcal_i$ for $i>1$ can extend the walks of $\gcal_{i-1}$ instead of beginning new walks. %$\bigcup_{i\in[c]}\gcal_i$. 
        If $\varphi$ is unsatisfiable, at least $3+c$ paths are needed to cover every temporal edge: If there was an \eecShort of smaller size, there would be a copy $\gcal_i$ within which no new path begins, which in turn corresponds to a satisfying assignment for $\varphi$.
        
        Given a satisfiable $\varphi$, an $\alpha$-approximation algorithm returns a path \eecShort of size $k$ with $3 \leq k \leq 3\cdot\alpha$. Given an unsatisfiable $\varphi$, it returns a path \eecShort of size $k'$ with $3+c \leq k' \leq (3+c)\cdot\alpha$.
        Note that $3\cdot\alpha = 3+3(\alpha-1) < 3+c$, meaning the two intervals are disjoint.
        Hence, \naesat{3} could be correctly decided by checking whether the $\alpha$-approximation returns a value smaller than $3+c$, which would imply $\PP=\NP$.
    
        As in \Cref{thm:TPD_para}, this directly extends to the non-strict variant and to undirected temporal graphs.
    \end{proof}
    \fi

    \paragraph{Extending Paths to Trails.}
    It is straightforward to adjust both constructions such that covering clause-edges forces revisiting edges instead of revisiting vertices in the variable-gadget. 
    \ifshort    
    This implies the following.
    \fi
    \iflong
    More formally, we extend the construction of \Cref{thm:TPD_para} by exchanging every boolean-vertex $B^i_j$, for $b=B\in\{T,F\}$, by a \emph{boolean-edge}, \ie two vertices $b^i_j$, $B^i_j$ connected by an edge $(b^i_j,B^i_j)$ and for any incoming edge $(x,B^i_j)$ we put $(x,b^i_j)$ and for any outgoing edge $(B^i_j,x)$ we put $(B^i_j,x)$.
    The time label of the incoming and outgoing edges stays the same and the time label of $(b^i_j,B^i_j)$ has to be after the incoming edge of $B^i_j$ in the assignment-gadget and before the outgoing edge in the assignment-gadget. All later time labels have to be shifted by one for each added edge.
    \fi 
 
    \begin{corollary}\label{thm:TTD_para}
        % For $k\geq3$, 
        \TTEECnoK{(N)S-} are \NP-complete on directed and undirected temporal graphs.
    \end{corollary}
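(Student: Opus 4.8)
The problem lies in \NP{} by the very same certificate-checking argument already given for \TPEECnoK{(N)S-} (each temporal edge is covered once and each journey is a trail, both verifiable in polynomial time), so the task reduces to establishing \NP-hardness for all four variants. The plan is to reuse the reduction from \naesat{k} behind \Cref{thm:SD_TPD} almost verbatim, changing only the gadget sitting at each boolean vertex. The guiding idea is that in the path reduction every boolean vertex $B^i_j\in\{T^i_j,F^i_j\}$ is exactly the place where the \naesat{k} logic is enforced: a path may pass through it only once, so the journey that traverses it in the variable-gadget and the journey that must traverse it again in the clause-gadget are forced to be \emph{distinct}, and this is what pins down which literal each of $p_1,p_2$ covers. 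For trails this single-visit bound no longer holds for free, so I want to re-encode it as a bound on the use of an \emph{edge}, which the trail condition (no static edge reused) does enforce.

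Concretely, I would replace each boolean vertex $B^i_j$ by a \emph{boolean edge}: introduce a twin $b^i_j$, reroute the variable-gadget in-edge of $B^i_j$ so that it enters $b^i_j$, keep the out-edges leaving $B^i_j$, and add a new edge $(b^i_j,B^i_j)$ whose time label is placed strictly between the variable-gadget in- and out-labels of $B^i_j$, shifting all later labels by one per inserted edge so that every static edge keeps a distinct label. After this surgery, traversing the True- or False-path through the spot now \emph{consumes} the boolean edge, so ``having taken this path'' is recorded by an edge being used up rather than by a vertex being visited. With this substitution the two directions of \Cref{thm:SD_TPD} should transfer almost symbol-for-symbol: in the forward direction one builds the same $k$ journeys routed through the boolean edges, and in the backward direction one replaces every instance of ``revisits a vertex'' by ``reuses the boolean edge,'' recovering the same assignment $\beta$ and the same guarantee that each clause has one satisfied and one unsatisfied literal. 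Because labels remain distinct, strict and non-strict coincide, and because all variable-gadget edges precede all clause edges the orientation of every edge of a minimal cover is forced, so the four variants (strict/non-strict, directed/undirected) follow at once, exactly as they did for paths; concatenating copies as in \Cref{thm:approx_TPD} then lifts this to the same $\alpha$-inapproximability statement unless $\PP=\NP$.

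The step I expect to be the genuine obstacle is verifying that the boolean-edge gadget preserves the \emph{exact-cover} accounting and not merely reachability. The whole reason a modification is needed is that, unlike a path, a single trail may revisit a vertex, so without care one trail could cover both the variable-gadget pass and the clause-gadget pass through a former boolean vertex and collapse the correspondence. The delicate work is therefore the local bookkeeping at each split vertex: tracking the in- and out-degrees of $b^i_j$ and $B^i_j$ and the precise time order of the boolean edge relative to the clause edges, and checking that together they force the variable and clause journeys to be distinct precisely where the path argument needs it, while still leaving every edge covered exactly once. Getting this degree-and-timing analysis right for all four variants is where the real effort lies; once it is in place, the reduction and its approximation-hardness extension are immediate.
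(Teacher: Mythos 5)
Your overall plan is the same as the paper's: graft a ``boolean edge'' onto the \naesat{k} reduction of \Cref{thm:SD_TPD} so that the trail condition (no static edge used twice) takes over the role that the path condition (no vertex visited twice) played. However, your description of the surgery has a genuine gap, and it sits exactly at the point you yourself flag as the obstacle. You reroute only the \emph{variable-gadget} in-edge of $B^i_j$ into $b^i_j$ and keep all out-edges -- and, implicitly, the clause in-edge $(c_j,B^i_j)$ -- attached to $B^i_j$. Then the clause route is $c_j \to B^i_j \to c_{j+1}$ and never touches the boolean edge. Since a trail, unlike a path, may revisit vertices, nothing prevents a single trail from traversing, say, the False-path of variable $i$ in the assignment phase (consuming $(b^i_j,B^i_j)$ there) and later also covering the clause pair $(c_j,B^i_j),(B^i_j,c_{j+1})$ coming from the positive literal $X_i$: these are distinct static edges, so the trail condition is never violated. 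Hence the backward direction collapses -- an unsatisfiable formula can still admit a cover of size $k$ -- which is precisely the failure mode the modification was supposed to eliminate. The paper avoids this by rerouting \emph{every} in-edge of $B^i_j$ (clause in-edges included) into $b^i_j$, so the clause route is forced across the static edge $(b^i_j,B^i_j)$.

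Once that is repaired, a second problem appears which your single label (placed strictly between the variable-gadget in- and out-labels) does not survive: the rerouted clause route $c_j\to b^i_j\to B^i_j\to c_{j+1}$ is then not time-respecting, because its middle edge carries an assignment-phase label that precedes the label of $(c_j,b^i_j)$. Every clause in-edge would force a trail to die at some $b^i_j$, so any exact cover needs a number of trails growing linearly in the number of clauses, independently of satisfiability, and the reduction against the threshold $k$ is meaningless. The device that makes the construction work -- the same one the paper already exploits in its $k=2$ trail hardness proof (\Cref{thm:UD_NS_2TTD}) -- is to give the \emph{static} edge $(b^i_j,B^i_j)$ \emph{two} temporal labels: one strictly between the assignment in/out labels and one strictly between the clause in/out labels. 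An exact cover must then cover both temporal copies, both the assignment pass and the clause pass are time-feasible, and no single trail may contain both copies because they are the same static edge. That is the exact replacement for ``no vertex revisited'': the trail making the clause pass at position $(i,j)$ must differ from the trail that made the assignment pass there, after which the two directions of \Cref{thm:SD_TPD} transfer as you describe, along with the inapproximability extension.
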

    \begin{corollary}\label{cor:approx_TTD}
    % \todo{updated text, check pls} Looks good
        For all $\alpha > 1$, there is no polynomial time $\alpha$-approximation algorithm for \TTEECnoK{(N)S-} on directed and undirected temporal graphs, unless $\PP = \NP$.
        % There cannot be an $\alpha$-approximation for any constant $\alpha$ solving \TTEECnoK{(N)S-}, unless $\PP = \NP$.
    \end{corollary}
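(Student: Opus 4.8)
The plan is to recycle the inapproximability construction for paths (\Cref{thm:approx_TPD}) and apply the vertex-to-edge transformation already sketched for \Cref{thm:TTD_para}. Recall that \Cref{thm:approx_TPD} chains $c$ copies $\gcal_1, \dots, \gcal_c$ of the path-hardness gadget from \Cref{thm:SD_TPD}, so that a satisfiable \naesat{3} instance admits a path \eecShort of size $3$, while an unsatisfiable one forces at least $3+c$ paths. The only property of the path gadget used there is that covering a clause-edge corresponding to a satisfied literal reuses a boolean-\emph{vertex} already traversed in the assignment-gadget, which a path may not do. To turn this into a trail argument, I would replace every boolean-vertex $B^i_j$ (with $B \in \{T, F\}$) by a \emph{boolean-edge}: two fresh vertices $b^i_j, B^i_j$ joined by an edge $(b^i_j, B^i_j)$, redirecting each former in-edge of $B^i_j$ to enter $b^i_j$ and keeping each former out-edge leaving from $B^i_j$. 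The new edge receives a time label between the in- and out-edges of $B^i_j$ inside the assignment-gadget, and all later labels are shifted by one per inserted edge, preserving the strict temporal order (so the construction remains valid for both the strict and non-strict variants).

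With this modification, traversing the True- (resp.\ False-)path of a variable-gadget now uses the corresponding boolean-edges, and the clause-gadgets are rerouted to pass through the same boolean-edges instead of the boolean-vertices. Hence ``a trail may not repeat an edge'' plays exactly the role that ``a path may not repeat a vertex'' played before: a clause-edge can be covered by one of the two assignment-trails precisely when the corresponding boolean-edge was not already used by that trail, which happens exactly when the associated literal is (un)satisfied. Consequently the transformed single-copy gadget has a trail \eecShort of size $3$ iff $\varphi$ is satisfiable and needs size $4$ otherwise, and the chaining of $c$ copies amplifies the gap identically: satisfiable instances keep a trail \eecShort of size $3$ (each copy extends the trails of the previous one through the added end-to-start edges), whereas unsatisfiable instances require at least $3+c$ trails.

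The gap argument then goes through verbatim: choosing $c > 3(\alpha - 1)$ makes the achievable sizes for satisfiable instances ($\le 3\alpha$) and unsatisfiable instances ($\ge 3+c$) disjoint, so any polynomial-time $\alpha$-approximation for \TTEECnoK{(N)S-} would decide \naesat{3} and force $\PP = \NP$. As in \Cref{thm:approx_TPD}, since each inserted edge carries a distinct label and all assignment-edges precede all clause-edges in time, the same construction works for the non-strict variant (every non-strict trail here is automatically strict) and, after dropping edge orientations, for undirected temporal graphs.

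The main obstacle I expect is \textbf{bookkeeping the time labels of the inserted boolean-edges}: one must verify that each boolean-edge is forced to be traversed during the assignment phase (so that re-using it in the satisfaction phase is precisely what violates the trail condition) and that the per-edge label shifts never let a clause-edge be taken before its boolean-edge, which would break the correspondence between edge-reuse and literal (un)satisfaction.
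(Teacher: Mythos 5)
Your proposal is correct and follows essentially the same route as the paper: the authors likewise obtain \Cref{cor:approx_TTD} by applying the boolean-vertex-to-boolean-edge substitution (with the same label placement between in- and out-edges and the per-edge label shift) to the chained construction of \Cref{thm:approx_TPD}, so that edge-reuse by trails plays exactly the role of vertex-reuse by paths, and the disjoint-interval gap argument transfers verbatim. Your bookkeeping concern is handled in the paper exactly as you suggest, since all assignment-gadget labels precede all clause-gadget labels, so each boolean-edge is forced in the assignment phase before any clause-edge could use it.
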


    % Observe that the main difficulty lies in keeping track of the vertices, resp. edges, visited by the path, resp. trail, so far. % similar to interference-free walks [cite].
    % Furthermore, observe that the start- and end-terminals are clear from the construction.

% \section{EEC With Walks}
\section{Walks -- One Polynomial-Time Algorithm and Three Hardness Reductions for $k\geq3$}
\label{sec:walks} %Exact Edge-Cover
% \argy{here we should start by saying that the problem is poly-time solvable for any constant $k$, which highly differentiates the problem from the previous ones that were hard for every $k\geq 3$.}
    When constructing exact edge-covers using temporal walks, we find a striking contrast to paths and trails: the problem remains polynomial-time solvable for any constant $k$. \iflong This distinguishes it from paths and trails, which become intractable for constant $k\geq3$. \fi
    \iflong 
    \begin{proposition}
    \fi
    \ifshort
    \begin{proposition}[$\star$]
    \fi
    \label{prop:walks_XP}
        \TWEECnoK{(N)S-} can be computed in time $\mathcal{O}(n^{\mathcal{O}(k)})$ which is polynomial for any constant $k$.
    \end{proposition}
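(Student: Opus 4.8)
The plan is to decide \TWEECnoK{(N)S-} by a dynamic program that sweeps the snapshots $G_1,\dots,G_{\tmax}$ of $\gcal$ in temporal order while maintaining only the current endpoints of the $k$ walks. A state is a $k$-tuple $(x_1,\dots,x_k)\in (V\cup\{\bot\})^k$, where $x_i$ is the vertex at which walk $i$ currently rests and $\bot$ marks a walk that has not yet used an edge and may therefore still start at an arbitrary vertex. Between consecutive snapshots the walks merely wait, so a family of $k$ temporal walks is completely described by the sequence of states it visits together with, for each $t$, the way it covers $E_t$. There are $(n+1)^k=n^{\bigoh(k)}$ states, which already accounts for the target running time; I would start from the all-$\bot$ state before $G_1$ and accept iff some state is reachable after $G_{\tmax}$.

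The core of the argument is the transition at a fixed time $t$, which, given an incoming state $(x_1,\dots,x_k)$ and a candidate outgoing state $(y_1,\dots,y_k)$, must test whether the walks can cover \emph{exactly} the edges of $E_t$ while moving each walk from $x_i$ to $y_i$. In the strict case every walk uses at most one edge of $E_t$, so this is a bipartite assignment: match each edge of $E_t$ to a distinct walk whose position $x_i$ (or any vertex, when $x_i=\bot$) is its tail (an endpoint, in the undirected case) and whose target is $y_i$, requiring $y_i=x_i$ for every unmatched walk. This is solvable in polynomial time and is infeasible whenever $|E_t|>k$. In the non-strict case a walk may traverse several edges of the same snapshot; since each temporal edge is covered exactly once, the part of walk $i$ inside $G_t$ is an edge-disjoint \emph{trail} from $x_i$ to $y_i$. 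Hence the transition is feasible iff the multigraph $G_t$ decomposes into $k$ (possibly empty) trails with prescribed endpoint pairs $(x_i,y_i)$, a $\bot$-walk leaving its start free. I would check this by the Eulerian-type criterion: a degree-balance condition at every vertex (in-degree equal to out-degree after accounting for the prescribed starts and ends in the directed case, all degrees of the right parity in the undirected case) together with a connectivity condition ensuring that every edge-bearing component of $G_t$ contains the start of at least one trail; both are verifiable in polynomial time.

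Correctness then follows from the standard correspondence between state sequences with feasible transitions and families of $k$ temporal walks covering every temporal edge exactly once: reading the snapshot coverings off the transitions reconstructs the walks, and any valid cover induces the corresponding sequence of endpoint tuples. Because empty trails and $\bot$-walks permit using fewer than $k$ walks, the program decides the ``at most $k$'' question; augmenting the state with a counter for the number of started walks additionally settles the ``exactly $k$'' variant. The total time is $\tmax$ times the number of state pairs $(n+1)^{2k}$ times the polynomial cost of a single feasibility check, i.e.\ $n^{\bigoh(k)}$. I expect the non-strict transition to be the main obstacle: pinning down the trail-decomposition criterion exactly\,---\,in particular the bookkeeping of degree imbalances contributed by prescribed and free endpoints, and the connectivity requirement that each edge-bearing component of $G_t$ be reached by some trail start\,---\,is the only nonroutine step and must be argued separately for the directed and undirected variants.
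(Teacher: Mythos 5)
Your architecture is the same as the paper's: a dynamic program over $k$-tuples of current walk endpoints (with $\bot$ for walks that have not yet started), $n^{\bigoh(k)}$ states, and snapshot-by-snapshot transitions; your strict-case transition via bipartite matching between $E_t$ and the walks is sound. The genuine gap is in the non-strict transition test, specifically for \emph{undirected} graphs.

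The criterion you state\,---\,degree parity at every vertex plus ``every edge-bearing component of $G_t$ contains the start of at least one trail''\,---\,is not sufficient, because parity does not conserve the number of walks inside each connected component of the snapshot; it lets walks teleport. Concretely, let $G_t$ be a triangle on $\{a,b,c\}$ together with an isolated vertex $w$, let the incoming state be $(a,a)$ and the candidate outgoing state $(w,w)$. Every vertex has even degree and an even number of prescribed trail-endpoints ($a$ carries two starts, $w$ two ends), and the unique edge-bearing component contains a start, so your test accepts; yet the transition is infeasible: both walks would have to end at $w$, which they cannot reach, and the triangle cannot be covered by trails ending at $w$. A DP built on this test marks $(w,w)$ reachable and is then driven to a wrong YES answer, e.g.\ if the next snapshot consists of two edges incident to $w$. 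In the directed case your exact balance equation does imply the needed conservation (sum $\toutdegree - \tindegree$ over a component to get the equality of incoming and outgoing walk counts), but parity is strictly weaker, so in the undirected case you must add it explicitly: for every edge-bearing component $C$, the number of incoming positions in $C$ plus the number of $\bot$-walks whose prescribed end lies in $C$ must equal the number of outgoing positions in $C$, and walks sitting at isolated vertices must keep their position. Moreover, even with conservation added, the sufficiency direction (parity, conservation and coverage imply a decomposition for \emph{some} matching of starts to ends\,---\,which suffices because walks are interchangeable, so the reachable state sets are closed under permutation) still needs an actual proof; one way is to attach an apex vertex joined to all prescribed starts and ends, split it into degree-two vertices each pairing a start-edge with an end-edge, and read the trails off an Eulerian circuit of the resulting graph. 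You correctly identify this transition as the crux, but as written the criterion is false and the sufficiency argument is absent, so the proposal does not yet constitute a proof.
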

    This is due to the nature of temporal walks in exact edge-covers: they are time-respecting, connected journeys that cover each temporal edge exactly once.
    For a constant number $k$, we can brute-force the start-terminals and, at each time step, keep track of all possible positions of the $k$ walks -- there are at most $n^k$ such positions. %and update this for each edge in the edgestram. that can be at most $n^k$---> poly for constant $k$.\fi

    For an arbitrary size of the eec, the computational complexity varies depending on the variant of the problem. For strict walks on directed graphs, we can optimize the approach from \Cref{prop:walks_XP}, while the problem becomes \NP-complete for non-strict walks on directed graphs and for both strict and non-strict walks on undirected graphs.
    % When constructing exact edge-covers using temporal walks, we only need to ensure time-respecting, connected journeys that use each temporal edge exactly once. In contrast to trails and paths, we observe that the different variants of the problem for walks have varying computational complexity. Exact edge-covers with strict walks on directed graphs are solvable in polynomial time, while the problem becomes \NP-hard with non-strict walks on directed graphs and with both strict and non-strict walks on undirected graphs.

    % \subsection{Directed Strict Walk Exact Edge-Cover - Two Polynomial Time Approaches}
    \subsection{Strict Walks in Directed Graphs -- Two Polynomial Time Approaches} \label{subsec:walk_poly}
    For strict walks on directed graphs, \iflong we can deduce a topological order of the vertices imposed by the passage of time and edge direction.
    This implies that\fi the graph transformation introduced by \citet{wu_path_2014} forms a directed acyclic graph. We can adjust it so that a temporal walk eec in \gcal corresponds to an exact path cover (covering every vertex exactly once), which is computable in polynomial time\iflong using a maximum flow algorithm\fi. 
    % \textcolor{blue}{Michelle - runtime}
    \iflong
    The transformed graph contains vertices linear in the number of temporal edges and edges quadratic in the number of temporal edges, so it can get quite large.
    Although this approach is not the most efficient, it provides an interesting construction which can be adjusted to solve also other variants of the covering problem.
    \fi

    % \newpage
    % \input{--SD-Walks_DAG_Algorithm}
    % \newpage
    A faster (linear time) algorithm can be achieved by using the endpoint-tracking approach from \Cref{prop:walks_XP}. Since the walks are strict, two edges at the same time step need to be taken by two different walks, and since the graph is directed, the possible extensions are clearly defined. If an edge cannot extend an existing walk, a new walk starting with that edge has to be introduced. Furthermore, the starting points of the walks at the first time step are uniquely defined.
    This way, at each time step, there is exactly one possible position for walks of an exact edge-cover. 
    \iflong 
    \begin{theorem}
    \fi
    \ifshort
    \begin{theorem}[$\star$]
    \fi\label{thm:SD_TWD_linear_time}
        \TWEECnoK{S-} on directed temporal graphs can be solved in $\mathcal{O}(\lvert\ecal\rvert)$ if the edgestream is given.
    \end{theorem}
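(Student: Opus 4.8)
The plan is to turn the endpoint-tracking scheme of \Cref{prop:walks_XP} into a single, \emph{deterministic} pass over the edgestream, exploiting two features of strict directed walks: a walk uses at most one edge per time step, and the direction of an edge fixes the vertex from which a covering walk must depart. I maintain one integer array indexed by vertices, where $p[v]$ counts the \emph{pending} walks whose current endpoint is $v$ — walks that have already arrived at $v$ at a strictly earlier time step and are waiting to be extended. The key point is that all choices collapse onto this per-vertex counter: it is irrelevant \emph{which} pending walk at $v$ continues along a given outgoing edge, only \emph{how many} are available.

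Concretely, I process the snapshots $G_1,\dots,G_{\tmax}$ in increasing time order, which is exactly the order in which the edgestream lists the edges. For a fixed time step $t$ I first tally, for each vertex $u$ incident to an edge of $E_t$, its snapshot out-degree $\toutdegree_t(u)$. Each of these outgoing edges must be the next edge of a distinct walk already pending at $u$; arrivals at time $t$ cannot help, by strictness. Hence I reuse $\min(p[u],\toutdegree_t(u))$ pending walks and am \emph{forced} to open $\max(0,\toutdegree_t(u)-p[u])$ fresh walks, adding the latter to a running counter $W$, and then set $p[u]\leftarrow\max(0,p[u]-\toutdegree_t(u))$. Only after all departures at time $t$ are settled do I register arrivals: for each $(x,u)\in E_t$ I increment $p[u]$, since its covering walk now waits at $u$ for a later step. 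After the last snapshot, $W$ is the minimum number of strict walks in an exact edge-cover; since any count between $W$ and $\lvert\ecal\rvert$ is realizable by splitting walks into subwalks, the instance admits a cover of size exactly $k$ iff $W\le k\le\lvert\ecal\rvert$, so comparing $W$ with $k$ settles the decision problem.

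For correctness I would recast minimality as a matching statement. The number of walks equals $\lvert\ecal\rvert$ minus the number of edge-to-edge \emph{continuations}, where $(x,u,t')$ may precede $(u,y,t)$ iff $t'<t$; each edge has at most one predecessor and one successor, so minimizing walks means maximizing this set of continuations. This objective decomposes independently over vertices into a bipartite matching of $u$'s incoming edges to its outgoing edges under the earlier-before-later time constraint, and the pool recurrence above is precisely the time-ordered greedy that computes a \emph{maximum} such matching at each $u$. Its optimality is the standard earliest-deadline/matroid argument; equivalently, covering an edge $(u,y,t)$ by a pending walk rather than a new one leaves the state identical (one walk ends up pending at $y$ either way) and differs only by the $+1$ to the total, so reusing whenever possible is never suboptimal, and an exchange argument upgrades this to global optimality.

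The remaining work is bookkeeping. Because the edgestream is already sorted by time, each snapshot $E_t$ is a contiguous block: I tally out-degrees into a scratch array touching only the $\bigoh(\lvert E_t\rvert)$ vertices that occur, perform constant work per edge, and clear exactly the scratch entries I touched, so the whole pass is $\bigoh(\lvert\ecal\rvert)$ with no sorting. The main obstacle I anticipate is not the algorithm but making the minimality proof airtight: I must rule out globally cheaper covers in which a walk deliberately \emph{waits} at a vertex to be reused far in the future, and it is the matching reformulation that makes this rigorous by showing the purely local per-vertex greedy already attains the global optimum. A secondary point to verify is the boundary behaviour — all departures at $t=1$ are necessarily fresh starts since $p[\cdot]=0$, and all walks still pending after $\tmax$ simply terminate — which the recurrence handles without special cases.
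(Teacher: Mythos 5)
Your proposal is correct, and the algorithm itself is the same single-pass greedy the paper uses (track pending walk endpoints, extend when possible, open a new walk only when forced), but your correctness argument takes a genuinely different route. The paper proves optimality by induction over the prefix graphs $\gcal_{\leq t}$: it shows that the \emph{multiset} of end-terminals of any minimum walk eec of $\gcal_{\leq t}$ is unique, so whenever the greedy is forced to open a new walk, every solution is, and the partial solution stays optimal. You instead recast minimality as maximizing the set of edge-to-edge continuations, observe that \#walks $=\lvert\ecal\rvert$ minus \#continuations, and note that the constraint structure decomposes into independent bipartite matchings at each vertex $u$ between in-edges and out-edges under the threshold condition (arrival time strictly less than departure time), for which the time-ordered greedy is a maximum matching by a standard exchange argument. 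Your route buys two things the paper's write-up does not: (i) a rigorous treatment of several edges sharing a time label\,---\,your departures-before-arrivals batching is exactly what strictness requires, whereas the paper's ``WLOG one edge per label'' is delicate for strict walks, since relabelling to distinct times would spuriously permit chaining two same-label edges; and (ii) an explicit global certificate that ``waiting to reuse a walk later'' can never help, which is the crux the paper's uniqueness induction handles more implicitly. Conversely, the paper's induction yields the structural fact that the reachable endpoint configuration is unique at every time step, which is what makes its algorithm's determinism immediate. One sentence of yours needs repair: covering $(u,y,t)$ by a pending walk versus a fresh one does \emph{not} leave the state identical\,---\,reuse consumes a pending walk at $u$, so the two states differ by one pending endpoint; the correct local statement is a dominance/exchange one (the spare pending walk at $u$ can save at most one future walk opening, so reusing now never loses), and your matching reformulation already makes this airtight, so the parenthetical claim should simply be dropped or restated.
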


    \iflong
    \begin{proof}
    Let $\gcal=\tuple{G, \ecal}$ be a temporal directed graph with lifetime $t_{max}$.
    As we consider strict walks, we assume without loss of generality exactly one edge per time label, $\lvert E_t\rvert = 1$ for all $t\in[t_{max}]$.
    We provide a stream algorithm that computes the walk eec by iterating through the edges in temporal order while keeping track of all possible end-terminals for the walks. For a new edge $((u, v), t)$, any walk ending at $u$ before time step $t$ can be extended. If no walk ends at $u$ before $t$, this edge has to be the start of a new walk.
    % \begin{algorithm2e}
    %     \caption{Strict Directed Temporal Walk Exact Edge-Cover}
    %     \label{alg:sd_twd}
    %     \SetKwFunction{SD-TWD}{StrictDirectedTWD}
    %     \KwIn{temporal graph $\gcal=\tuple{G,\ecal}$ in edgestream representation \edgestream}
    %     \KwOut{dynamic array $W$ containing walks; each walk is a list of temporal edges}
    %     $W \coloneqq [\,]$\\
    %     $walksEndingAt \coloneqq \text{array of length $|V|$ containing empty lists}$\\
    %     \For{$\tdirect{u}{v}{t} \in \edgestream$}{
    %         \If{$walksEndingAt[u] = \emptyset$}{                    
    %             $W.append([\,])$ \\ % new walk
    %             $walksEndingAt[u].append(|W|-1)$    % at id
    %         }
    %         $extendedWalk \coloneqq walksEndingAt[u].popFirst()$\\    % extend some walk by (u,v,t)
    %         $W[extendedWalk].append(\tdirect{u}{v}{t})$\\
    %         $walksEndingAt[v].append(extendedWalk)$
    %     }
    %     \Return $W$
    % \end{algorithm2e}

    Let $\gcal_{\leq t}$ be the graph containing every edge up to time label $t$, and let $\wcal_t$ be a minimal walk eec for $\gcal_{\leq t}$. We show by induction over $t$ that the multiset of end-terminals $\setEndTs{\wcal_t}$ is unique, \ie all minimal eecs have the same end-terminals.
    
    Let \tdirect{u_t}{v_t}{t} be the edge at time $t$. For $t=1$, we have $\setEndTs{\wcal_1}=\{v_1\}$.
    Assume for a fixed $t$, that $\setEndTs{\wcal_t}$ is unique and consider $t+1$.
    If $|\wcal_{t+1}|=|\wcal_t|$, then an existing walk in $\wcal_{t}$ must be extended to cover the new edge.
    No matter which walk ending at $u_{t+1}$ is extended, we have $\setEndTs{\wcal_{t+1}}=\setEndTs{\wcal_t} \setminus \{u_{t+1}\} \cup \{v_{t+1}\}$.
    
    If $|\wcal_{t+1}|=|\wcal_t| + 1$, a new walk had to be added to cover the new edge.
    We show by contradiction that $\setEndTs{\wcal_{t+1}} = \setEndTs{\wcal_t} \cup \{v_t\}$ for any walk eec:
    Assume there was a different eec $\wcal_{t+1}'$ with $\setEndTs{\wcal_{t+1}} \neq \setEndTs{\wcal_{t+1}'}$.
    $v_t$ must still be contained in $\setEndTs{\wcal_{t+1}'}$ as the edge with time label $t+1$ leads to it.
    Removing this edge from $\wcal_{t+1}'$ would yield a new eec $\wcal_t'$ for $\gcal_{\leq t}$ with $\setEndTs{\wcal_t} \neq \setEndTs{\wcal_t'}$, which in turn would violate the induction assumption.
    %\setEndTs{\wcal_{t+1}} must be exactly $\setEndTs{\wcal_t} \cup \{v\}$. If a different multiset of end-terminals was possible (ignoring $v$, which must always be contained), then it would have also been available for $\wcal_t$, which violates the induction assumption.

    Knowing that the set of end-terminals is distinct in each step, we can conclude that any time step in which the algorithm begins a new walk is unavoidable: If the algorithm cannot solve a $\gcal_{\leq t}$ without starting a new walk, neither can the optimal solution. Hence, by induction, the partial solution is optimal for each $\gcal_{\leq t}$ and for the entire graph $\gcal$.
\end{proof}
    \fi
    
    % on directed graphs and strict walks, we can deduce a topological order of the vertices imposed by time passing.
    % Two approaches to solve the problem in polynomial time:
    %  \begin{enumerate}
    %         \item transform the temporal graph $\gcal$ into a static DAG $D$ similar to [wu et al., .. cite]. Here, every time edge corresponds to a path of length two were the middle vertex has no other neighbours. A walk EEC in \gcal is then equivalent to an exact path vertex-cover in $D$ which can be computed in polynomial time [cite]. \michelle{short version: 1 sentence in text; long version: observation/remark) -- mention for problem extension and characterization in line with recent literature}
    %         \item given a sorted list of all time edges, we can perform a linear DP (?) which computes for each time step $t$ the end points of the current walk prefixes, given the end points at $t-1$. Whenever an edge cannot be covered by extending a walk prefix, we introduce a new walk beginning with this edge. -- in the proof for the theorem in the table. Main poly time result.
    %     \end{enumerate}
    
    % iterate through the edges in temporal order while keeping track of the set of possible end-vertices of a preliminary \TWEEC up to this edge. For strict walks and directed edges, this set is unique. For each edge $(u, v)$ we only need to check in constant time whether some walk can be extended from $u$, overall resulting in a linear time algorithm.

    \subsection{Non-Strict Walks or Undirected Graphs -- Hard to Compute Efficiently} \label{subsec:walk_NP}
    We proceed with the complexity of the other three variants of walk exact edge-covers. 
    For \TWEECnoK{NS-} in directed graphs we prove $\Wtwo$-hardness and in undirected graphs we show that both \TWEECnoK{(N)S-} are \NP-hard.
%    We provide an exponential time algorithm for all three hard walk exact edge-cover problems in \Cref{sec:XP-algo-walks}.
    
    Towards the first result, observe that non-strict walks can traverse directed cycles appearing at one time step, unlike strict walks.
    Given such a directed cycle in which every edge has the same time label, any non-strict walk covering these edges must either omit some edges of the cycle or return to the vertex is started at.
    We exploit this behaviour of returning to a chosen vertex to cover a cycle,  
    for a reduction from $k$-\textsc{Hitting Set} to \TWEECnoK{NS-} on directed graphs.
    \ifshort
    % We construct a temporal graph where each element of the universe corresponds to a vertex and each set corresponds to a cycle on its elements with labeled with one unique time step. A hitting set of size $k$ then corresponds to $k$ start-terminals such that each cycle can be covered by extending a walk from one of these start-terminals.
%%%    See \Cref{fig:NSD_TWD_proof} for an illustration of the construction.
    \fi
    
    % \begin{figure}[t]
    %     \centering
    %     \includegraphics[width=0.6\columnwidth]{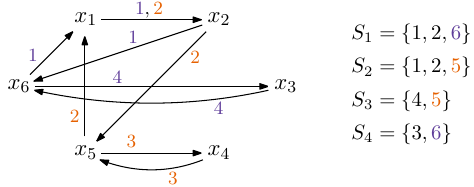}
    %     \caption{Illustration of the construction of \Cref{thm:DNS_TWD} for a \hittingset instance with $n=6$ elements and $m=4$ sets. A hitting set of minimum size is $\{5,6\}$. The corresponding walks in \gcal are illustrated using purple and orange.}
    %     \label{fig:NSD_TWD_proof}
    % \end{figure}

    \iflong 
    \begin{theorem}
    \fi
    \ifshort
    \begin{theorem}[$\star$]
    \fi \label{thm:DNS_TWD}
        \TWEECnoK{NS-} %for $k\geq 3$ 
        on directed temporal graphs is \NP-complete and \Wtwo-hard when parameterized by the number of walks in the exact edge-cover.
    \end{theorem}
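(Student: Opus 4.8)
The plan is to observe first that membership in \NP\ was already established for \TWEECnoK{(N)S-} at the start of this section, so it remains to prove both hardness claims, which will follow from a single parameter-preserving reduction from $k$-\hittingset. Recall that $k$-\hittingset\ (given a universe $U$ and a family $\scal=\{S_1,\dots,S_m\}$ of subsets of $U$, decide whether some $H\subseteq U$ with $|H|\le k$ meets every $S_i$) is \Wtwo-hard parameterized by the solution size $k$ and \NP-hard in general. Given such an instance I would build a directed temporal graph $\gcal$ whose \emph{minimum} non-strict walk eec has size exactly the minimum hitting-set size, and set the target number of walks to $k$. Since any temporal walk can be split at an interior vertex into two valid walks covering the same edges, an eec of size exactly $k$ exists iff one of size at most $k$ does (whenever $k\le|\ecal|$), so bounding the minimum suffices.

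The construction uses one vertex $a_u$ per element $u\in U$ and assigns the sets distinct labels $t_1<\dots<t_m$. For each set $S_i=\{u_1,\dots,u_p\}$ I build one directed cycle $C_i$, all of whose edges carry label $t_i$, passing through $a_{u_1},\dots,a_{u_p}$ in order with a fresh private vertex inserted between consecutive element vertices. Thus every internal cycle vertex has in- and out-degree one, and the only label-$t_i$ edges at $a_u$ are the two $C_i$-edges there. By the dichotomy preceding the theorem, a non-strict walk covers all of $C_i$ only by entering at some $a_u$ with $u\in S_i$, traversing $C_i$ completely, and returning to $a_u$; during this it merely passes through the other element vertices of $S_i$, so no edge forces any specific element to host a walk. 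All temporal edges belong to some cycle, and distinct sets use distinct labels, so the cycles are pairwise edge-disjoint.

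For the forward direction, a hitting set $H$ with $|H|\le k$ gives one walk per $u\in H$ based at $a_u$; assigning each $C_i$ to a single element of the nonempty set $H\cap S_i$, the walk at $a_u$ covers exactly its assigned cycles by successive full return-traversals at the relevant time steps, waiting at $a_u$ in between. Each cycle is traversed once and the traversal of $C_i$ covers precisely the $C_i$-edges at every element vertex of $S_i$, so every temporal edge is covered exactly once; splitting walks if needed yields size exactly $k$.

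The backward direction is the main obstacle. Given an eec $\wcal$ of size $W\le k$, I must extract a hitting set of size at most $W$, and the danger is that a walk need not stay at one element vertex: via a \emph{partial} traversal of some $C_\ell$ it could move from $a_x$ to another element vertex $a_y$ of $S_\ell$, touching cycles of several elements. The key is that, since internal cycle vertices have in/out-degree one, every maximal run of a walk inside $C_\ell$ is a directed sub-path between two element vertices, so the arcs of $C_\ell$ are partitioned among the walks and each is entered at an element vertex of $S_\ell$. I would then use an exchange (splicing) argument: if walk $A$ runs $P_A$, then the arc from $a_x$ to $a_y$, then $Q_A$, while walk $B$ runs $P_B$ ending at $a_y$, then the complementary arc from $a_y$ to $a_x$, then $Q_B$ starting at $a_x$, replace them by $A'=P_A,\text{(full cycle at }a_x),Q_B$ and $B'=P_B,Q_A$. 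This keeps both arcs (now in $A'$), preserves the covered edge set and the number of walks, and strictly reduces the number of partial traversals. Iterating makes every cycle a full return-traversal; a walk built only from full return-traversals never nets any displacement, so it is confined to a single element vertex $a_x$ and all cycles it covers satisfy $x\in S_i$. Then $H=\{x:\text{some walk is based at }a_x\}$ has $|H|\le W$ and hits every $S_i$, because the covered cycle $C_i$ is entered at an element of $S_i$. Formalizing that this splicing terminates and never increases $W$ is the delicate part; once done, the polynomial-time construction with parameter equal to the hitting-set size yields \NP-completeness together with \Wtwo-hardness parameterized by the number of walks.
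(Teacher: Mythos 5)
Your reduction is essentially the one in the paper: you too reduce from $k$-\hittingset by building, for each set $S_j$, a directed cycle through (vertices representing) the elements of $S_j$ whose edges all carry that set's private time label, and your forward direction\,---\,one walk parked at each hitting-set element, looping around each cycle assigned to it and returning in place\,---\,is identical. (The paper places the cycle directly on the element vertices and pads singleton sets with auxiliary elements instead of subdividing with private vertices; this difference is immaterial.) Where you genuinely diverge is the backward direction. The paper proves a static invariant: at each time step $j$ the available edges form a single cycle in which every vertex has in- and out-degree one, so the maximal time-$j$ sub-walks partition that cycle into arcs whose set of start points equals their set of end points; hence the positions of the walks are unchanged by every time step, every walk effectively sits at its start-terminal forever, and the start-terminals must hit every $S_j$ since any walk covering an edge of $C_j$ is positioned on a vertex of $S_j$. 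You instead normalize the given cover by a splicing argument until every cycle is a full return-traversal and then read the hitting set off the walk bases. Your route is sound\,---\,the arc structure you identify is exactly right, and your exchange preserves walk validity, the covered edge multiset, and the number of walks while strictly decreasing the total arc count, so it terminates\,---\,but it needs the small repair that when a cycle is split into three or more arcs the ``complementary arc'' you invoke does not exist; one must splice with the next arc in cyclic order and iterate. The paper's invariance argument buys the same conclusion with no modification of the solution at all: it shows directly that no walk can ever net a displacement, which is precisely the property your normalization labors to enforce.
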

    \iflong
    \begin{proof}
    We reduce from \hittingset. Given an instance $(N=\{1, \dots n\}, S=\{S_1, \dots S_m\}, k)$, assume that all sets have size 2 or larger, otherwise augment them with auxiliary elements. We construct a temporal graph $\gcal=\tuple{(V,E),\ecal}$ with $V=N$. For every set $S_j=\{x_1, \dots x_{|S_j|}\}$, we construct a cycle between its elements where all edges have time label $j$, \ie for each $0 < i < |S_j|$ we add $\tdirect{x_i}{x_{i+1}}{j}$, and lastly $\tdirect{x_{|S_j|}}{x_1}{j}$. We ask whether \gcal admits a non-strict walk \eecShort using $k$ walks. As $k$-\hittingset is \Wtwo-hard when parameterized by $k$, \TWEECnoK{NS-} on directed graphs is also \Wtwo-hard. See \Cref{fig:NSD_TWD_proof} for an example.
    \begin{figure}[ht]
        \centering
        \includegraphics[width=0.85\columnwidth]{Figures/NS_D_Walks_HittingSet.pdf}
        \caption{Example of the construction for \Cref{thm:DNS_TWD}.
        }
        \label{fig:NSD_TWD_proof}
    \end{figure}
    
    \bigparagraph{($\Rightarrow$)\quad} % \hittingset to directed \TWEECnoK{NS-}.}
    Given a hitting set $H=\{h_1, \dots h_k\}$ for the instance $(N, S, k)$, we iteratively construct an exact edge-cover of the same size. First, we start one walk on every $h_i \in H$ at time step $0$. Then, for each time step $0<j\leq m$ corresponding to a set $S_j$, pick any vertex $h\in S_j \cap H$ and extend the walk currently ending on this vertex to cover all edges with time label $j$.
    Since those edges form a cycle through all elements of $S_j$, including $h$, the walk can be extended in this way and will loop back to $h$. Therefore, at the next time step, this walk can be extended from $h$ again.
    Every edge in \gcal at time step $j$ corresponds to some set $S_j$. Since $H$ is a hitting set, for each set $S_j$ we know that $S_j\cap H \neq \emptyset$ and therefore every edge with time step $j$ will be covered exactly once by a walk looping through a vertex $h\in S_j \cap H$.
    
    \bigparagraph{($\Leftarrow$)\quad} % Non-strict directed $k$-\TWEEC to \hittingset.}
    Given an exact edge-cover \wcal for \gcal, we show that the set of start-terminals of all walks is a solution to the instance of \hittingset.
    First, we show that after every time step $j$, the set of end-terminals of the sub-walks between time-step 1 and $j$ is the same as before the time step:
    After time step 0, the set of start-terminals forms the set of current end-terminals.
    For every time step $j>0$, the edges in $E_j$ form a cycle by construction. If a sub-walk begins on a vertex and uses the only outgoing edge, there must be a walk that covers the incoming edge and ends at this vertex for lack of available outgoing edges. Therefore, the set of end-terminals remains unchanged.
    It follows, by induction, that the set of start-terminals $H$ is exactly the set of end-terminals after each time step. Since \wcal covers all edges of each $E_j$, at least one walk must end at a vertex contained in $S_j$ after time step $j-1$. As mentioned before, this vertex is also a start-terminal and thus part of our hitting set $H$.
    \end{proof}
    \fi
        
    Moving on to undirected graphs, a walk must choose the direction of its starting edge.
    Using this choice as an assignment, we can construct a reduction from \sat{3} \iflong to \TWEECnoK{(N)S-} on undirected graphs\fi.
    \ifshort
    The strict and non-strict variants require slightly different constructions but share the same idea: Each variable corresponds to an undirected edge at time step 1. Traversing that edge from left to right sets the variable to True, while traversing it from right to left, sets the variable to False.
    % \michelle{I suggest no illustrations in the short version for these proofs.}
    % Refer to \Cref{fig:UD_TWD_proof} for an illustration of the two reductions.
    % \begin{figure}[t]
    %     \centering
    %     \includegraphics[width=\columnwidth]{}
    %     \caption{Example illustration of constructions for \Cref{thm:SUD_TWD}, left, and \Cref{thm:NSUD_TWD}, right, for the given $\varphi$.  
    %     For readability, we omitted the labels for the auxiliary vertices $a_i^q$ and $b_i^q$ in the figures.}
    %     \label{fig:UD_TWD_proof}
    % \end{figure}
    \begin{theorem}[$\star$]
    \label{thm:NSUD_TWD}
        \TWEECnoK{(N)S-} on undirected temporal graphs is \NP-complete. 
    \end{theorem}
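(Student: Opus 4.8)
The plan is to establish $\NP$-hardness via a reduction from \sat{3}; membership in $\NP$ is immediate, since (as already noted for the general problem) one can verify in polynomial time that a given collection of journeys covers each temporal edge exactly once and that each journey is a valid (non-)strict walk.

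For the hardness, given a \sat{3} formula $\varphi$ with variables $X_1,\dots,X_h$ and clauses $C_1,\dots,C_\ell$, I would build an undirected temporal graph $\gcal$ together with a target $k$ so that $\gcal$ admits a (non-)strict walk eec of size $k$ if and only if $\varphi$ is satisfiable. The heart of the construction is the variable gadget: for each $X_i$ I place a single undirected edge at time step $1$ between two vertices $\ell_i$ and $r_i$. Because a walk must orient every edge it traverses, the walk covering this edge either goes $\ell_i\to r_i$ or $r_i\to\ell_i$, and I read this choice as setting $X_i$ to \emph{True} or \emph{False}, respectively. Since the $h$ variable edges all appear at time step $1$ and are (essentially) vertex-disjoint, any eec needs on the order of $h$ walks merely to cover them, which anchors the lower bound on $k$. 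I would then route the two possible continuations of each such walk, at the following time steps, into exactly the clause gadgets of the clauses in which $X_i$ appears with the matching polarity, so that the value ``decided'' at time $1$ dictates which clauses that walk can help to satisfy.

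Satisfaction is enforced by the clause gadgets. For each clause $C_j$ I would attach a small gadget whose edges occupy a dedicated block of later time steps and which can be covered exactly if and only if at least one of the three literal-walks associated with $C_j$ actually enters it; if no satisfying literal-walk arrives, one gadget edge is left orphaned and can be picked up only by spawning a fresh walk, pushing the cover size above $k$. I would choose $k$ to be precisely the number of walks a satisfying assignment requires (roughly one per variable, threaded greedily through the satisfied clauses), so that each unsatisfied clause provably costs one surplus walk. The two directions then read: (i) a satisfying assignment yields the intended $k$ walks that orient the variable edges accordingly and sweep through every clause gadget; and (ii) any size-$k$ eec must orient each variable edge consistently and enter every clause gadget, which extracts a satisfying assignment.

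The main obstacle is \emph{rigidity}: I must rule out every way a size-$k$ cover could cheat\,---\,starting extra walks early and parking them, covering a variable edge with a walk that ignores the intended routing, or exploiting the freedom of walks to revisit vertices in order to patch a clause gadget without a genuine satisfying literal. Pinning down the exact value of $k$ and proving a tight lower bound (so that a single orphaned clause edge demonstrably forces one additional walk) is the delicate part of the argument. This is also exactly where the strict and non-strict cases diverge: in the non-strict construction an entire clause gadget can be swept within a single time step, whereas in the strict construction each gadget edge must occupy its own time step and the routing between gadgets must be re-timed to respect strictness. I would therefore present two gadget variants that share the variable-orientation idea but differ in their timing layout.
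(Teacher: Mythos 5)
Your proposal is correct and takes essentially the same route as the paper: a reduction from \sat{3} in which each variable is a single undirected edge at the earliest time step whose traversal direction encodes its truth value, the target $k$ is pinned down by counting the earliest edges, and each clause gadget can be completed without spawning an extra walk only if some walk arrives from the correctly-oriented side of a variable gadget. The paper even splits into the two timing variants you anticipate---in the non-strict construction all clause-gadget edges share one label $i$ so the satisfying variable-walk sweeps them in a single step (giving $k=h$), while in the strict construction each clause additionally gets two time-$0$ ``cleanup'' edges and three assignment-edge-pairs at labels $3i+1,3i+2$, so the clause's own two walks can cover only two of the three pairs and $k=h+2\ell$---which is precisely the concrete gadget realization of the counting argument you flag as the delicate step.
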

    \fi
    
    \iflong 
    \begin{theorem}
    \label{thm:SUD_TWD}
        \TWEECnoK{S-} on undirected temporal graphs is \NP-hard. 
    \end{theorem}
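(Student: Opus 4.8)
The plan is to reduce from \sat{3}, reusing the assignment encoding announced above: each variable becomes a single undirected edge at time step $1$, and the \emph{direction} in which a walk traverses it records the truth value. Concretely, for a formula $\varphi$ with variables $X_1,\dots,X_h$ and clauses $C_1,\dots,C_\ell$, I would introduce for each $X_i$ a \emph{variable edge} $\tundirect{F_i}{T_i}{1}$, making all $h$ of these edges vertex-disjoint. Since the walks are strict and undirected, at time step $1$ a single walk can cover at most one edge, so any exact edge-cover needs at least $h$ walks just to handle the time-step-$1$ edges; I would set $k=h$ (plus a fixed number of auxiliary walks forced by the gadgets, if needed) so that in a size-$k$ cover each variable edge is covered by its own walk. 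Traversing $\tundirect{F_i}{T_i}{1}$ as $F_i\to T_i$ leaves that walk at $T_i$ and encodes $X_i$ as \True; traversing it as $T_i\to F_i$ leaves the walk at $F_i$ and encodes $X_i$ as \False.

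Next I would attach, at strictly increasing time steps $t>1$, a rigid \emph{dispatch schedule} to each endpoint together with one \emph{clause gadget} per clause. The schedule is timed so tightly that, once the initial direction of a variable walk is fixed, the remainder of its route is forced: a walk sitting at $T_i$ (resp.\ $F_i$) is compelled to visit, in chronological order, exactly the clause gadgets of the clauses in which $X_i$ occurs positively (resp.\ negatively). Each clause $C_j$ carries a dedicated \emph{checker edge} placed at a time $t_j$ whose single access point can only be reached in time by a walk that is currently routed through a \emph{satisfying} literal occurrence of $C_j$. If at least one literal of $C_j$ is satisfied, the corresponding walk reaches the checker edge and covers it as it passes through, continuing on its schedule; if no literal is satisfied, no scheduled walk is in position, so the checker edge can only be covered by introducing a fresh walk, pushing the cover size to $k+1$.

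The correctness proof then splits as usual. For the forward direction, given a satisfying assignment $\beta$, I would orient each variable edge according to $\beta$ and follow the forced schedule, routing in each clause one arbitrarily chosen satisfied literal's walk through the checker edge; since $\beta$ satisfies every clause this yields a strict walk exact edge-cover of \gcal of size exactly $k$. For the backward direction, I would argue that any size-$k$ cover must use one walk per variable edge and obey the forced schedule, so reading off the traversal directions gives a well-defined assignment; because every checker edge is covered without a surplus walk, each clause must be entered through a satisfying literal, hence the assignment satisfies $\varphi$. Membership in \NP is immediate from the verification argument already given for \TJEECnoK{(N)S-}, so the result is in fact \NP-completeness.

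The main obstacle will be the clause gadget and its timing. I must design it so that (a) \emph{all} of its edges are covered exactly once, both when it is entered by a satisfying walk and when, under an unsatisfied clause, only the forced extra walk is available; (b) a variable that occurs in several clauses can be serviced by a single walk visiting several gadgets in sequence without timing clashes; and, most delicately, (c) there is no ``spare capacity'' — no variable walk that is idle at time $t_j$ can be rerouted to a checker edge it is not entitled to, since otherwise an unsatisfiable formula could still be covered with $k$ walks. Pinning down the increasing time labels so that every walk is kept on its rigid schedule, leaving the initial edge direction as its only degree of freedom, is the technically involved part; the \emph{strictness} of the walks is exactly what makes such tight scheduling enforceable, and the same skeleton — with short cycles spread over consecutive time steps in place of single checker edges — should adapt to the non-strict variant.
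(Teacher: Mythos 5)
Your high-level strategy---reducing from \sat{3} and encoding the truth value of $X_i$ by the direction in which a strict walk traverses a single undirected variable edge, with a counting argument forcing a separate walk for each earliest-time edge---is exactly the paper's starting point. But the proposal has a genuine gap precisely where the work lies: the clause gadget. You explicitly defer its design (``the main obstacle''), and the architecture you sketch in its place does not work as described. First, if the walk that ends at $T_i$ is \emph{compelled} by a rigid dispatch schedule to visit every clause in which $X_i$ occurs positively, then the schedule edges hanging off $F_i$ (the side the walk did not choose) are left uncovered, and nothing in your construction accounts for them; an exact cover must cover \emph{every} temporal edge, so gadget edges on the unchosen side cannot simply be abandoned. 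Second, your single ``checker edge'' per clause collides with exactness when a clause has two or three satisfied literals: all of the corresponding walks are forced into the gadget by their schedules, yet only one may use the checker edge, and you give no mechanism by which the surplus walks can traverse (and exactly cover) their forced gadget edges consistently. You flag both issues under (a) and (c) but resolve neither, so the reduction is not established.

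The paper's construction dissolves both problems by inverting your logic: variable walks are never forced into clause gadgets. Each literal occurrence in clause $C_i$ becomes an \emph{assignment-edge-pair}, two parallel edges between the clause vertex $c_i$ and the literal's side of the variable gadget at consecutive times $3i+1$ and $3i+2$, which can be covered by a round trip starting from \emph{either} endpoint. Each clause gadget additionally contributes two ``cleanup'' walks (forced by two time-$0$ edges at $c_i$) and two late edges at which they must end, so any cover of the minimum size $h+2\ell$ consists of exactly one walk per variable plus two per clause. Since walks are strict and all three pairs of a clause share the same two time labels, the two cleanup walks can cover at most two of the three pairs; the third must be covered by a round trip of a walk already sitting at the correct variable-side endpoint, i.e.\ by a satisfied literal. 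Conversely, when several literals are satisfied, the surplus pairs are simply absorbed by the cleanup walks, so no consistency problem arises, and the round-trip structure returns the variable walk to its own side so it can service later clauses (whose labels come strictly afterwards). Replacing your ``dispatch schedule plus checker edge'' mechanism with this pair-coverable-from-both-ends counting mechanism is the missing idea.
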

    \begin{proof}
    We reduce from \sat{3}. 
    %\textbf{Construction.} 
    Let $\varphi = \bigwedge_{i\in[\ell]}(\bigvee_{j\in[3]}L_{i,j})$ be a \sat{3} instance with $h$ many variables and $\ell$ many clauses. We construct an undirected temporal graph $\gcal:= \tuple{G, \ecal}$. 
    For each variable $X_j$ in $\varphi$, we create a \emph{variable-gadget} containing 
    two connected vertices $T_j$ and $F_j$, corresponding to the True/False directions out of the gadget, and the edge $\tundirect{T_j}{F_j}{0}$.
    For each clause $C_i$ in $\varphi$, we create a \emph{clause-gadget} containing 
    $5$ vertices $c_i$, $a_i^1 , a_i^2, a_i^3, a_i^4$ and the following temporal edges: \emph{start-cleanup-edges} $\tundirect{c_i}{a_i^1}{0}$, $\tundirect{c_i}{a_i^2}{0}$ and \emph{end-cleanup-edges} $\tundirect{c_i}{a_i^3}{3i+3}$, $\tundirect{c_i}{a_i^4}{3i+3}$.
    
    Now, we connect the clause-gadgets and variable-gadgets using \emph{assignment-edge-pairs}. For each variable $X_j$ in $\varphi$, if $X_j$ appears as a positive literal in the clause $C_i$, we add \tundirect{c_i}{T_j}{3i+1} and \tundirect{c_i}{T_j}{3i+2}. If $X_j$ appears as a negative literal in the clause $C_i$, we add \tundirect{c_i}{F_j}{3i+1} and \tundirect{c_i} {F_j}{3i+2}. 
    This completes the construction.
    See Figure \ref{fig:SUD_TWD_proof} for an example.
    
    \begin{figure}[t]
            \centering
            \includegraphics[width=\columnwidth]{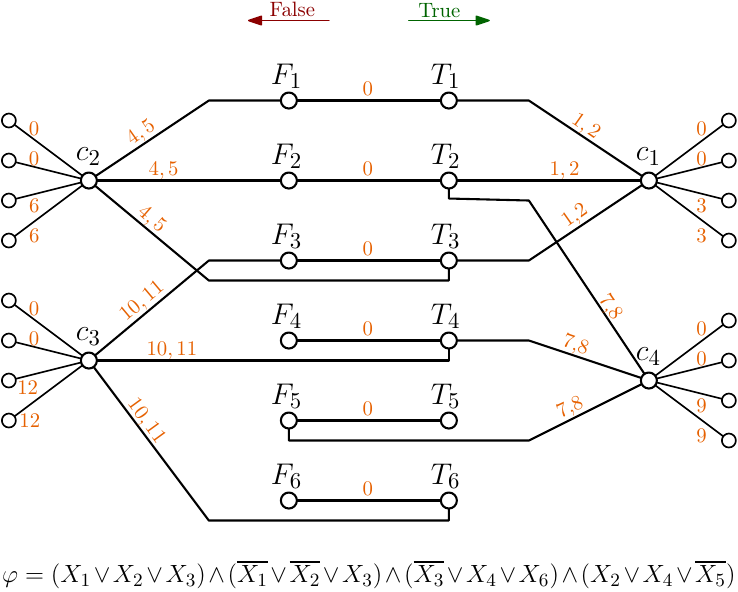}
            \caption{Example of the construction for \Cref{thm:SUD_TWD}. 
            % Variable gadgets are in the center, with clause gadgets on both sides. 
            % The labels for the auxiliary vertices on the clause gadgets are omitted for readability.
            }
            \label{fig:SUD_TWD_proof}
        \end{figure}
    
    %some intuition    
    First, observe that the size of any EEC for \gcal is at least $h+2\ell$ because \gcal contains exactly this many edges with time label $0$ and we only allow strict walks.
    We show that there is a satisfying assignment for $\varphi$ if and only if $\gcal$ allows for a strict \eecShort with exactly $h+2\ell$ walks.
    
    The main idea of the reduction is to utilize the undirected edges to simulate the assignment of True or False to a variable. A walk taking a variable-edge from \textit{left to right} will be equivalent to that variable being set to True and, respectively, a walk going from \emph{right to left} will be equivalent to setting that variable to False. A satisfying assignment will then enable the walks - forced to form by the edges with time step zero - to cover all assignment-edges without needing an additional path. 
    
    In the following we will emphasize the direction in which an undirected edge is taken by the order in the set, \ie a nonempty strict temporal walk taking the edge \tundirect{v}{u}{t} can only be extended after time step $t$ and only at the vertex $u$, while after taking \tundirect{u}{v}{t}, the walk can only be extended at $v$.

    \bigparagraph{($\Rightarrow$)\quad}% \sat{3} to strict undirected $h+2\ell$-\TWEEC.}
        %\tiger{part I: its minimal W, part II: its a TWEEC}
        Let $\beta$ be a satisfying assignment for $\varphi$.
        We define a set $\wcal$ of temporal walks as follows.
        For every variable $X_j$ in $\varphi$, we start a walk $w_{X_j}$ at the variable edge $\tundirect{F_j}{T_j}{0}$, \ie from \emph{left to right}, if and only if $\beta(X_j)=1$. 
        For every clause $C_i$ in $\varphi$, we start two walks $w_{c_i,1}, w_{c_i,2}$ at the start-cleanup-edges $\tundirect{a_i^1}{c_i}{0}$ and $\tundirect{a_i^2}{c_i}{0}$ towards the vertex $c_i$.
        
        Now, to cover the assignment-edge-pairs connecting a clause to the gadgets of its contained variables, we extend these two types of walks. 
        First, note that since $\beta$ is a satisfying assignment, at least one literal of each clause has to be satisfied; without loss of generality, we assume that to be the first literal.
        Now, for every clause $C_i$, let $X_1,X_2,X_3$ be the variables corresponding to the literals $L_{i,1}, L_{i,2}, L_{i,3}$ in $C_i$, respectively. We extend $w_{X_1}$ by the assignment-edge-pair $\tundirect{T_1}{c_i}{3i+1}, \tundirect{c_i}{T_1}{3i+2}$, if $L_{i,1}=X_1$ is a positive literal. Otherwise, if $L_{i,1}=\overline{X}_1$ is a negative literal, we extend $w_{X_1}$ by the edge-pair $\tundirect{F_1}{c_i}{3i+1}, \tundirect{c_i}{F_1}{3i+2}$. This covers the first assignment-edge-pair adjacent to $c_i$.
        To cover the other two edge-pairs, we extend $w_{c_i,1}$ and $w_{c_i,2}$, and, after that, also add the end-cleanup-edges adjacent to $c_i$.
        So, for example if $L_{i,2}=X_2$ is a positive literal and $L_{i,3}=\overline{X_3}$ is a negative literal, the walks would be constructed as $w_{c_i,1}=(\tundirect{a_i^1}{c_i}{0}, \tundirect{c_i}{T_2}{3i+1}, \tundirect{T_2}{c_i}{3i+2}, \tundirect{c_i}{a_i^3}{3i+3})$ and $w_{c_i,2}=(\tundirect{a_i^2}{c_i}{0}, \tundirect{c_i}{F_3}{3i+1}, \tundirect{F_3}{c_i}{3i+2}, \tundirect{c_i}{a_i^3}{3i+3})$.
        
        By construction, the walks in \wcal are properly defined, every temporal edge is contained exactly once and they form walks adhering to the temporal order of edges. Observe that the last claim follows from $\beta$ being a satisfying assignment in conjunction with our definition of the walks $w_{X_j}$. Because, the walk takes the variable-edge towards the True-vertex if and only if $\beta(X_j)=1$, $w_{X_j}$ is on the ``correct'' side of the edge to cover the assignment-edge-pair whenever $X_j$ is in the first literal -- which, by our assumption, is satisfied by $\beta$.
        As $\lvert \wcal \rvert = h+2\ell$ and by our earlier observation, any minimal strict walk exact edge-cover in \gcal has at least $h+2\ell$ walks, \wcal is minimal.
    
    \bigparagraph{($\Leftarrow$)\quad}% Strict undirected $h+2\ell$-\TWEEC to \sat{3}.}
        For the other direction, let $\wcal$ be an exact edge-cover with strict walks in \gcal of size $h+2\ell$. 
        We define a truth-value assignment $\beta$ for $\varphi$ by setting $\beta(X_j)=1$ if and only if the edge in the variable-gadget of $X$ in \gcal is taken from \emph{left to right}, \ie as $((F_j,T_j),0)$. As every edge is contained in exactly one walk of \wcal, this assignment is properly defined.
        We show that $\beta$ satisfies the \sat{3} formula $\varphi$.
    
        First, observe that there has to be a distinct walk for every edge at time step 0, \ie one walk for each variable-gadget and two walks starting in each clause-gadget. As there are exactly $h+2\ell$ such edges, there can be no additional walks in \wcal.
        Additionally, in the $i$\textsuperscript{th} clause there are two edges at time step $3i+3$ adjacent to $c_i$ happening after any other edge adjacent to $c_i$. At these edges, two walks have to end.
        
        The edges whose membership to the walks is mostly unclear are the assignment-edge-pairs.
        Without loss of generality, we assume that those edge-pairs are both contained in the same walk, as the set of end-points for the walks after the corresponding time-steps would be the same -- one walk would be on the side of the variable-gadget and the other walk would be on the side of the clause-gadget. 
    
        Out of the three assignment-edge-pairs adjacent to one clause, only up to two can be covered by the walks starting in the clause-gadget without needing an additional walk. Additionally, one walk can only cover at most one edge-pair as the walks are strict and the three edge-pairs have the same time labels.
        This implies that at least one of these pairs will have to be covered by a different walk. The only option to cover such an edge-pair without introducing a new walk is to extend a walk from the \emph{correct} side of the variable-gadget, \ie the walk has to be on the True-side if the literal was positive, otherwise on the False-side.
        Observe that after covering an assignment-edge-pair, the walk that started in the variable-gadget will be on the same side in the variable gadget again and since the edges adjacent to one clause-gadget happen all before the next clause-gadget, that walk corresponding to the variable can cover assignment-edge-pairs of as many clauses as possible. 

        In summary, there is one walk starting at every edge with time step 0. In particular, there is one walk for each variable in $\varphi$, ending on the True or False side of each variable-gadget. Furthermore, for each clause-gadget there is one assignment-edge-pair connecting to the corresponding side of the variable-gadget of one of the three contained literals, which is not covered by a walk starting in the clause-gadget, but has to be covered by a walk starting in the variable-gadget. This can be if and only if the variable-edge was taken \emph{left to right} if the literal was positive and \emph{right to left} if the literal was negative.
        Therefore, by setting $\beta(X)=1$ if and only if the variable-edge was taken \emph{left to right}, $\beta$ satisfies every clause in $\varphi$.
    \end{proof}
    \begin{theorem}
    \label{thm:NSUD_TWD}
        \TWEECnoK{NS-} on undirected temporal graphs is \NP-hard.
    \end{theorem}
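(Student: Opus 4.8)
The plan is to reduce from \sat{3}, reusing the gadget architecture of \Cref{thm:SUD_TWD}: one variable-gadget per $X_j$ consisting of the single edge $\tundirect{T_j}{F_j}{0}$, whose traversal direction encodes the truth value of $X_j$, and one clause-gadget centered at $c_i$ whose start- and end-cleanup edges both pin down the total number of walks and supply the two ``helper'' walks that are meant to cover two of the three assignment-edge-pairs. As in the strict case, a satisfying assignment should leave each variable-walk sitting on the correct side of its gadget, so that the third, otherwise-uncovered, assignment-pair of every clause can be absorbed without spawning a new walk, and the target size stays $h+2\ell$.

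The only genuinely new phenomenon is that a non-strict walk may traverse several edges carrying the same label consecutively, so the clean ``one walk per label-$0$ edge'' lower bound used in \Cref{thm:SUD_TWD} no longer holds verbatim. Two concrete loopholes have to be closed. First, a single walk could cover \emph{both} start-cleanup edges of a clause in one shot via $a_i^1\to c_i\to a_i^2$ (both at label $0$) and then die at the degree-one vertex $a_i^2$. Second, at labels $3i+1,3i+2$ a walk could weave through $c_i$, covering one edge of a pair and then one edge of a \emph{different} pair rather than returning along the same pair. I would neutralize these by refining the construction: spreading the cleanup edges and the three assignment-pairs onto distinct consecutive labels, and, if needed, splitting $c_i$ into separate ``ports'' so that each assignment-pair is forced to be traversed by a single walk that returns to the side it entered from. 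The degree-one endpoints $a_i^\ast$ together with the fact that label $0$ is globally earliest keep every variable-edge and every start-cleanup edge as a forced walk-start, which re-establishes the lower bound of $h+2\ell$ walks even for non-strict covers.

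With the lower bound in hand, the forward direction is immediate: the $h+2\ell$ walks built from a satisfying assignment in \Cref{thm:SUD_TWD} are already non-strict walks, so they witness an \eecShort of the target size. For the backward direction I would read the assignment off the variable-edge directions exactly as before and argue clause-by-clause that covering all edges incident to $c_i$ within the shared budget of two walks forces at least one literal onto its satisfying side. I expect this soundness direction to be the main obstacle, precisely because non-strict walks are more flexible: I must enumerate how the edges at $c_i$ can be partitioned among the walks and show that every budget-respecting partition still certifies a satisfied literal. In particular, the merged-start-cleanup case from the first loophole must be handled, by showing that when both start-cleanup edges are served by one dead-ended walk, all three assignment-pairs must instead be served by variable-walks, which imposes a strictly \emph{stronger} (hence harmless) condition than needing a single correct literal. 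Finally, since a purported cover can be verified in polynomial time, membership in \NP gives \NP-completeness, so the same reduction yields \NP-hardness as claimed.
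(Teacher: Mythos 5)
Your plan is \emph{not} the paper's approach, and the patches you propose do not close the gaps you yourself identify. The core obstruction is structural: for non-strict walks, \emph{no} pair of pendant edges at $c_i$ can ever force two separate walk-starts, since a single walk may traverse $a_i^1 \to c_i \to a_i^2$ whenever labels are non-decreasing. Hence your claim that the degree-one endpoints together with label $0$ being globally earliest ``re-establish the lower bound of $h+2\ell$'' directly contradicts your own loophole~1 and is false; the robust lower bound is only $h+\ell$, and the entire budget accounting of \Cref{thm:SUD_TWD} collapses. Your concrete fixes make things worse. Spreading the two cleanup edges onto distinct consecutive labels still permits the merge (non-decreasing labels suffice), and spreading the three assignment-pairs onto disjoint label intervals is fatal to soundness: a single cleanup walk sitting at $c_i$ can then cover \emph{all three} pairs by going out-and-back three times in increasing label order, so every formula, satisfiable or not, admits a cover of size $h+2\ell$. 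The ``ports'' fix, if it really forces each pair to be traversed by one walk returning to its entry side, removes the cleanup walks' ability to cover any pair, which flips the certified condition from ``at least one literal true'' to ``all three literals true''. Finally, your treatment of the merged-start-cleanup case is unsound: merging does not force the three pairs onto variable-walks. It frees a walk of the \emph{global} budget, and that spare walk may be started anywhere---for instance on the wrong side of some variable-gadget---where it acts as a phantom variable-walk and can service the ``missing'' pair of every later clause adjacent to that side. Because walks can also migrate between sides through clause-gadgets, the budget argument is global rather than per-clause, so the clause-by-clause enumeration you plan cannot be completed as described.

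The paper sidesteps all of this by \emph{abandoning} the cleanup architecture in the non-strict case instead of repairing it. It keeps only the variable edges \tundirect{T_j}{F_j}{0}, so the walk count is pinned at $h$ by $h$ pairwise non-adjacent label-$0$ edges---a bound that survives non-strictness---and, for each clause $C_i$, it attaches \emph{all} assignment edges (two subdivided $c_i$--side connections per literal, through fresh intermediate vertices) with the \emph{same} label $i$. That clause subgraph is connected with all degrees even, so a single walk sitting on a correct side can sweep it as an Eulerian circuit and return to where it started: non-strictness is exploited rather than fought. Soundness then rests on a parity invariant: with at most one walk per side, any exact cover of a clause's edges decomposes into trails starting at occupied sides, and each such side receives exactly one trail-end, so the set of occupied sides---i.e.\ the encoded assignment---is preserved across all clauses; a clause can be covered only if one of its sides is occupied. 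If you want to rescue your route, you would in effect have to rebuild this invariant, at which point you may as well adopt the paper's construction for \TWEECnoK{NS-}.
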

    \begin{proof} We reduce from \sat{3}.
    Given a \sat{3} instance $\varphi = \bigwedge_{i\in[\ell]}(\bigvee_{j\in[3]}L_{i,j})$, we construct an undirected temporal graph $\gcal:= \tuple{G, \ecal}$ similar to the construction in \Cref{thm:SUD_TWD} for strict walks.
    The construction of the variable-gadgets is exactly the same.
    And for each clause $C_i$ in $\varphi$, we create a clause vertex $c_i$.

    \begin{figure}[ht]
        \centering
        \includegraphics[width=\columnwidth]{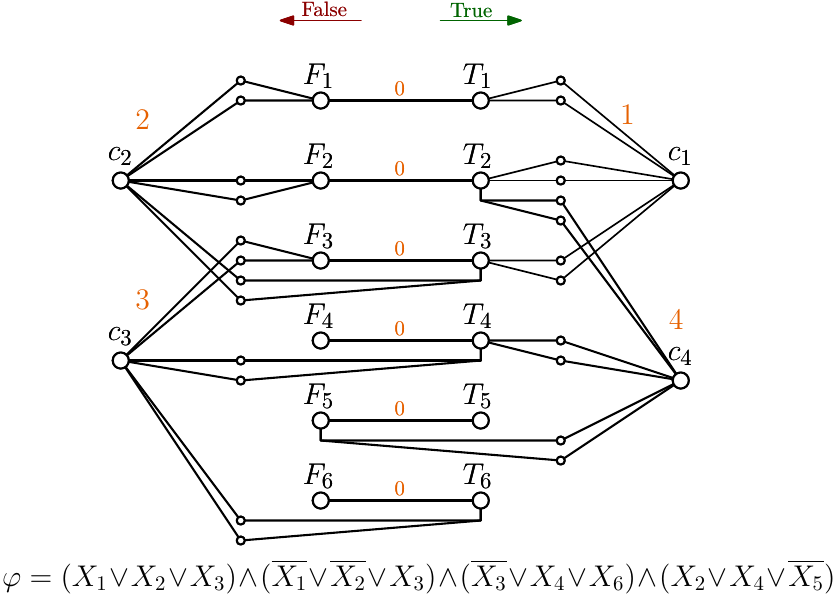}
        \caption{Example of the  construction for \Cref{thm:NSUD_TWD}. %Note that the labels for the artificial vertices between the variable- and the clause-gadgets are omitted, they are needed so $\gcal$ is not a multigraph.
        }
        \label{fig:NSUD_TWD_proof}
    \end{figure}
    The main difference lies in how we construct the assignment-edges.
    Because a non-strict walk starting in a variable-edge can consecutively take edges with the same time label, we can dedicate to each clause $C_i$ a time label $i$ and use this label on all assignment-edges adjacent to the clause.
    For each literal $L_{i,j}$ in clause $C_i$ we construct two edges connecting the clause-vertex with the corresponding side of the variable-gadget at time step $i$. As this would result in a multigraph, we require one intermediate vertex on each edge. Refer to \Cref{fig:NSUD_TWD_proof} for an illustration.
    For example, for a positive literal $L_{i,1}=X_1$ contained in $C_i$, we add:
    \tundirect{c_i}{b_i^{1}}{i} , \tundirect{b_i^{1}}{T_1}{i} and \tundirect{c_i}{b_i^{2}}{i} ,\tundirect{b_i^{2}}{T_1}{i}.
    %The assignment-edge-pairs to connect the variable-gadgets to their relevant clause-gadgets are similar but instead of assigning different time steps, we give each assignment-edge two labels of $i$.
    %\tiger{this example figure needs adjusting, so that it has an actual SAT formula and maybe a version without the walks solution}
    Observe that there are $h$ non-adjacent zero edges, so any exact edge-cover contains at least $h$ walks.

    The main idea is very similar to that of \Cref{thm:SUD_TWD}; the key difference is how we cover the assignment-edges without the cleanup walks.
    As previously, we have $h$ walks starting at variable-edges, going left/right to simulate True/False assignments. Each of these walks will extend to cover \emph{all} assignment-edges for every clause $C_i$ in which they are the first literal $L_{i,1}$ (satisfying the clause). 
    We can do this with exactly one walk because they are \emph{non-strict}.
    %that a variable simulating satisfying a clause covers the entire set of assignment-edges-pairs of the clause-gadget with just one walk.
    So, utilizing that assignment-edges associated with a clause-gadget all have the same time label and are connected, the walk of the satisfying variable $w_{X_1}$ will cover all assignment-edge-pairs, in contrast to covering just its own edge-pair.
    % %% construct the non-strict walks in $\wcal$, using the first literal to cover clause gadgets exactly as we did in previous theorem, only difference is how we extend the walks- but still they always return to their assignment vertex (T_j or F_j) ready to cover the next clause for which they're the first literal (the one that we said satisfies the clause). As we argued before, by our construction the walks are properly defined and each temporal edge is covered exactly once because we know that given $\beta$ every clause is satisfied (at least) once by one of its contained literals. Because of how we constructed the walks, they respect temporal order. In addition, it's indeed minimal because the \TWEEC we defined only uses $h$ walks.  
    \end{proof}
    \fi
% We observe that in all three hardness reductions, choosing the terminals for the walks in an eec can simulate assigning boolean values to variables of a logical formula. This motivates analysing these three problems with given terminals.

% \section{Polynomial Algorithms for Walk EECs With Fixed Terminals}
\section{Walks with Fixed Terminals -- Polynomial Time Algorithms}
\label{sec:fxdTerminals}
    In this section, we focus on exact edge-covers with fixed terminals.
    Observe that this also fixes the number of journeys (one journey per terminal pair), but we emphasize that this is \textbf{not} the same as having a \textbf{constant} number of journeys.
    
    It is easy to see that our hardness constructions for paths and trails inherently fix the terminals, thereby translating directly.
    %It is easy to see that our hardness constructions for paths and trails inherently fix the terminals and thus translate.
    In contrast, for walks, we \textit{chose} the terminals to \textit{simulate} a Boolean assignment. This is not incidental, %As we will see, this is not a coincidence, 
    as having fixed terminals makes finding an eec with walks tractable.
    % In all three hardness reductions for eec using walks, we simulated boolean assignment by choosing the terminals. % for the walks. This motivates the study of the problem with given start- and end-terminals.   Note that in the path and trail reductions, the terminals are clear and thus the hardness results transfer.

    \subsection{Directed Graphs}
    
    Strict walks in directed graphs can be solved without fixing terminals using a dynamic program computing possible endpoints with increasing time. For non-strict walks with fixed terminals, this program can be adapted, as we avoid the issue of choosing the vertices to start on (in particular on a cycle). 
    So, we initiate one walk per start-terminal and extend those at each time step, possibly by multiple edges.
    % Intuitively, we start one walk at each start-terminal and, at each time step, extend the walks (possibly by multiple edges) and update the endpoints.
    % Note that, as in the case for strict walks in directed graphs, it does not matter which walk is extended by which edges. 
    \iflong 
    \begin{theorem}
    \fi
    \ifshort
    \begin{theorem}[$\star$]
    \fi\label{thm:DNS_TWD_fixed_terminals}
        \TWEECnoK{NS-} on directed temporal graphs can be solved in polynomial time if the start-terminals along with their multiplicity are part of the input.
    \end{theorem}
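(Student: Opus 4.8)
The plan is to exploit that fixing the start-terminals together with their multiplicities fixes both the number of walks and where each one begins, so the only remaining freedom is how the walks are routed. I would maintain a multiset $P$ of \emph{current endpoints} of the partial walks, initialised to the given terminals, and process the snapshots $D_t=(V,E_t)$ in increasing order of $t$. Since the walks are non-strict, within a single snapshot each walk may traverse an entire directed sub-walk of $D_t$; covering $E_t$ exactly therefore amounts to partitioning $E_t$ into edge-disjoint directed walks, each of which must begin at a vertex carrying a current endpoint (a walk cannot leave its weakly connected component of $D_t$, since walks are connected and distinct components share no edge).

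The crucial observation, which removes all branching, is that the endpoint multiset after a snapshot is forced. Writing $\mathrm{exc}_t(v)=\delta^{out}_{E_t}(v)-\delta^{in}_{E_t}(v)$, for \emph{any} partition of $E_t$ into walks the number of walks that start at $v$ minus the number that end at $v$ equals $\mathrm{exc}_t(v)$ (each closed walk is balanced at $v$, while an open walk contributes $+1$ at its start and $-1$ at its end). Hence, regardless of the routing chosen, the updated endpoint multiset is exactly $P'(v)=P(v)-\mathrm{exc}_t(v)$. In particular the state after snapshot $t$ is independent of the decomposition, so the algorithm can apply this deterministic update and never track several candidate configurations. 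This is precisely what makes the problem tractable here, in contrast with the unfixed-terminal case (\Cref{thm:DNS_TWD}), where the freedom to \emph{choose} the starting vertex on a balanced cycle is exactly what encodes \hittingset.

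It then remains to decide, at each step, whether $E_t$ can be covered at all given the current $P$. I would prove the characterisation: $E_t$ admits such a cover if and only if (i) $P(v)\ge \mathrm{exc}_t(v)$ for every vertex $v$ with $\mathrm{exc}_t(v)>0$, and (ii) every weakly connected edge-component of $D_t$ that is balanced (Eulerian) contains at least one current endpoint. Necessity of (i) holds because a vertex with out-excess $d$ has $d$ out-edges that cannot be entered and must therefore begin $d$ distinct walks; necessity of (ii) holds because the edges of a component can only be covered by a walk already situated in it. For sufficiency I would invoke the standard fact that a weakly connected directed multigraph decomposes into $\max(1,\sum_v\max(\mathrm{exc}_t(v),0))$ edge-disjoint trails running from positive-excess to negative-excess vertices (or a single closed trail when balanced); conditions (i)--(ii) guarantee that enough endpoints sit at the required start vertices to realise such a decomposition.

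Putting these together, the algorithm initialises $P$ from the input terminals, and for $t=1,\dots,\tmax$ checks (i)--(ii) and, if they hold, replaces $P$ by $P-\mathrm{exc}_t$; it accepts if and only if every snapshot passes. Each step costs $\bigoh(|E_t|+|V|)$ (degree computation plus one weakly-connected-components pass), giving an overall polynomial running time. The main obstacle I expect is the per-snapshot analysis: establishing the clean excess-based invariant of the second paragraph and proving that conditions (i)--(ii) are both necessary and sufficient. The delicate point is the Eulerian components, where a cycle may be covered starting from any of several present endpoints, and one must verify that this choice leaves the endpoint multiset unchanged (it does, by the excess formula) so that a single present endpoint always suffices.
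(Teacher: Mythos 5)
Your proposal is correct and takes essentially the same route as the paper's proof: process snapshots in time order, maintain the multiset of current walk endpoints, show by flow conservation that this multiset is forced (your update $P'(v)=P(v)-\mathrm{exc}_t(v)$ is exactly the paper's statement that $o_t(v)=\delta^{\mathrm{in}}_t(v)+s_t(v)-\delta^{\mathrm{out}}_t(v)$ is the unique number of walks left at $v$), and check a per-snapshot feasibility condition. Your condition (ii) on balanced components is, given (i), equivalent to the paper's condition that every edge of $E_t$ be reachable from a current endpoint, and your Eulerian trail-decomposition argument for sufficiency is just a different (equally valid) packaging of the paper's cycle-peeling procedure.
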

    \iflong
    
\begin{proof}
    \newcommand{\TermStartSet}[1]{\ensuremath{S_{#1}}\xspace}
    \newcommand{\TermEndSet}[1]{\ensuremath{Z_{#1}}\xspace}
    Let $\gcal$ be a directed temporal graph and $V_S,V_E\subseteq V$ the multiset of $k$ start-terminals and end-terminals, respectively. %, and let $s(v)$ denote the number of walks starting on vertex $v$.
    We provide a constructive proof that iteratively builds exact edge-covers $\wcal_t$ for $\gcal_{\leq t}$, where $\gcal_{\leq t}$ denotes the temporal subgraph of \gcal which contains only edges up to time label $t$.
    For each timestep $t$, we store in the multiset $V_t$ the vertices that each walk is currently ``sitting'' in. So, if 3 walks arrive in $u$ before timestep $t$ and do not leave $u$ before $t+1$, then $\{u,u,u\}\in V_t$.
    We initialize the multiset as $V_0=V_S$ and $\wcal_0$ with empty walks, one for each vertex in $V_S$.

    At time step $t$, the algorithm extends the walks in $\wcal_{t-1}$ to cover all edges $E_t$ in the snapshot $\gcal_t$.
    It does so by iteratively identifying cycles in $\gcal_t$ and attaching those to any walk that is on the cycle. This does not change $V_t$. If there is no walk on any cycle, the algorithm returns there is no walk eec. 
    Now, assume there is no cycle in $\gcal_t$. Then every edge $((u,v),t)$ is attached to any walk in $u$ and we update $V_t=V_t-\{u\}+\{v\}$. If the algorithm processes an edge for which $u\notin V_t$, the algorithm returns there is no walk eec.  
    By construction, $\wcal_t$ is an exact edge-cover for $\gcal_{\leq t}$.

    First, we show that $V_{t+1}$ is unique.
    This implies that if the algorithm outputs a $\wcal_{\tmax}$ after processing every time step, $\wcal_{\tmax}$ is an exact edge-cover for $\gcal$ if and only if $V_{\tmax}=V_E$, if one exists.

    Assume we process $t$ with the walks $\wcal_{t-1}$ sitting in $V_t$. We denote by $s_t(v)$ the number of walks sitting in $v$.
    For each vertex $v$, we denote the offset between the number of incoming and outgoing temporal edges for each vertex $v$ by $o_t(v) = \tindegree_t(v) + s_t(v) - \toutdegree_t(v)$. Recall, $\tindegree_t(v)$ and $\toutdegree_t(v)$ denote the temporal edges of $E_t$ adjacent to $v$ and $s_t(v)$ denotes the number of walks starting on vertex $v$ at $t$.
    %For the edges in $E_t$, define the offset between the number of incoming and outgoing temporal edges for each vertex $v$ to be $o_t(v)=\tindegree_t(v) + s_t(v) - \toutdegree_t(v)$, where $\tindegree_t$ and $\toutdegree_t$ only count the temporal edges in $E_t$.
    % Note that we count the start-terminals in a vertex the same as its incoming edges.

    Two conditions must be met to cover all edges of $E_t$ exactly once: 
    (1) There can be no vertex $v$ with $o_t(v) < 0$, and (2) all edges in $E_t$ must be reachable from vertices in $V_t$ by using only edges in $E_t$.
    If (1) is violated by a vertex, it can not be visited by walks often enough to cover all outgoing edges.
    Condition (2) must hold because we cannot extend walks from vertices that are not in $V_t$.
    
    If (1) and (2) both hold, covering all edges in $E_t$ via walks extended from $V_t$ is possible.
    All that's left is to show that the multiset of end-vertices is the same for every possible cover.
    The value $o_t(v)\geq 0$ denotes the minimum number of walks sitting in a vertex $v$ after time step $t$, as it is the difference between incoming (plus start-terminals) and outgoing edges.
    Since an outgoing edge for one vertex is an incoming edge for another, it holds $\sum_{v \in V} s_t(v)=\sum_{v\in V}o_t(v)$.
    Hence, no additional end-vertices can exist, and $o_t(v)$ is exactly the number of walk sitting at $v$, \ie $s_t(v)=o_t(v)$. See \Cref{fig:NSD_TWD_fixed_terminals} for an illustration of the correspondence between the start- and end-vertices in each time step.
    \iflong 
    \begin{figure}[h]
        \centering
        \includegraphics[width=0.7\columnwidth]{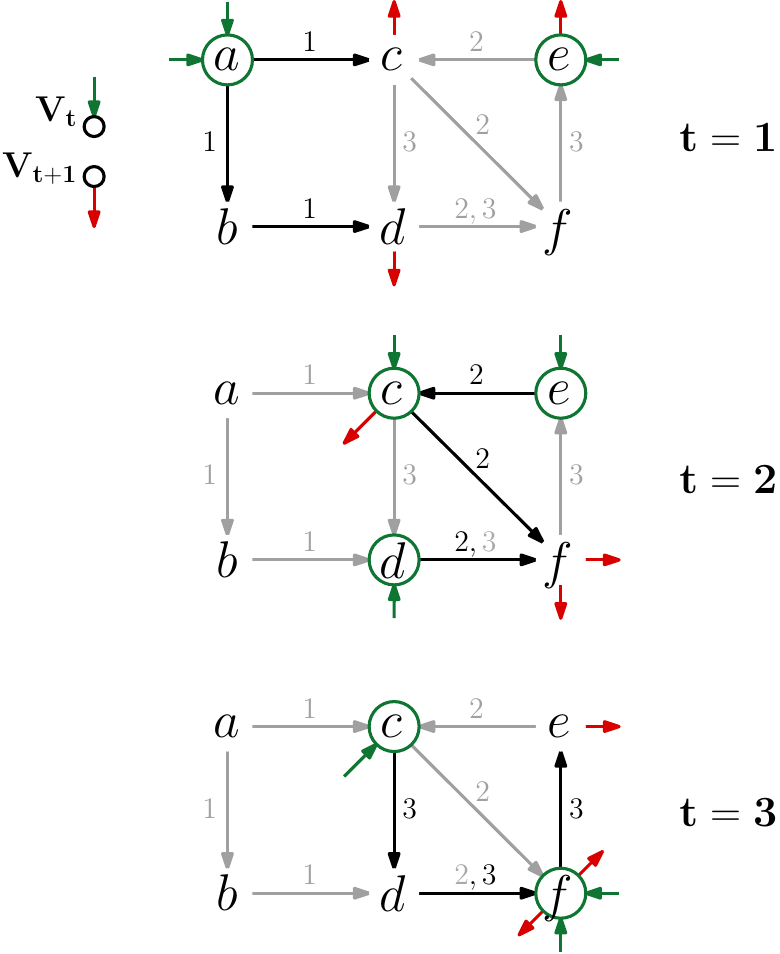}
        \caption{Illustration for non-strict walks in directed graphs \Cref{thm:DNS_TWD_fixed_terminals}. % for a sample temporal graph $\gcal$ with $\tmax=3$ and three start-terminals.
        For each time step $t\in[3]$, all but $E_t$ is grayed out, $V_t$ is illustrated by incoming green edges with no source and $V_{t+1}$ by outgoing red edges with no sink.}
        \label{fig:NSD_TWD_fixed_terminals}
    \end{figure}
    \fi
    %=====END
\end{proof}
    \fi

    \subsection{Undirected Graphs}
    In undirected graphs, even with terminals, we must choose the direction of each temporal edge connecting two walks.
    This resembles the problem of finding a circulation flow.
  
    \ifshort
    For non-strict walks, this can be simulated in a static mixed multigraph, similar to the static expansion by \citet{kostakos_temporal_2009}, and solved via  \SWEC with fixed terminals in polynomial time using a flow algorithm.
    \begin{theorem}[$\star$]
    \label{thm:static-walk-fxd-term}
        \SWEC with fixed terminals on static mixed multigraphs (containing directed and undirected multiedges) can be computed in polynomial time. %\todo{A: we should put this before the statement of our Thm. In addition, do we prove this result, or are we getting it from somewhere else?}
    \end{theorem}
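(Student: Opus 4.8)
The plan is to observe first that, because an exact edge-cover uses every edge exactly once, no single walk can traverse an edge twice; hence each walk in the cover is in fact a \emph{trail} in the static sense, and a walk exact edge-cover is precisely a decomposition of the edge set into $k$ edge-disjoint trails whose start-vertices form the prescribed multiset $V_S$ and whose end-vertices form $V_E$. Traversing an undirected edge commits it to a direction, so I would reformulate the task as: \emph{orient every undirected edge} so that the resulting directed multigraph $D$ admits a trail decomposition with the given terminals. A directed multigraph decomposes into trails with start-multiset $V_S$ and end-multiset $V_E$ exactly when (i) the balance $d^+_D(v)-d^-_D(v)=s(v)-z(v)$ holds at every vertex, where $s(v),z(v)$ are the multiplicities of $v$ in $V_S,V_E$ and $d^+,d^-$ denote out- and in-degree, and (ii) every weakly connected component containing at least one edge contains at least one terminal. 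Condition (ii) depends only on the underlying undirected graph, so it can be tested directly on the input, independently of the orientation.

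Since $d^+_D(v)+d^-_D(v)=\deg_G(v)$ always holds, the balance condition fixes the total out-degree of each vertex, and therefore fixes the number $g(v)$ of undirected edges that must leave $v$, namely $g(v)=\tfrac12(\deg_G(v)+s(v)-z(v))-d^+_A(v)$, where $d^+_A(v)$ counts the fixed directed arcs leaving $v$. Thus the core of the algorithm is to decide whether the undirected edges admit an orientation with prescribed out-degrees $g(v)$. I would first record the necessary integrality/non-negativity conditions on $g(v)$ (integrality being equivalent to the parity condition $s(v)-z(v)\equiv \deg_G(v)\pmod 2$), and then let a single max-flow computation settle the remaining feasibility: build a network with a node for each undirected edge, a capacity-$1$ arc from the source to each edge-node, a capacity-$1$ arc from each edge-node to each of its two endpoints, and a capacity-$g(v)$ arc from each vertex $v$ to the sink. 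Because $\sum_v g(v)=|U|$, a maximum flow of value $|U|$ saturates all sink arcs and hence realizes the exact out-degrees; integrality of max flow then yields an integral orientation in polynomial time, where an edge routed to endpoint $v$ is oriented away from $v$.

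For correctness I would prove both directions. Necessity is immediate: any cover induces an orientation meeting the balance condition (so the flow is feasible) and places a trail start in each edge-bearing component (so (ii) holds). For sufficiency, given an orientation satisfying the balance condition together with (ii), I would argue component by component: within a component $C$ set $k_C=\sum_{v\in C}s(v)=\sum_{v\in C}z(v)\ge 1$, where equality of the two sums follows by summing the balance condition over $C$. Pair the $k_C$ starts with the $k_C$ ends arbitrarily and add a virtual return arc from each chosen end to its paired start. The augmented digraph is weakly connected and satisfies $d^+=d^-$ at every vertex, hence admits an Eulerian circuit; deleting the $k_C$ virtual arcs cuts this circuit into exactly $k_C$ trails realizing the prescribed terminals. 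Concatenating over all components yields the desired cover, and the whole procedure runs in polynomial time.

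The main obstacle I anticipate is the sufficiency argument, specifically justifying the balanced-orientation-plus-connectivity characterization of trail decompositions with prescribed terminals and making the virtual-arc/Eulerian-circuit construction rigorous for multigraphs, including the degenerate cases of edgeless components and of closed trails where a start and an end coincide at one vertex. The flow formulation of the orientation step is routine and its integrality is standard, and the connectivity test (ii) is a direct graph traversal; the genuinely delicate point is verifying that the \emph{exact} out-degree prescription is simultaneously necessary and, in combination with (ii), sufficient, which is where I expect to spend most of the effort.
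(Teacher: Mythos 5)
Your proposal is correct, but it takes a genuinely different route from the paper's. You reduce the \emph{undirected} part away: after observing that every walk in an exact cover is a trail, you characterize trail decompositions of a directed multigraph by the balance condition $d^+_D(v)-d^-_D(v)=s(v)-z(v)$ together with the requirement that every edge-bearing component contains a terminal, and you obtain a feasible orientation of the undirected edges by one degree-constrained-orientation max-flow; the trails themselves then come from the classical virtual-arc-plus-Eulerian-circuit argument applied per component. The paper performs the opposite reduction: it eliminates each \emph{directed} edge $(u,v)$ by deleting it and introducing an artificial end-terminal attached to $u$ and an artificial start-terminal attached to $v$, yielding a purely undirected instance; it then characterizes solvability of that instance by parity (every non-terminal vertex has even degree) plus a Menger-type cut condition (the minimum $S$-$T$ edge-cut equals $|S|$), finds $|S|$ edge-disjoint $S$-$T$ paths by Ford--Fulkerson, absorbs the remaining even-degree edges as Eulerian detours in the style of Hierholzer, and finally re-concatenates walks across the artificial terminals, splicing in any cycles that consist solely of artificial terminals. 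Both arguments cost one max-flow computation plus Eulerian post-processing, so the running times are comparable; what your route buys is a more symmetric feasibility characterization and no final re-concatenation step (your Eulerian-circuit argument handles closed trails and leftover cycles for free), while the paper's route buys a standalone lemma for the undirected case whose cut condition serves as a succinct infeasibility certificate and which the paper reuses elsewhere. The degenerate cases you flag are real but harmless: your balance condition forces $s(v)=z(v)$ at terminals in edgeless components, and in the paper's intended application every terminal has degree one, so these cases never arise there.
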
\fi
    
    % \iflong For strict walks, this approach fails with more than two edges sharing a time label at a vertex.
    %  We give an illustration here. [...]
    % Therefore, we can solve \TWEECnoK{S-} this way only for temporal graphs with distinct labels at each vertex. \fi
    % Instead, we solve \TWEECnoK{(N)S-} on undirected temporal graphs using \SWEC on static graphs which can be solved using an argument similar to finding an Eulerian trail.
    \ifshort
    
    Let us now state the main theorem of this section.
    \begin{theorem}[$\star$]
    \label{thm:UD_NS_TWD_fixed_terminals}
        \TWEECnoK{NS-} with fixed terminals on undirected temporal graphs can be computed in polynomial time.
    \end{theorem}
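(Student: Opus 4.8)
The plan is to reduce the problem to \SWEC with fixed terminals on a static mixed multigraph and then invoke \Cref{thm:static-walk-fxd-term}. The reduction goes through a time-expansion of \gcal in the spirit of the static expansion of \citet{kostakos_temporal_2009}, tailored so that (i) the freedom to orient each undirected temporal edge is captured by an undirected edge of the static instance, and (ii) the passage of time\,---\,including walks that wait at a vertex between two snapshots\,---\,is captured by directed edges that may only be traversed forward in time.

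Concretely, I would build a mixed multigraph $H$ with one vertex $(v,t)$ for every $v\in V$ and every relevant time step $t$. Each temporal edge $\tundirect{u}{v}{t}$ becomes an undirected edge between $(u,t)$ and $(v,t)$, so these edges are exactly the ones that must be covered once, and the orientation chosen for such an edge corresponds to the direction in which some walk traverses the temporal edge inside the snapshot. To move between snapshots I would add directed \emph{holdover} structures from layer $t$ to layer $t+1$ for each vertex, letting a walk currently sitting at $v$ carry over to the next snapshot. The fixed start- and end-terminals \TerminalStartSet, \TerminalEndSet\ of the temporal instance become the terminals of the static instance, placed at the first and last layer. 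The claim to prove is then the equivalence: \gcal\ admits a non-strict temporal walk eec with terminals \TerminalStartSet, \TerminalEndSet\ if and only if $H$ admits a static walk eec with the corresponding terminals. Given this, \Cref{thm:static-walk-fxd-term} finishes the argument in polynomial time, since $H$ has size polynomial in that of \gcal.

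For the equivalence I would argue both directions constructively. Forward: a non-strict temporal walk eec induces, within each snapshot, a decomposition of $E_t$ into edge-disjoint sub-walks whose endpoints are the vertices where walks enter and leave the snapshot; tracing each temporal walk through the layers and the holdover structure yields a static walk eec of $H$. Backward: any static walk eec of $H$ traverses holdover edges only forward in time, so projecting each static walk back to \gcal\ respects the temporal order and covers every temporal edge exactly once; the fact that each piece is a single connected walk guarantees that a closed sub-walk inside a snapshot (an Eulerian component with all degrees even) is attached to a walk that is actually present in that component, which is precisely the reachability condition that the directed algorithm of \Cref{thm:DNS_TWD_fixed_terminals} checks by hand.

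The main obstacle\,---\,and the reason this is the most technically involved construction\,---\,is to encode waiting faithfully under the exact-cover requirement. Each of the walks crosses every layer boundary exactly once, but at a boundary the number of walks sitting at a given vertex is itself part of what we are solving for, so we cannot simply fix how many holdover edges to place: too few make a valid cover impossible, while too many leave holdover edges uncovered and violate exactness. I would resolve this with a per-vertex, per-boundary gadget that routes a variable number of walks forward while guaranteeing that all of its own edges are covered exactly once\,---\,for example by giving the gadget enough parallel forward capacity together with auxiliary edges and a fixed auxiliary terminal pair that ``mops up'' any unused capacity as a single closed sub-walk. Making these gadgets interact correctly with the orientation freedom of the snapshot edges, so that degree balance and connectivity are enforced \emph{simultaneously} by the flow underlying \Cref{thm:static-walk-fxd-term}, is the crux; the remaining bookkeeping (polynomial size of $H$ and correctness of the terminal placement) is then routine.
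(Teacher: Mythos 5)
Your high-level strategy coincides with the paper's: time-expand \gcal into a static mixed multigraph, place the fixed terminals at the outermost layers, and invoke \Cref{thm:static-walk-fxd-term}. You also correctly isolate the crux---the number of walks waiting at a vertex between two snapshots is solution-dependent, so the holdover capacity cannot be naively fixed. However, your proposed resolution of that crux does not work, and this is a genuine gap rather than a missing detail. Your backward direction relies on the claim that ``any static walk eec of $H$ traverses holdover edges only forward in time,'' which holds only if \emph{all} holdover edges are directed forward. But then no walk can ever return from layer $t+1$ to layer $t$, so every walk---real or auxiliary---covers \emph{at most one} holdover edge at any given vertex-boundary. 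Exactness then forces: (number of real walks waiting at $v$) $=$ (number of forward edges placed) $-$ (number of auxiliary walks routed through that boundary), a quantity fixed by the construction; yet this is precisely the quantity the algorithm is supposed to determine, so the gadget either blocks valid temporal covers or strands uncovered edges. In particular, the ``single closed sub-walk that mops up any unused capacity'' cannot exist under forward-only edges (it would have to travel against their direction to return for a second unused edge), and a walk between a fixed auxiliary terminal pair is not closed in any case. The two directions of your equivalence thus impose mutually inconsistent requirements on the gadget, and no choice of ``enough parallel forward capacity'' resolves this.

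For contrast, the paper escapes the dilemma without auxiliary terminals: between consecutive copies $v_{t_i}$ and $v_{t_{i+1}}$ it places exactly $\alpha_i = \min(\tdegree_{\leq t_i}(v), \tdegree_{> t_i}(v))$ holdover edges, of which $\lceil \alpha_i/2 \rceil$ are directed forward and $\lfloor \alpha_i/2 \rfloor$ are \emph{undirected}. Unused capacity is then always even in number and at least half of it is undirected, so it can be consumed as back-and-forth closed detours spliced into real walks passing through the gadget; this is what makes the temporal-to-static direction go through. The price is that undirected holdover edges permit time travel in a static cover, which the paper repairs with a suffix-exchange argument: a walk using an undirected holdover edge backward must coexist with a forward-directed holdover edge covered by some other walk (because directed edges are at least as numerous as undirected ones), and swapping the two walks' suffixes at the meeting vertex strictly decreases the backward travel distance; iterating yields a time-respecting cover, which is what your backward direction actually needs. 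Both the exact capacity $\alpha_i$ and the half-directed/half-undirected split are essential to make the two directions hold simultaneously; your sketch contains no mechanism playing this role.
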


    \input{Proofs/proof_fixed-terminals_UD_(N)S}
    \fi
    
    \iflong % \subsection{Undirected Graphs}
\begin{figure*}[ht]
    \centering
    \includegraphics[width=\textwidth]{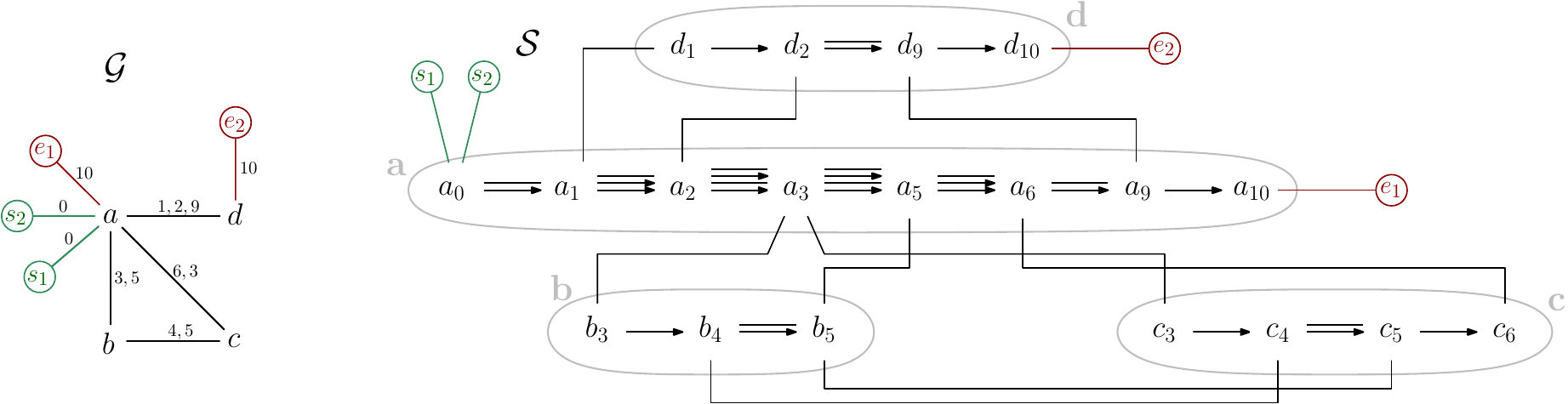}
    \caption{Illustration of how a temporal graph $\gcal$ (left) is translated into a static graph \scal (right) for \Cref{thm:UD_NS_TWD_fixed_terminals}. The terminals are $V_S=\{s_1, s_2\}$ and $V_E=\{e_1, e_2\}$. The vertices $\{a,b,c,d\}$ are replaced with gadgets based on their temporal degree.} 
    \label{fig:UD_TWD_temporal_to_static}
\end{figure*}
First, observe the following straightforward transformation of a temporal graph $\gcal$ with start- and end-terminals $V_S$ and $V_E$:  For each start-, respectively end-, terminal at $v$, we add an additional vertex $s_i$, resp. $e_i$, (with $i$ counting up from 1) along with an edge between $s_i$, resp. $e_i$, and $v$.
The edges to the start-terminals get a time label earlier than any label in \gcal, and the edges to the end-terminals a label later than any label in \gcal.
This yields a temporal graph $\gcal'$ with vertices $V\cup\TerminalStartSet\cup\TerminalEndSet$, whose walk exact edge-covers are equivalent to the eecs of \gcal. 
Note that a vertex in \gcal can be a start and end of multiple walks while at the terminal-vertices in $\gcal'$ there will be exactly one walk starting or ending.
In the following, we assume that we are given the temporal graph in the aforementioned form and
whenever we mention ``vertices'' this refers only to $V$ excluding the terminals.%\TerminalStartSet, \TerminalEndSet, 
% presented in \Cref{cor:GcalWithFixedTerminals}. \michelle{I discussed with George to change the order of the following blocks so that I have the construction, lemma, lemma, lemma and then the theorem. This will not change the arguments, so I will do this after this was read.}

The following theorem states our result.
\begin{theorem}
\label{thm:UD_NS_TWD_fixed_terminals}
    \TWEECnoK{NS-} for undirected temporal graphs can be solved in polynomial time if the start- and end-terminals with their multiplicity are part of the input.
\end{theorem}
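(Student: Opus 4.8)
The plan is to solve the problem by a time-expansion that turns $\gcal'$ into a static mixed multigraph \scal and then to reduce the existence of a non-strict walk eec with the prescribed terminals to a circulation/flow feasibility question on \scal, which is solvable in polynomial time. Throughout I use the augmented graph $\gcal'$ with terminal sets \TerminalStartSet and \TerminalEndSet described above, so that each $s_i$ and each $e_i$ is the endpoint of exactly one walk.

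First I would build \scal in the spirit of the static expansion of \citet{kostakos_temporal_2009} (see \Cref{fig:UD_TWD_temporal_to_static}). For every vertex $v$ and every time label $t$ at which $v$ is incident to a temporal edge, I create a local gadget; the gadgets of a fixed $v$ are threaded together, in increasing order of $t$, by directed \emph{waiting arcs} that point forward in time, and the terminal edges of $v$ attach to its first and last gadget. Each temporal edge $\tundirect{u}{v}{t}$ becomes an \emph{undirected} edge of \scal joining the layer-$t$ gadgets of $u$ and $v$; orienting this edge in the static cover encodes the direction in which a non-strict walk traverses the corresponding temporal edge. Because the waiting arcs are directed and point only forwards, no static walk can use edges out of chronological order, which is precisely the time-respecting constraint on temporal walks.

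The delicate ingredient is the per-vertex gadget. It must at once (i) permit a single non-strict walk to enter and leave $v$ \emph{several times} inside one snapshot, (ii) forbid any traversal backwards in time, and (iii) turn the free choice of traversal direction of each temporal edge into the orientation chosen by the flow. I would realize it by splitting each layer copy $v_t$ into an ``in'' part and an ``out'' part joined by directed arcs whose multiplicities are dictated by the temporal degree $\tdegree(v)$ at $t$ together with the incoming and outgoing waiting arcs. The point of these multiplicities is that balancing in- and out-degree at the gadget---the Eulerian/circulation condition---should hold if and only if the edges of each snapshot can be partitioned into walk-segments that glue consistently onto the waiting arcs and terminals. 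Choosing the multiplicities so that flow conservation in \scal coincides exactly with ``every temporal edge is covered exactly once by a connected family of walk-segments'' is the crux of the whole argument.

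Once the gadget is fixed, correctness is a two-directional translation: from a non-strict walk eec of $\gcal'$ one reads off each walk's traversal direction and waiting pattern to obtain a feasible orientation and routing in \scal, while conversely any feasible static walk cover projects back to non-strict temporal walks because the waiting arcs enforce chronological order and the gadget forces each snapshot's edges to assemble into valid segments. Finally, \scal has size polynomial in $\lvert\ecal\rvert+\lvert V\rvert$, and deciding whether its undirected edges admit an orientation meeting the terminal constraints is a standard circulation problem solvable by a polynomial-time flow algorithm; this yields the claimed polynomial bound. The main obstacle I anticipate is exactly step (iii) above: proving that the chosen arc multiplicities make the static circulation condition \emph{equivalent} to temporal coverability, so that no spurious covers are introduced and none are lost. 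Everything else is bookkeeping once the gadget and its multiplicities are pinned down.
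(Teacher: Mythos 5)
Your high-level plan---time-expand \gcal into a static mixed multigraph \scal and decide the result with flow techniques---is the same as the paper's, but the two concrete design choices you make break precisely the equivalence you identify as the crux. First, you make \emph{all} waiting arcs directed forward in time. If the static problem is an exact edge-cover (every edge of \scal used exactly once), this produces false negatives, because the number of walks that ``wait'' at a vertex across a given time gap is not determined by the instance: if $v$ has single incident temporal edges at times $t_1<t_2<t_3<t_4$, one eec may pair $t_1$ with $t_2$ and $t_3$ with $t_4$ (zero walks wait between $t_2$ and $t_3$), while another pairs $t_1$ with $t_4$ and $t_2$ with $t_3$ (two walks wait there); no fixed multiplicity of mandatory directed waiting arcs is compatible with both. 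This is exactly why the paper's gadget (\Cref{def:fxdwalks-construction}) places $\min(\tdegree_{\leq t_i}(v),\tdegree_{>t_i}(v))$ parallel edges between consecutive copies, of which only $\lceil\cdot/2\rceil$ are directed and the rest undirected: surplus waiting edges can then be absorbed by back-and-forth two-edge cycles spliced into a walk that visits $v_{t_i}$ (\Cref{lem:fxdwalks-temporalgcalTOstaticscal}), and the price---static walks that ride an undirected waiting edge backwards in time---is paid afterwards by the suffix-swapping repair of \Cref{lem:fxdwalks-staticscalTOtemporalgcal}, which needs directed waiting edges to be at least as numerous as undirected ones.

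Second, to escape that coverage problem you relax the static question to circulation feasibility, and this produces false positives. Flow conservation is a purely local condition: a feasible circulation decomposes into source--sink paths \emph{plus cycles}, and nothing forces those cycles to touch any walk. Concretely, take a snapshot containing a triangle on three vertices that no walk can reach at that time (say the triangle's vertices have no other temporal edges and carry no terminals). Routing one unit of flow around the triangle satisfies every conservation and unit-coverage constraint, so your circulation instance is feasible, yet no temporal walk eec exists. Connectivity of the cover to the terminals is the global condition a circulation cannot see; the paper enforces it by staying with the exact-cover formulation and solving the mixed static problem via \Cref{lem:fxdwalks-staticscalTOstaticscal'} and \Cref{lem:static_walks_fixed_terminals}, whose characterization is not mere degree balance but even degrees \emph{plus} a minimum $S$-$T$ edge-cut of size exactly $|S|$ (Menger / max-flow), followed by Hierholzer-style insertion of leftover cycles into walks sharing a vertex with them. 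So what is missing is not bookkeeping: one needs both a gadget with slack (mixed directed and undirected waiting edges in the right proportions) \emph{and} a formulation that forces every covered edge onto a terminal-to-terminal walk, and each of your two choices sacrifices one of these.
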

We are going to prove the theorem through a construction into a static graph followed by a sequence of lemmas.
Given an undirected temporal graph \gcal with terminals \TerminalStartSet,\TerminalEndSet, we transform \gcal into a static multigraph \scal containing directed and undirected edges.
% In \scal, every vertex of \gcal is split into its appearances at different time labels such that a temporal walk eec of \gcal corresponds to a static walk eec in \scal.
We give the construction in \Cref{def:fxdwalks-construction} and show in \Cref{lem:fxdwalks-temporalgcalTOstaticscal,lem:fxdwalks-staticscalTOtemporalgcal} that a temporal walk eec of \gcal corresponds to a static walk eec in \scal.
% and prove the correspondence in \Cref{lem:fxdwalks-temporalgcalTOstaticscal,lem:fxdwalks-staticscalTOtemporalgcal}.
Then, in \Cref{lem:fxdwalks-staticscalTOstaticscal'}, we show how we can turn the mixed multigraph \scal into a purely undirected static graph $\scal'$, such that a static walk eec of \scal corresponds to a static walk eec in $\scal'$.
Finally, we show how to obtain a static walk eec in $\scal'$ by providing an algorithm for \SWEC in undirected graphs with given terminals and proving its correctness in \Cref{lem:fxdwalks-staticeec}.

The following definition describes the central construction. For an illustration of the construction on an example graph, refer to \Cref{fig:UD_TWD_temporal_to_static}.
\begin{definition} \label{def:fxdwalks-construction}
    Let $\gcal=\tuple{(V\cup\TerminalStartSet\cup\TerminalEndSet,E),\ecal}$ be a temporal graph with start- and end-terminals $\TerminalStartSet,\TerminalEndSet$ and for every $v\in V$ let $T(v)=\{t_1,\dots,t_{\ell(v)}\}$ be the multiset of labels on edges incident to $v$ in \gcal.
    Observe that without loss of generality the temporal degree of all vertices in $V$ must be even, since any time a walk of an eec passes through a vertex it has to use two edges.
    % \bigparagraph{Construction.} % of \scal}
    
    We construct a static mixed multigraph \scal with vertices $\{v_t \colon v\in V \text{ and } t\in T(v)\}\cup\TerminalStartSet\cup\TerminalEndSet$.
    We say the $v_t$ for each $v$ are placed from \emph{left to right} sorted by increasing time label, meaning $v_t$ is to the \emph{left} of $v_{t'}$ if and only if $t < t'$.

    \scal contains undirected edges between terminals and vertices as present in $\gcal$, \ie $\{\{s_i,v_0\} \colon (\{s_i,v\},0)\in\ecal\} \cup \{\{v_{\tmax},e_i\} \colon (\{v,e_i\},\tmax)\in\ecal\}$. Furthermore, any undirected edge between vertices $(\{u,v\},t)$ in \gcal translates to an undirected edge $\{u_t,v_t\}$ in \scal.
    %
    %\{\{s_t,v_t\} \colon (\{s_t,v\},t)\in\ecal\}\cup \{\{e_t,v_t\} \colon (\{e_t,v\},t)\in\ecal\} \cup \{\{u_t,v_t\}\colon (\{u,v\},t)\in\ecal\} \bigcup_{v\in V}E_v$, where is $E_v$ is a combination of directed and undirected edges inductively defined as follows.}
    % So, for each vertex $v$, we create one gadget as follows (refer to \Cref{fig:UD_TWD_temporal_to_static} for an illustration):
    
    % So, for each vertex $v\in V$ and time step $t\in[\tmax]$ we create a vertex $v_t$.
    Lastly, \scal contains a specific number of directed and undirected multiedges connecting any $v_{t_i}$ with its right neighbor. % $v_{t_{i+1}}$.
    %=====Non-Strict Construction -- works
    Let $\tdegree_{\leq t_i}(v)$ and $\tdegree_{> t_i}(v)$ be the number of temporal edges incident to $v$ in \gcal with time label $\leq t_i$ and $>t_i$, respectively.
    We create a total of $\alpha_i=\min(\tdegree_{\leq t_i}(v), \tdegree_{> t_i}(v))$ edges between $v_{t_i}$ and $v_{t_{i+1}}$ as follows: $\lceil\alpha_i/2\rceil$ are directed from $v_t$ to $v_{t+1}$ and the remaining $\lfloor\alpha_i/2\rfloor$ are undirected. We call this the \emph{$v$-gadget}.
    % Refer to \Cref{fig:fxd-walks_gadget-definition} for an illustration of this gadget for one vertex.
    %=====Strict Construction -- currently broken
    % Let $\delta_{t_i}(v)$ be the number of neighbors of $v$ in \gcal with time label $t_i\in T(v)$. We define the number of edges in the \textit{$v$-gadget} %between $v_{t_i}$ and $v_{t_{i+1}}$
    % inductively. Refer to \Cref{fig:fxd-walks_gadget-definition} for an example. \vspace{-0.5em}
    % \begin{itemize}
    %     \item Between $v_{t_1}$ and $v_{t_2}$, there are $D_1=\delta_{t_0}(v)$ many directed edges 
    %     %$(v_{t_1},v_{t_2})$ 
    %     and $U_1=0$ many undirected edges 
    %     %$\{v_{t_1},v_{t_2}\}$.
    %     \item Let $D_{i-1}$ be the number of directed edges and $U_{i-1}$ the number of undirected edges from $v_{t_{i-1}}$ to $v_{t_{i}}$.
    %     % \item Let $D_{i-1}$ be the number of directed edges $(v_{t_{i-1}},v_{t_{i}})$ and $U_{i-1}$ the number of undirected edges $\{v_{t_{i-1}},v_{t_{i}}\}$
    %     \item[] The maximum number of walks between $v_{t_i}$ and $v_{t_{i+1}}$ is ${max}_i=\min\{D_{i-1}+U_{i-1}+\delta_i(v),\, \sum_{j=i+1}^{\ell(v)}\delta_j(v)\}$ and the minimum is ${min}_i=\max\{0,D_{i-1}-U_{i-1}-\delta_i(v)\}$.
    %     We add $D_i={min}_i + \lfloor\frac{{max}_i-{min}_i}{2}\rfloor+1$ directed and $U_i=\lfloor\frac{{max}_i-{min}_i}{2}\rfloor$ undirected edges from $v_{t_i}$ to $v_{t_{i+1}}$.
    % \end{itemize}
    % \vspace{-0.5em}    
% This concludes the construction of \scal.
\end{definition}
    % \begin{figure}[h]
    %     \centering
    %     \includegraphics[width=\columnwidth]{}
    %     \caption{Illustration of the gadget construction for vertex $a$. Observe, $\delta_{\leq 3}(a)=8>6=\delta_{>3}(a)=\alpha_3$.}
    %     \label{fig:fxd-walks_gadget-definition}
    % \end{figure}

    From hereon, we assume \scal to be the static mixed multigraph from \Cref{def:fxdwalks-construction}.
    We now show that any temporal walk exact edge-cover in \gcal corresponds to a static walk exact edge-cover in \scal and vice versa.
    % Lastly, we argue how a static walk exact edge-cover can be computed for \scal (in particular, considering the directed and undirected multiedges. 

    \begin{lemma} \label{lem:fxdwalks-temporalgcalTOstaticscal}
        A temporal walk exact edge-cover in \gcal corresponds to a static walk exact edge-cover in \scal.
    \end{lemma}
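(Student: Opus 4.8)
The plan is to construct, from a given temporal walk \eecShort $\mathcal{W}$ of \gcal, a static walk \eecShort of \scal by translating every temporal walk $W\in\mathcal{W}$ edge by edge. After the terminal transformation every such $W$ runs from some $s_i\in\TerminalStartSet$ to some $e_j\in\TerminalEndSet$ and, as noted in \Cref{def:fxdwalks-construction}, uses an even number of edges at each internal vertex. First I would fix the translation of the \emph{non-gadget} edges: a temporal start edge $(\{s_i,v\},0)$ maps to $\{s_i,v_0\}$, a temporal end edge to $\{v_{\tmax},e_i\}$, and an internal temporal edge $(\{u,v\},t)$ to the undirected edge $\{u_t,v_t\}$. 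This is a bijection between the non-gadget edges of \scal and the temporal edges of \gcal, so those edges are covered exactly once with no choices involved; the entire difficulty is concentrated in the gadget edges, which have no temporal counterpart.

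The gadget edges encode \emph{waiting}. Whenever $W$ uses an edge at $v$ with label $t$ and its next edge at $v$ has a strictly larger label $t'$, I route the corresponding static walk rightward through the $v$-gadget from $v_t$ to $v_{t'}$, using one gadget edge at every cut $t_i$ with $t\le t_i<t'$; I call such a passage a \emph{crossing} at cut $t_i$. For a fixed cut let $b_i$ be the total number of crossings (over all walks and all visits). Splitting the even degree as $\tdegree_{\le t_i}(v)=2a+b_i$ and $\tdegree_{>t_i}(v)=2c+b_i$ according to before-, straddling-, and after-visits shows $b_i\le\min(\tdegree_{\le t_i}(v),\tdegree_{>t_i}(v))=\alpha_i$, and, because $\tdegree(v)$ is even, $b_i\equiv\alpha_i\pmod 2$. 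Hence the cut carries $\alpha_i$ gadget edges but only $b_i$ are forced forward by crossings, leaving an \emph{even} surplus $\alpha_i-b_i$.

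I expect the main obstacle to be covering this surplus \emph{exactly once} while respecting the orientation of the $\lceil\alpha_i/2\rceil$ directed and $\lfloor\alpha_i/2\rfloor$ undirected gadget edges. The idea is to cover it with $(\alpha_i-b_i)/2$ \emph{back-and-forth excursions}, each a detour $v_{t_i}\to v_{t_{i+1}}\to v_{t_i}$ inserted into any walk visiting $v_{t_i}$; such a walk always exists because $t_i\in T(v)$ forces a real edge incident to $v_{t_i}$. An excursion uses one forward leg and one backward (necessarily undirected) leg, so the total number of forward traversals at the cut is $b_i+(\alpha_i-b_i)/2=(\alpha_i+b_i)/2\ge\lceil\alpha_i/2\rceil$ and the number of backward traversals is $(\alpha_i-b_i)/2\le\lfloor\alpha_i/2\rfloor$. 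Verifying these two inequalities (with the separate even/odd cases of $\alpha_i$, where the parity identity guarantees $b_i\ge1$ when $\alpha_i$ is odd) is the delicate computation; it then follows that every directed edge can be used once forward, every undirected edge once in some direction, and the counts sum to $\alpha_i$, so all gadget edges are covered exactly once.

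Finally I would assemble the walks: at each cut the $b_i$ forward crossings and the excursion legs are assigned to distinct gadget edges, and since a crossing spanning several cuts uses a fresh edge at each cut, the assignments at different cuts are independent and conflict-free. Inserting the excursions does not create new walks, so the resulting family still has one walk per terminal pair, adjacency is preserved at every junction by construction, and all directed gadget edges are traversed forward. This yields a valid static walk \eecShort of \scal, establishing the direction claimed by the lemma.
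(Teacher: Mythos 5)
Your proposal is correct and follows essentially the same route as the paper's proof: translate the non-gadget edges bijectively, route each waiting period rightward through the $v$-gadget, observe that the uncovered surplus at each cut is even, and absorb it with back-and-forth detours inserted into walks that already visit $v_{t_i}$, while ensuring backward legs only use undirected edges. The only difference is bookkeeping style\,---\,you verify the directed/undirected feasibility via the explicit crossing count $b_i$ and the inequalities $(\alpha_i+b_i)/2\ge\lceil\alpha_i/2\rceil$ and $(\alpha_i-b_i)/2\le\lfloor\alpha_i/2\rfloor$, whereas the paper uses a greedy ``directed edges first'' policy and then argues at least half of the leftover edges are undirected; both are sound.
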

    \newcommand{\onevertex}{\ensuremath{u}} %$v^1$ / u
    \newcommand{\ellvertex}{\ensuremath{v}} %$v^\ell$ / w
    \begin{proof}
        % \bigparagraph{($\Rightarrow$)\quad}% \TWEEC in \gcal to \SWEC in \scal.}
        Let $\mathcal{W}_\gcal$ be a temporal walk exact edge-cover of \gcal.
        Then any walk $W\in\mathcal{W}_\gcal$ has to be of the form $W=(\{s_i,\onevertex\},0)\circ  W[\onevertex,\ellvertex]\circ (\{\ellvertex,e_j\},\tmax)$ with the subwalk $W[\onevertex,\ellvertex]$ visiting only vertices in $V$.
        For each such $W\in\mathcal{W}_\gcal$, we create a static walk $W_\scal\in \mathcal{W}_\scal$ with $W_\scal = \{s_i,\onevertex_0\}\circ W_\scal[\onevertex_0,\ellvertex_{\tmax}]\circ \{\ellvertex_{\tmax},e_j\}$ with %the same start- and end-terminal and 
        the subwalk $W_\scal[\onevertex_i,\ellvertex_{\tmax}]$ defined %from $W[\onevertex,\ellvertex]$ 
        as follows:

        Let $W[\onevertex,\ellvertex]$ contain the two edges $(\{a, v\}, t_\alpha)$ and $(\{v, b\}, t_\beta)$ in succession.
        Then we add to $W_\scal[\onevertex_i,\ellvertex_{\tmax+j}]$ the path $a_{t_\alpha},v_{t_\alpha}, v_{t_{\alpha+1}}, \dots, v_{t_\beta}, b_{t_\beta}$. 
        So, $W_\scal$ enters the $v$-gadget via $v_{t_\alpha}$, continues to the right %using directed edges where possible (or undirected, if all directed edges have already been used), 
        and exits the gadget upon reaching $v_{t_\beta}$.
        This path uses directed edges where possible and only uses undirected edges if all directed edges have already been used. 

        To show that this definition is well defined, we have to ensure that there are enough edges between neighboring vertices $v_{t_i}$ and $v_{t_{i+1}}$ such that each edge is used at most once. % static walk that crosses between these vertices can use a different edge.
        Note that at most $\delta_{\leq t_i}(v)$ walks can enter $v_{t_i}$ in \scal; either by entering to the left of it or at $v_{t_i}$ directly. By definition, this number corresponds to the maximum number of walks entering $v$ in \gcal until time step $t_i$.
        Further observe that since $\mathcal{W}_\gcal$ is an eec, any walk that leaves $v$ after time step $t_i$ uses a distinct time edge. So there can be at most $\delta_{>t_i}(v)$ many walks leaving $v_{t_i}$.
        Since we constructed a number of edges between $v_{t_i}$ and $v_{t_{i+1}}$ equaling the minimum out of those two terms, the number of edges is sufficient for the static walks.
    
        Now, we have to ensure that every edge is used at least once by the constructed walks in $\mathcal{W}_\scal$.
        With the current construction, all edges incident to terminals or between different vertices $u,v$ in $\gcal$ are covered exactly once but there might be some edges between some $v_{t_i}$, $v_{t_{i+1}}$ unused.
        First, observe that the number of edges remaining between any $v_{t_i}$ and its right neighbor $v_{t_{i+1}}$ is even: Let w.l.o.g. $\delta_{\leq t_i}(v)\leq \delta_{>t_i}(v)$
        and let $x$ be the number of walks in $\mathcal{W}_\scal$ that \textbf{exit} the $v$-gadget to the \textbf{left} of or at $v_{t_i}$. %At least $x$ many edges between $v_{t_i}$ and $v_{t_{i+1}}$ remain unused.
        Each of these $x$ walks has to enter and also exit the $v$-gadget to the left of $v_{t_i}$, taking two non-gadget edges.
        This leads to a total of $2x$ walks not taking an edge from $v_{t_i}$ to $v_{t_{i+1}}$ compared to the maximum possible number of walks passing through, which is an even number.        
        Second, we claim that we can form a cycle out of this even number of edges at $v_{t_i}$ which we can insert into any static walk entering or exiting. 
        Recall that directed edges are used before undirected edges.
        In particular, at least one directed edge is used when $\delta_{\leq t_i}(v)$ is odd.
        Therefore, at least half of the $2x$ remaining edges are undirected.
        These edges form a cycle going back and forth between $v_{t_i}$ and $v_{t_{i+1}}$.
        
        In the end, we have constructed a set of walks starting at a distinct start-terminal out of $V_S$ and ending at a distinct end-terminal out of $V_E$ that are edge-disjoint and use every edge once. This is a static walk exact edge-cover for \scal with start- and end-terminals $V_S,V_E$.
    \end{proof}

    \begin{lemma}  \label{lem:fxdwalks-staticscalTOtemporalgcal}
        A static walk exact edge-cover in \scal corresponds to a temporal walk exact edge-cover in \gcal.
    \end{lemma}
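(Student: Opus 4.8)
The plan is to reverse the translation of \Cref{lem:fxdwalks-temporalgcalTOstaticscal}. Given a static walk exact edge-cover $\mathcal{W}_\scal$ of \scal, I would build a temporal walk exact edge-cover $\mathcal{W}_\gcal$ of \gcal by reading each static walk and discarding the ``waiting'' encoded by the $v$-gadgets. Every static walk runs from some start-terminal $s_i$ to some end-terminal $e_j$, since the terminals are the only vertices incident to the boundary edges and each carries exactly one walk. Traverse such a walk and keep only its non-gadget edges: each terminal edge $\{s_i,v_0\}$ or $\{v_{\tmax},e_j\}$ and each edge $\{u_t,v_t\}$ between copies of distinct vertices maps back to the unique temporal edge it came from, namely $(\{u,v\},t)$. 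A maximal block of gadget edges traversed between two such non-gadget edges stays inside a single $v$-gadget, and is replaced by ``wait at $v$''. This yields, for each static walk, a candidate temporal walk $W_\gcal$.

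The exact-cover bookkeeping is immediate from the construction: \Cref{def:fxdwalks-construction} puts the non-gadget edges of \scal in bijection with the temporal edges of \gcal, and since $\mathcal{W}_\scal$ uses every edge of \scal exactly once it uses every non-gadget edge exactly once; hence every temporal edge of \gcal appears in exactly one $W_\gcal$. It remains only to check that each $W_\gcal$ is a genuine non-strict temporal walk, \ie that its sequence of time labels is non-decreasing. Two consecutive non-gadget edges of a static walk are either incident at a common copy $v_t$ (same time label, fine) or are joined by a gadget block that enters the $v$-gadget at some copy $v_{t_a}$ and leaves at some copy $v_{t_b}$; the replacement is valid exactly when $t_a \le t_b$.

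Establishing $t_a \le t_b$ for every gadget block is the main obstacle, and it is where the orientation of the gadget edges is used. The directed gadget edges of \Cref{def:fxdwalks-construction} all point from $v_{t_i}$ to $v_{t_{i+1}}$, so a walk using only them inside a gadget can only move forward in time. The difficulty is the undirected gadget edges, which a walk could in principle traverse right-to-left, producing $t_a > t_b$. I would resolve this with an exchange argument showing that every static walk exact edge-cover can be normalized, without changing the multiset of edges covered (so that it remains exact), so that inside each $v$-gadget every walk moves monotonically left-to-right: because the counts are split as $\lceil\alpha_i/2\rceil$ directed and $\lfloor\alpha_i/2\rfloor$ undirected, with directed edges carrying the forward passage, any right-to-left use of an undirected edge can be paired with a compensating left-to-right use and rerouted, so that the surplus undirected edges appear only in back-and-forth cycles between two adjacent copies. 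Such a cycle collapses to ``wait at $v$'' and contributes no temporal edge, mirroring the leftover cycles produced in the forward direction of \Cref{lem:fxdwalks-temporalgcalTOstaticscal}. After normalization each gadget block satisfies $t_a \le t_b$, so each $W_\gcal$ respects time, and the collection $\mathcal{W}_\gcal$ is a temporal walk exact edge-cover of \gcal with the prescribed terminals. The one place to argue carefully is that the rerouting can always be carried out and keeps the cover exact; everything else is routine verification.
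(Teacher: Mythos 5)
Your proposal is correct and follows essentially the same route as the paper's proof: both map the non-gadget edges of \scal bijectively back to the temporal edges of \gcal, collapse gadget traversals into ``waiting'' at the corresponding vertex, and repair right-to-left (backward-in-time) uses of undirected gadget edges by an exchange argument that pairs them with forward passages carried by directed gadget edges. The rerouting step you flag as the one needing care is exactly what the paper makes precise: at the offending copy $v_{t_\alpha}$ it swaps suffixes between the violating walk and a walk traversing a directed edge $(v_{t_\alpha},v_{t_{\alpha+1}})$ (which must exist, since directed edges are at least as numerous as undirected ones and all edges are covered), and argues that each swap strictly decreases the backward ``time-travel distance'' without invalidating the other walk, so iterating terminates with every walk entering each gadget weakly to the left of where it exits.
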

    \begin{proof}
    Let $\mathcal{W}_\scal$ be a static walk eec for \scal.
    Note that one cannot directly translate $\mathcal{W}_\scal$:
    For a temporal walk eec, walks must exit a vertex via a time label greater than the label by which they entered the vertex.
    For the static graph \scal, this corresponds to (1) \textit{any walk has to enter a $v$-gadget to the left of where it exits}. However, this may be violated by a static walk eec due to the undirected edges in \scal.
    We show that the walks in $\mathcal{W}_\scal$ can be transformed such that property (1) holds. This is done by switching suffixes between overlapping walks.

    \newcommand{\inWone}{\ensuremath{{in}(W_1,v)}} %v_{in}^{W_1}}}
    \newcommand{\outWone}{\ensuremath{{out}(W_1,v)}} %v_{out}^{W_1}}}
    Let $W_1\in\mathcal{W}_\scal$ be an arbitrary walk violating (1), \ie it enters the $v$-gadget at $v_{t_\beta}$ and exits at $v_{t_\alpha}$ with $t_\beta > t_\alpha$.
    This implies that $W_1$ uses undirected edges from $v_{t_\beta}$ to $v_{t_\alpha}$.
    We now show how we switch suffixes to reduce the \textit{time travel} \textit{distance} %$\inWone-\outWone$
    $t_\alpha-t_\beta$
    by at least one as long as $t_\alpha-t_\beta<0$ which ultimately leads to $t_\alpha-t_\beta\geq0$. %\michelle{mathy notation decision} I abused notation here a little
    Refer to \Cref{fig:UD_TWD_redirected_static_walks} for an illustration.

    $W_1$ is taking an undirected edge $\{v_{t_{\alpha+1}},v_{t_{\alpha}}\}$. 
    By construction there are at least as many directed as undirected edges and every edge is part of a walk in the eec, so there has to be a directed edge $(v_{t_{\alpha}},v_{t_{\alpha+1}})$ which is contained in a walk $W_2\in\mathcal{W}_\scal$ entering the $v$-gadget at $t_{\alpha'}$ and exiting at $t_\gamma$ with $t_\alpha'\leq t_\alpha<t_\gamma$.
    We now switch the suffixes of $W_1$ and $W_2$ at $v_{t_\alpha}$: Let $s(W),e(W)$ denote the start- and end-terminal of a walk and $W[a:b]$ the subwalk of $W$ between $a$ and $b$. Then we define $W_1'=W_1[s(W_1):v_{t_{\alpha}}]\circ W_2[v_{t_{\alpha}},e(W_2)]$ and $W_2'=W_2[s(W_2):v_{t_{\alpha}}]\circ W_1[v_{t_{\alpha}},e(W_1)]$, where $\circ$ is the concatenation operator. Then $W_1$ exits the $v$-gadget at $v_{t_\gamma}$ which is right of $v_{t_\alpha}$, so the time travel distance decreased ($t_\gamma-t_\beta<t_\alpha-t_\beta$). Further note that since $t_\alpha'\leq t_\alpha$, $W_2'$ which enters the $v$-gadget at $t_\alpha'$ and exits at $t_\alpha$, still satisfies (1).
    So we reduced the backwards time travel distance of $W_1$ by at least one without breaking any other walks. 

    \begin{figure*}[ht]
        \centering
        \includegraphics[width=\textwidth]{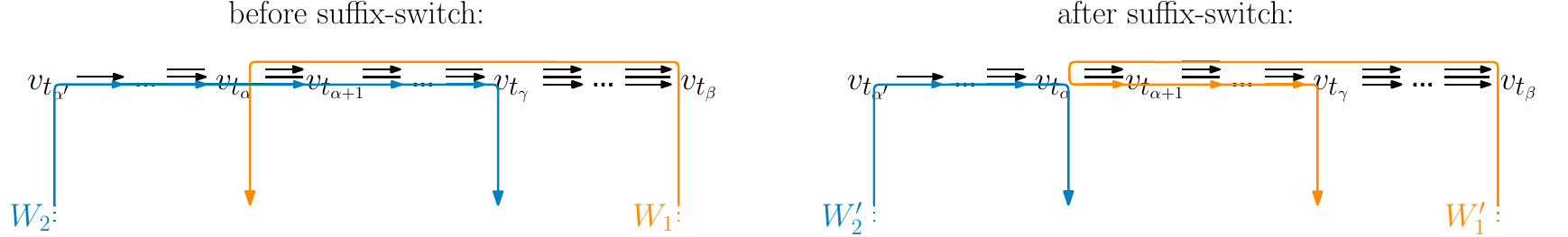}
        \caption{Illustration of how a walk $W_1$ is adjusted if it exits the $v$-gadget left of where it enters it. At $v_{t_\alpha}$ we swap the suffixes of $W_1$ and $W_2$, resulting in the new walks $W_1'$, $W_2'$.
        }
        \label{fig:UD_TWD_redirected_static_walks}
    \end{figure*}
    After redirecting every walk of $\mathcal{W}_\scal$ if necessary, we get a static walk eec $\mathcal{W}_\scal'$ which satisfies property (1).
    Now, we can directly translate $\mathcal{W}_\scal'$ into a temporal eec for \gcal by contracting the vertex-gadgets back into single vertices.
    \end{proof}

    Having proven that the problems on \gcal and \scal are equivalent, we proceed to show the transformation of \scal into an undirected static graph $\scal'$ with equivalent walk eecs and how to solve \SWEC for \scal using $\scal'$.
    \begin{lemma}
    \label{lem:fxdwalks-staticscalTOstaticscal'}
        A static walk exact edge-cover for \scal  can be found in polynomial time, if one exists.
    \end{lemma}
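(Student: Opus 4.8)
The plan is to prove the lemma by reducing the mixed-multigraph problem on \scal to a purely undirected instance $\scal'$ and then invoking the undirected algorithm of \Cref{lem:fxdwalks-staticeec}. The only obstruction to feeding \scal directly into an undirected solver is the presence of directed edges, so all the work lies in removing them while keeping the set of walk exact edge-covers, together with their terminals, in correspondence. Note that a purely undirected conduit between two vertices cannot force a traversal direction locally (any such gadget with even-degree internal vertices is symmetric), so direction must be enforced through the terminal budget rather than by an internal gadget.

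First I would build $\scal'$ from \scal by ``cutting'' every directed edge. For each directed edge $e=(u,v)$ of \scal, I delete $e$, introduce two fresh vertices $t_e$ and $s_e$, and add the two undirected edges $\{u,t_e\}$ and $\{s_e,v\}$, declaring $t_e$ a new end-terminal and $s_e$ a new start-terminal. Every edge of $\scal'$ is then undirected, and its terminal multisets are $\TerminalStartSet$ together with all $s_e$ and $\TerminalEndSet$ together with all $t_e$. This gadget contributes one to the degree of $u$ and one to the degree of $v$, exactly as the deleted directed edge did, so the parity of every original vertex is preserved, while each auxiliary vertex has degree one.

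The core is the correspondence. In the forward direction, given a walk eec $\mathcal W$ of \scal, I split every walk at each of its directed-edge traversals: a walk using $(u,v)$ is broken into a piece that now ends by stepping $u\to t_e$ and a piece that now starts by stepping $s_e\to v$. This yields a walk eec of $\scal'$ covering every undirected edge exactly once. In the backward direction, since each $t_e$ and each $s_e$ has degree one, exactly one walk of a given eec of $\scal'$ ends at $t_e$ (arriving from $u$) and exactly one starts at $s_e$ (leaving toward $v$); I would ``stitch'' these two walks by deleting the half-edges $\{u,t_e\}$ and $\{s_e,v\}$ and reconnecting them through the directed edge $u\to v$. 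Because $t_e$ is adjacent only to $u$ and $s_e$ only to $v$, the reconstructed traversal is always in the correct direction, so the stitched collection respects all edge directions of \scal and retains exactly the original terminals $\TerminalStartSet,\TerminalEndSet$.

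The step I expect to be the main obstacle is showing that stitching yields a legal eec rather than spurious closed walks: a cyclic chain of auxiliary terminal pairs could stitch into a closed walk containing no original terminal, which is forbidden once terminals are fixed. I would handle this with a splicing argument, using that every edge-component of \scal must meet a terminal (otherwise no eec exists at all): any such closed walk shares a vertex with a genuine terminal-to-terminal walk and can be absorbed into it without changing the covered-edge multiset or the terminals. Once this bookkeeping is in place, \scal admits a walk eec with terminals $\TerminalStartSet,\TerminalEndSet$ if and only if $\scal'$ admits one with the augmented terminals; since $\scal'$ is undirected and of size polynomial in that of \scal, applying \Cref{lem:fxdwalks-staticeec} and translating the solution back through the polynomial-time stitching-and-splicing procedure gives the claimed polynomial-time algorithm.
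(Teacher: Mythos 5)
Your proposal is correct and takes essentially the same route as the paper: replace each directed edge $(u,v)$ by an artificial end-terminal attached to $u$ and an artificial start-terminal attached to $v$, solve the resulting undirected instance via \Cref{lem:fxdwalks-staticeec}, and stitch the pieces back together, absorbing any resulting closed walks (which contain only artificial terminals) into a walk that meets true terminals at a shared vertex. Your observation that such closed walks can be spliced in because every edge-component must meet a terminal matches the paper's connectivity argument, so there is no substantive difference between the two proofs.
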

    \begin{proof}
    %First, we transform the graph into a purely undirected graph:
    Let $(u,v)$ be a directed edge in \scal. Then a static walk eec has to contain exactly one walk that includes $(u,v)$. We can interpret this walk as two walks -- the prefix-walk before $(u,v)$ and the suffix-walk after $(u,v)$. % (one ending at $u$ and another starting at $v$) that are concatenated via the edge $(u, v)$.
    Therefore, we can remove $(u, v)$ and add a new end-terminal connected to $u$ and a new start-terminal connected to $v$. Using this process, we can transform \scal into an undirected graph $\scal'$.

    For $\scal'$ we can use \Cref{lem:static_walks_fixed_terminals} to find a static walk eec if it exists with $S=V_S$ and $T=V_E$.
    
    Given a solution for $\scal'$, we turn all the artificial start- and end-terminals back into directed edges and concatenate the corresponding walks.
    Note that this concatenation may lead to cycles that only contain terminals created from directed edges, and no true terminals.
    However, if such cycles exist, at least one of them must share a vertex $v$ with a walk that connects true terminals, because the graph is connected.
    This cycle can be inserted into the walk when it visits $v$.
    We can repeat this until no cycles are left, and we are left with a collection of walks going from true start-terminals to true end-terminals.
    \end{proof}

    For the proof of \Cref{lem:fxdwalks-staticscalTOstaticscal'}, need an algorithm solving \SWEC on an undirected graph with fixed terminals in polynomial time. That this is possible, we show below.

\begin{lemma}
\label{lem:fxdwalks-staticeec}
\label{lem:static_walks_fixed_terminals}
    Let $G$ be a connected undirected graph and $S,T\subseteq V(G)$ such that $S\cap T = \emptyset$, $|S|=|T|$, and every vertex in $S\cup T$ has degree one.
    There exists a static walk exact edge-cover using $|S|$ walks of $G$ such that each walk starts in $S$ and finishes in $T$ if and only if
    \begin{itemize}
        \item every vertex in $V(G)\setminus (S\cup T)$ has even degree, and
        \item the size of the minimum $S$-$T$ edge-cut is $|S|$. 
    \end{itemize}
    Moreover, if such an eec exists, it can be found in polynomial time.   
\end{lemma}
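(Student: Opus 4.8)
The plan is to characterize the desired cover through two structural quantities\,---\,vertex-degree parities and the minimum $S$-$T$ edge-cut\,---\,and then to build the cover explicitly from edge-disjoint paths together with a cycle decomposition of the leftover edges.

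For necessity, suppose a walk exact edge-cover $\mathcal{W}$ with $|S|$ walks exists. Since every vertex of $S\cup T$ has degree one, each such vertex is incident to exactly one covered edge and is therefore the endpoint of exactly one walk; as $|S|=|T|$ and there are $|S|$ walks, this yields a bijection between $S$ and the walk-starts and between $T$ and the walk-ends. Every vertex $v\notin S\cup T$ is an internal vertex of each walk through it, so each visit consumes two edges at $v$, whence $\deg(v)$ is even; this is the first condition. For the cut condition I first observe that the set of edges incident to $S$ is itself an $S$-$T$ edge-cut of size exactly $|S|$ (removing it isolates $S$), so the minimum cut is at most $|S|$. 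Conversely, each of the $|S|$ edge-disjoint walks runs from $S$ to $T$ and hence must use at least one edge of any $S$-$T$ cut $F$; edge-disjointness forces $|F|\geq |S|$. Together these give that the minimum cut equals $|S|$.

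For sufficiency, assume both conditions hold. By Menger's theorem the minimum $S$-$T$ edge-cut equals the maximum number of edge-disjoint $S$-$T$ paths, so there are $|S|$ edge-disjoint paths $P_1,\dots,P_{|S|}$. Because each vertex of $S\cup T$ has degree one, these paths must use every edge incident to $S$ and to $T$, each path starting at a distinct $S$-vertex and ending at a distinct $T$-vertex. Let $G'$ be the subgraph of edges not covered by the paths. At an $S$- or $T$-vertex its degree in $G'$ is $0$; at every other vertex it is an even degree minus the (even) number of path-edges used there, hence even. Thus $G'$ has all-even degrees and decomposes into edge-disjoint cycles.

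It remains to merge these cycles into the paths to obtain $|S|$ walks covering every edge exactly once, and this splicing is the step I expect to be the main obstacle. I would argue that, since $G$ is connected and $E(G)$ is the disjoint union of the path-edges and the cycle-edges, the auxiliary graph whose nodes are the $P_i$ and the cycles of $G'$, with an edge between two nodes whenever the corresponding subgraphs share a vertex, is connected. Processing the cycles in an order in which each shares a vertex $v$ with the structure already assembled, I insert that cycle as a closed detour at an occurrence of $v$ in the walk currently covering $v$; this keeps the object a valid walk and leaves its endpoints unchanged. After all cycles are absorbed, the endpoints remain the bijectively matched $S$- and $T$-vertices and every edge is covered exactly once. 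All ingredients are polynomial: the minimum cut and the $|S|$ edge-disjoint paths come from a single max-flow computation (unit capacities with a super-source joined to $S$ and a super-sink joined to $T$), the cycle decomposition of $G'$ is obtained by repeatedly tracing closed walks in an all-even-degree graph, and the splicing is a linear pass; hence the cover, when it exists, is found in polynomial time.
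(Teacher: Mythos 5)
Your proof is correct and takes essentially the same route as the paper's: necessity via degree parity plus an edge-disjointness/cut argument, and sufficiency via Menger's theorem (computed by max-flow) to get $|S|$ edge-disjoint $S$-$T$ paths, followed by Hierholzer-style splicing of the leftover even-degree edges into the paths as closed detours. The only cosmetic differences are that you lower-bound the cut directly by noting each walk must cross it (the paper instead extracts paths from the walks and invokes Menger), and that you splice individual cycles of a cycle decomposition in an intersection-graph order, whereas the paper repeatedly splices the whole Eulerian circuit of a component of the leftover graph.
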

\begin{proof}
    Clearly, if such a static walk eec of size $|S|$ exists, there exist $|S|$ many edge-disjoint walks each starting in a vertex in $S$ and terminating in a vertex in $T$. It is well-known that if there is an $s$-$t$ walk $W$ in a graph $G$, then there exists $s$-$t$ path $P$ such that the edges of $P$ form a subset of the edges of $W$. It follows that there are $|S|$ many edge-disjoint paths from $S$ to $T$. Therefore, by Menger's theorem~\cite{menger1927allgemeinen,Diestel2012book}, the size of the minimum $S$-$T$ edge-cut is $|S|$. Moreover, for every vertex $v$ in $V(G)\setminus(S\cup T)$, each walk enters $v$ the same number of times as it leaves $v$ and each edge is visited by exactly one walk exactly once. Hence every vertex in $V(G)\setminus (S\cup T)$ has even degree.

    On the other hand, assume that every vertex in $V(G)\setminus (S\cup T)$ has even degree, 
    and the size of minimum $S$-$T$ edge-cut is $|S|$.
    By Menger's theorem, there exists $|S|$ edge-disjoint paths from a vertex in $S$ to a vertex in $T$ and we can find such set \pcal of paths for example by Ford-Fulkerson's algorithm~\cite{EdmondsKarp72flow_agorithm}.
    Since vertices in $S\cup T$ all have degree one, each of these paths starts in a different vertex of $S$ and terminates in a different vertex of $T$ and these paths cover all the edges incident on vertices in $S\cup T$.
    Note that in $G' = G- \bigcup_{P\in \pcal}E(P)$ that is obtained from $G$ by removing all the edges on some path in \pcal, every vertex in $V(G)\setminus (S\cup T)$ still has even degree.
    Moreover, since $G$ is connected, for every vertex $w$ that is not on any path in \pcal, exists a vertex $v$ such that there is a $w$-$v$-path in $G'$ and $v$ is on a path in \pcal.
    To obtain a static walk eec from \pcal we follow a very similar idea to Hierholzer's algorithm~\cite{hierholzer1873moglichkeit} for finding an Eulerian trail in a graph.
    
    We begin with a set of walks \wcal equal to \pcal and repeatedly augment them with missing edges.
    Every time, we pick a vertex $v$ that is on some walk $W$ in \wcal such that $v$ is still incident to at least two edges in $G'$. Find an Eulerian cycle $C$ that starts and ends on $v$ in the connected component of $G'$ that contains $v$ (for example using Hierholzer's algorithm). Let $W= W_{sv}\circ v \circ W_{vt}$, where $\circ$ is the concatenation operator. We replace $W$ in \wcal by the walk $W_{sv}\circ v \circ C\circ v\circ W_{vt}$, such that $W$ now makes a detour at $v$ to follow the Eulerian cycle $C$, and we remove $C$ from $G'$. We repeat until $G'$ is empty and all edges of $G$ are in some walk in \wcal.
\end{proof}

    % \begin{proof}[Proof of \Cref{thm:UD_NS_TWD_fixed_terminals}]
    % The theorem follows from the previous lemmas.
    % We transform \gcal into \scal using \Cref{def:fxdwalks-construction} and compute a static walk eec for \scal using the algorithm in \Cref{lem:fxdwalks-staticeec}. By \Cref{lem:fxdwalks-temporalgcalTOstaticscal,lem:fxdwalks-staticscalTOtemporalgcal,lem:fxdwalks-staticscalTOstaticscal'} we thus get a temporal walk eec for \gcal.

    % All of this can be done in ....
    % \georg{TODO: figure out the run time. With $n$ vertices and $m$ edges in \gcal, I think \scal has $\mathcal{O}(n \cdot \tmax)$ vertices and $\mathcal{O}(m+n\cdot \tmax^2)$ edges. And plug that into the runtime of the static algo. Though I don't think it's actually that bad. If I remember correctly, the static algo is a flow, which would use capacities instead of multiedges. Then it would be more like $m+ n \cdot \tmax$ edges.}
    % \end{proof}
 \fi
    For strict walks, this construction fails as it allows exact edge-covers which are valid for non-strict walks but invalid for strict walks and, unfortunately, it cannot be directly modified to handle such cases.
    However, on temporal graphs where adjacent edges are never present at the same time, every strict walk is inherently non-strict, allowing the algorithm to work. 
    Such graphs are commonly referred to as proper temporal graphs \cite{casteigts_simple_2024,christiann2024inefficiently}.
    \begin{corollary} \label{cor:UD_S_fxdTerm_walks}
        \TWEECnoK{S-} with fixed terminals on proper undirected temporal graphs, \ie 
        %$\lvert E_t\rvert\leq 1$ for all $t\in[\tmax]$
        $\lvert N_t(v)\rvert \in\{0,1\}$ for all $t\in[\tmax]$, can be computed in polynomial time, where $N_t(v)$ is the multiset of neighbours of $v$ at time step $t$. 
    \end{corollary}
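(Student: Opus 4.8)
The plan is to reduce the strict problem to the non-strict one by showing that on a proper temporal graph the two notions of a walk exact edge-cover coincide, and then to invoke the polynomial-time algorithm of \Cref{thm:UD_NS_TWD_fixed_terminals} unchanged. Since a strict journey is in particular a non-strict journey, every strict walk eec is trivially a non-strict walk eec; the content of the argument is the reverse inclusion, namely that on a proper graph every walk occurring in a non-strict eec is automatically strict.

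To prove this, I would take an arbitrary walk $W=((e_1,t_1),\dots,(e_\ell,t_\ell))$ belonging to a non-strict walk exact edge-cover \wcal of the proper temporal graph \gcal, and suppose for contradiction that $t_i=t_{i+1}=t$ for some $i$. By the definition of a journey, the consecutive edges $e_i$ and $e_{i+1}$ are adjacent, so they share the vertex $v$ through which $W$ passes, and both are incident to $v$ at time $t$. Each of them contributes a neighbour to the multiset $N_t(v)$; hence if $e_i$ and $e_{i+1}$ were distinct temporal edges we would have $\lvert N_t(v)\rvert\geq 2$, contradicting properness. Therefore $e_i$ and $e_{i+1}$ must be the same temporal edge of \gcal. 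But \wcal is an \emph{exact} edge-cover, so this temporal edge is covered exactly once in total and in particular is used at most once by $W$, contradicting the appearance of $e_i=e_{i+1}$ in two consecutive positions. Consequently $t_i<t_{i+1}$ for every $i$, \ie $W$ is strict.

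With the claim in hand, the non-strict and strict walk exact edge-covers of \gcal (with the prescribed start- and end-terminals) are exactly the same objects, so the two decision problems have identical yes-instances and identical solution sets. I would then simply run the algorithm from \Cref{thm:UD_NS_TWD_fixed_terminals} on \gcal: it decides \TWEECnoK{NS-} in polynomial time and, when a cover exists, outputs one, which by the claim is a valid \emph{strict} walk eec. I do not expect a genuine obstacle here; the only point requiring care is the degenerate situation in which the two equally-timed edges coincide, which is precisely where the exact-cover condition (rather than mere coverage) is used, together with the multiset reading of $N_t(v)$ that also rules out parallel same-time edges.
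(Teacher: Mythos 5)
Your proposal is correct and takes essentially the same route as the paper: the paper also reduces the strict case to the non-strict one by observing that on proper temporal graphs the two notions of walk eec coincide, and then invokes the algorithm of \Cref{thm:UD_NS_TWD_fixed_terminals} unchanged. Your write-up merely makes explicit the details the paper leaves as a one-line remark, namely the multiset counting at the shared vertex and the use of the exact-cover condition to exclude a walk traversing the same temporal edge twice.
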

    \ifshort
    Lastly, we cover the other extreme of undirected temporal graphs with exactly $k$ (or 0) edges per time step. Here, once again a dynamic program tracking possible walk endpoints is able to compute the eecs. This is possible because each walk must cover one edge per time step. \fi
    \iflong
    Additionally, we present a dynamic program for strict walks with fixed terminals on undirected graphs with exactly $k$ (or 0) edges per time step. The algorithm keeps track of possible current endpoints, as in \Cref{prop:walks_XP}. 
    When there are exactly $k$ edges at a time, every walk has to cover one edge and the next configuration from given current endpoints is unique. 
    \fi
    \iflong 
    \begin{theorem}
    \fi
    \ifshort
    \begin{theorem}[$\star$]
    \fi
        \TWEECnoK{S-} with fixed terminals $V_S$, $V_E$ on undirected temporal graphs where $\lvert E_t\rvert\in\{0,k\}$ for all $t\in[\tmax]$, can be computed in time $\mathcal{O}(2^{\lvert V_S\rvert}\cdot\lvert V_S\rvert\cdot\tmax)$.
    \end{theorem}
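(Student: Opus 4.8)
The plan is to give a dynamic program that tracks, at each time step, the multiset $P$ of vertices in which the $k := \lvert V_S\rvert$ walks currently sit, exactly as in \Cref{prop:walks_XP}, but to exploit the dense snapshots to make this information essentially deterministic. The starting point is a structural lemma: in any valid strict walk eec, at every non-empty time step $t$ (where $\lvert E_t\rvert = k$) each of the $k$ walks traverses \emph{exactly} one edge. Indeed, a strict walk uses at most one edge per time step, and the $k$ edges of $E_t$ must all be covered by distinct walks; with only $k$ walks available this forces every walk to advance by one edge. Consequently no walk may wait while edges are present, so every walk takes its first edge at the first non-empty step and its last edge at the last non-empty step. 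This pins down $P = V_S$ at the first non-empty step and forces $P = V_E$ after the last one.

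Next I would formalise the transition. Fix a non-empty step $t$ and let $s(v)$ be the multiplicity of $v$ in the current multiset $P$. Extending all walks by one edge of $E_t$ is precisely an orientation of the (multi)graph $E_t$ in which the out-degree of every vertex $v$ equals $s(v)$: each walk sitting at $v$ leaves through its own out-edge, and each in-edge delivers a walk. The crucial point is that the \emph{resulting} multiset is $s'(v) = \lvert N_t(v)\rvert - s(v)$, which depends only on $s$ and the snapshot degrees and \emph{not} on the particular orientation chosen. Hence the successor multiset is unique, the reachable configuration is a single deterministic trajectory starting from $V_S$, and the algorithm only needs to (i) at each non-empty step check that a valid orientation exists, (ii) replace $P$ by $s'$, and (iii) finally test $P = V_E$; empty steps leave $P$ unchanged.

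For the feasibility test in (i) I would enumerate all $2^{\lvert E_t\rvert} = 2^{k} = 2^{\lvert V_S\rvert}$ orientations of the $k$ edges of $E_t$ and accept one iff the induced out-degree at every vertex equals $s(v)$, which is checkable in $\bigoh(\lvert V_S\rvert)$ time per orientation (only $\bigoh(k)$ vertices carry edges or walks). Over the $\tmax$ snapshots this yields the claimed $\bigoh(2^{\lvert V_S\rvert}\cdot\lvert V_S\rvert\cdot\tmax)$ bound. Correctness of both directions follows from the structural lemma: any valid eec induces exactly this trajectory together with a valid orientation at each step, and conversely a trajectory that is feasible at every non-empty step can be turned into $k$ walks by threading ``tokens'' through the chosen orientations, a token leaving $v$ at step $t$ being the same walk that arrives at the head of its out-edge, so the local orientations stitch together into $k$ continuous strict walks with start-multiset $V_S$ and end-multiset $V_E$.

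The main obstacle is the structural lemma together with the uniqueness of the successor multiset, since everything else is bookkeeping; in particular one must argue that it suffices to track the \emph{multiset} of endpoints (the walks are unlabelled, so no start/end pairing must be maintained, matching the multiset formulation already used for directed graphs in \Cref{thm:DNS_TWD_fixed_terminals}) and that the per-step orientation is the only freedom, which uniqueness of $s'$ then removes. The one genuinely non-trivial local step is verifying that a valid orientation exists, and I expect this to be where care is needed: the simple degree condition $s(v)\le \lvert N_t(v)\rvert$ is necessary but not sufficient, as parallel edges among occupied vertices can obstruct an orientation, so the existence check must really inspect the edge structure of the snapshot, which the $2^{\lvert V_S\rvert}$ enumeration does exhaustively.
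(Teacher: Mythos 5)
Your proposal is correct and takes essentially the same approach as the paper's proof: the same deterministic multiset dynamic program with the unique successor formula $s'(v)=\tdegree_t(v)-s(v)$, the same observation that strictness plus $\lvert E_t\rvert=k$ forces every walk to traverse exactly one edge per non-empty step, the same feasibility check by enumerating all $2^{k}$ orientations of each snapshot, and the same final comparison against $V_E$. The only differences are presentational: you spell out the token-threading converse and note explicitly that the degree condition alone is insufficient, both of which the paper leaves implicit.
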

    \iflong\begin{proof}
        %fixed terminals -- at each we start one walk. Now for
        Let \gcal be a temporal graph with exactly 0 or $k$ edges at each time step, and let $V_S$, $V_E$ be multisets of vertices in \gcal denoting the start- and end-terminals, respectively. Since the walks of the exact edge-cover are  strict, each walk must cover exactly one edge in each time step with $k$ edges.

        We denote by $\walks(v,t)$ the number of walks at vertex $v$ after the edges at time $t$ have been traversed. Initially, $\walks(v,0)$ is set to the number of occurrences of $v$ in $V_S$, \ie the number of walks starting in $v$.
        We show that the next configuration is unique.

        Assume the algorithm processes $k$ edges at time step $t$. 
        Observe that if a walk eec with the given terminals exists, all walks must leave their current vertices; otherwise, they cannot cover any edge. Since the number of walks equals the number of edges to be covered, it follows that $\tdegree_t(v)\geq\walks(v,t-1)$, and since any edge not taken by a walk from $v$ is taken towards $v$, we have $\walks(v,t)=\tdegree_t(v)-\walks(v,t-1)$.
        This computation is performed for every vertex $v$, yielding a unique configuration for $t$.

        Now, we check whether such a configuration is actually possible. As there are $k$ edges, there are exactly $2^k$ ways to orient the edges. Each such orientation requires a unique configuration of walks in order to be traversable (one walk at the source of each directed edge in the orientation). If one of the $2^k$ orientations fits the configuration encoded by $\walks(v,t-1)$ for all $v$, then the $k$ edges at time step $t$ can be covered by the walks starting in $V_S$ and the configuration encoded by $\walks(v,t)=\tdegree_t(v)-\walks(v,t-1)$ is valid.
        Otherwise, if there is no such orientation, the $k$ edges at time step $t$ cannot be covered by the walks starting in $V_S$ and there exists no walk eec with the given terminals. 

        After processing all temporal edges, we check whether $\walks(v,\tmax)$ equals the number of occurrences of $v$ in $V_E$, \ie the number of walks ending at $v$.
        If so, the entries in $\walks(v,t)$ encode a strict walk eec; otherwise, no strict walk eec exists for the given terminals.
    \end{proof}\fi
    For temporal graphs with an arbitrary number of edges per time step, the complexity of  \TWEECnoK{S-} remains an open question.

\section{Conclusion}
\label{sec:discussion}
% \textcolor{red}{This needs to be polished. Content and style.}
%%%%%%%%%%%%%%%%%%%%%%%%%%%%%%%%%%%%%%%%%%%%%%%%%%%%%%%%%%%%%%%%%
    We introduce exact edge-covers (eecs) on temporal graphs with three journey variants: paths, trails and walks, aiming to capture the subclass of temporal graphs that model real-world transit networks.
    We provide a comprehensive analysis of the computational complexity of the eec problem and observe fundamental differences between covering with paths/trails and covering with walks.
    %For walks which are computationally easy on directed strict, but hard for undirected and directed non-strict, we provided XP algorithms and identified that choosing the start points of the walks seems to be one of the main factors for the computational hardness, as we are able to give polynomial time algorithms for non-strict walks in general directed graphs and both weak variants on undirected graphs with distinct adjacent time labels.
    
    Several open problems remain, motivated by both theory and practical considerations.
    % There are multiple open problems that are motivated from theory or practice. \textcolor{red}{path eec is poly on dags, bidirected trees}
    On the theoretical side, it would be interesting to explore whether walk eecs can be approximated and to resolve the open case for strict walks on undirected graphs with fixed terminals.
    %=======================
    %Investigating FPT algorithms for paths and trails on different parameters are also an interesting future question.
    Additionally, for paths, we observed that eecs are efficiently computable on DAGs and bidirected trees.
    So, an open question is to classify the subclasses of graphs for which these generally hard problems are polynomial-time solvable.
    Designing efficient algorithms for paths and trails with parameters arising from real-life applications is also an important direction.
    %Another context to explore is when a temporal graph is provided as an exact edge-cover; this may help solve other computational problems more efficiently.
    %========================

    On the applied side, several variants of the model could be explored.
    Our focus was on the exact covering version of the problem, as it captures scenarios where a connection cannot be used by multiple vehicles at the same time, such as railroads, where only one train can occupy a track at a time.
    Additionally, studying the exact version implicitly minimizes the resources, like fuel, needed to satisfy the connections.
    However, the non-exact case is also interesting to explore. While our negative results still apply, the algorithms would need significant adjustments or entirely new approaches.

    A completely different, and ``complementary'', direction is to consider that the underlying temporal graph is {\em provided} as a collection of ``few'' temporal journeys and study scheduling and other combinatorial problems on this class of graphs.
    This natural parameter which, to the best of our knowledge, has not been studied yet, defines a structured family of temporal graphs. 
    Can this parameter allow for efficient algorithms in temporal graphs? If yes, then this will be a very positive exception in the literature.

\section{Acknowledgments}
This work was supported by 
the German Federal Ministry for Education and Research (BMBF) through the project ``KI Servicezentrum Berlin Brandenburg'' (01IS22092).
%for le discussion: \input{AAAI25/poly-time-Paths_restricted-Underlying}
%%%%%%%%%%%%%%%%%%%%%%%%%%%%%%%%%%%%%%%%%%%%%%%%%%%%%%%%%%%%%%%%%%%%%%%%
%\newpage
\bibliography{aaai25,temporal_graphs}

%%%%%%%%%%%%%%%%%%%%%%%%%%%%%%%%%%%%%%%%%%%%%%%%%%%%%%%%%%%%%%%%%%%%%%%%

\end{document}